\title{Comparison-Based Indexing\\ From First Principles}
\author{Magnus Lie Hetland}
\date{\mbox{}}
\begin{document}
\maketitle

\begin{abstract}
\noindent
    Basic assumptions about comparison-based indexing are laid down and a
    general design space is derived from these. An index structure spanning
    this design space (the \emph{sprawl}) is described, along with an
    associated family of partitioning predicates, or regions (the
    \emph{ambits}), as well as algorithms for search and, to some extent,
    construction. The \emph{sprawl of ambits} forms a unification and
    generalization of current indexing methods, and a jumping-off point for
    future designs.
\end{abstract}

\section{Introduction}
To speed up search operations, it is common to preprocess data by constructing
an index structure.
In many cases, access to the data is limited to some form of comparison, such
as orders or similarity measures.
The index then represents possible incremental explorations, with comparisons
performed along the way.
Similarity search has been a particularly fertile ground for such structures,
with a great variety appearing over the last few decades.
These indices are based on a range of disparate ideas, with few overarching
organizing principles, possibly leading to missed opportunities for
intermediate, hybrid or variant structures.
My aim is to map out a design space that springs from a handful of basic
assumptions, covering all existing methods, as well as countless new
variations, combinations and extensions.
To describe this design space, I introduce the \emph{sprawl}, a fully general
yet fully implementable comparison-based index in the form of a region-labeled
hyperdigraph, along with a new family of regions, the \emph{ambits}, a common
generalization of currently used comparison-based regions, and of weighted
polyellipses, polytopes and metaballs (see \cref{fig:toc}).%
\endnote{The etymology of the new terms is straightforward. In existing usage,
\emph{sprawl} can mean
\begin{inl}
\item an aggregation or
network of regions, as in an \emph{urban sprawl}, or
\item to spread in a rambling and irregular way,
\end{inl}
while \emph{ambit} is another word for \emph{extent}, \emph{reach}, or
\emph{sphere of influence}.}
This work opens up two avenues of further exploration in the search for new
and better methods: Either one may accept my explicitly stated assumptions and
examines the resulting design space, or one may explicitly depart from them,
and thus have a jumping-off point for wholly new designs.
{
\def\xyscale{1.7}%
\def\D{.5}%
\def\outline{(-\D - .1, -\D - .2) rectangle +(3*\D + .2, 2*\D + .4)}

\def\P[#1]{\csuse{P[#1]}}%
\csdef{P[1,1]}{-2.6492}%
\csdef{P[2,1]}{-1.875}%
\csdef{P[1,2]}{-2.6492}%
\csdef{P[2,2]}{0.0}%
\csdef{P[1,3]}{-2.6492}%
\csdef{P[2,3]}{1.875}%
\csdef{P[1,4]}{0.0}%
\csdef{P[2,4]}{0.0}%
\csdef{P[1,5]}{0.0}%
\csdef{P[2,5]}{-1.875}%
\csdef{P[1,6]}{0.0}%
\csdef{P[2,6]}{1.875}%
\csdef{P[1,7]}{2.6492}%
\csdef{P[2,7]}{-2.25}%
\csdef{P[1,8]}{2.6492}%
\csdef{P[2,8]}{-0.75}%
\csdef{P[1,9]}{2.6492}%
\csdef{P[2,9]}{0.75}%
\csdef{P[1,10]}{2.6492}%
\csdef{P[2,10]}{2.25}%
\csdef{n}{10}%
\csdef{figcount}{4}%

\providetoggle{isfocus[1]}\settoggle{isfocus[1]}{true}%
\providetoggle{isfocus[2]}\settoggle{isfocus[2]}{true}%
\providetoggle{isfocus[3]}\settoggle{isfocus[3]}{true}%
\providetoggle{isfocus[4]}\settoggle{isfocus[4]}{true}%
\providetoggle{isfocus[5]}\settoggle{isfocus[5]}{true}%
\providetoggle{isfocus[6]}\settoggle{isfocus[6]}{true}%
\providetoggle{isfocus[7]}\settoggle{isfocus[7]}{false}%
\providetoggle{isfocus[8]}\settoggle{isfocus[8]}{false}%
\providetoggle{isfocus[9]}\settoggle{isfocus[9]}{false}%
\providetoggle{isfocus[10]}\settoggle{isfocus[10]}{false}%

\tikzset{
    larger/.style = {
        scale = 1.3
    },
}
\tikzmath{
    \capwidth = 1.3*\xyscale*(3*\D + .2);
}
\begin{figure}%
\begin{minipage}[t]{\capwidth cm}
\vspace{0pt}%
\caption{A sprawl of ambits. The regions act as
hyperedges---gatekeepers for their targets, defined in part by their sources.%
\endnote{The figure simplifies things, by having a one-to-one
correspondence between regions and hyperedges. In general, each hyperedge
could have multiple regions, though in many cases this would not be
necessary (cf.\ \cref{rem:singleregions}).} In a realistic setting, the
regions would typically contain their target points}%
\label{fig:toc}%
\end{minipage}%
\hfill
\begin{tikzpicture}[baseline=1.3*\xyscale*(\D cm + .2 cm),
    larger, xscale=\xyscale, yscale=\xyscale]

    \clip \outline;
    \fill[use as bounding box, lightshade] \outline;

    \colorlet{outline}{lightshade}
    \tikzset{
        region/.style = {
            draw, thick, fill=darkshade,
            preaction={overlay, draw=lightshade, line width=3pt}
        },
        edge/.style = {
            semithick,
            -Stealthy
        },
        point/.style = {
            circle,
            minimum size=2.5pt,
            inner sep=0pt,
            draw=outline,
            line width=3pt,
            overlay,
            postaction={
                overlay=false,
                thin,
                fill=black,
                draw=black,
            }
        }
    }

    \draw (.5*\D,0) node {\tikz[baseline=0, larger,
        xscale=.2*\xyscale, yscale=.2*\xyscale
        ]{%
    \begin{pgfonlayer}{foreground}
        \foreach \i in {1,...,\n} {
            \iftoggle{isfocus[\i]}{
                \colorlet{outline}{darkshade}
            }{
            }
            \draw ({\P[1,\i]},{\P[2,\i]}) node[point] (P\i) {};
        }
    \end{pgfonlayer}

    \draw[edge] (P5) to (P7);
    \draw[edge] ($(P5)!.5!(P4)$) to (P8);
    \draw[edge] (P4) to (P9);
    \draw[edge] (P6) edge (P9) to (P10);

    \draw[region] plot file {tocfig2.dat};
    \draw[region] plot file {tocfig3.dat};

    \def\arw#1#2{

        \begin{scope}
            \path[invclip]
                plot file {tocfig2.dat}
                plot file {tocfig3.dat}
                ;
            \draw[line width=3pt, lightshade] (#1) to (#2);
        \end{scope}

        \begin{scope}
            \path[clip]
                plot file {tocfig2.dat}
                plot file {tocfig3.dat}
                ;
            \draw[line width=3pt, darkshade] (#1) to (#2);
        \end{scope}

        \draw[edge] (#1) to (#2);

    }

    \arw{$(P1)!.2!(P2)$}{P5}
    \arw{P2}{P4}
    \arw{$(P3)!.2!(P2)$}{P6}

    \draw[region]
        plot file {tocfig1.dat};

    \draw ($(P4)!.5!(P5)$) +(.15pt, .2pt) node[font=\footnotesize] {$R$};

    }};

\end{tikzpicture}%
\hfill
\begin{tikzpicture}[baseline=1.3*\xyscale*(\D cm + .2 cm),
    larger, xscale=\xyscale, yscale=\xyscale]

    \tikzset{
        region/.style = {
            circle, draw,
            semithick,
            fill=darkshade, inner sep=0, minimum size=8pt
        },
        point/.style = {
            point node
        },
        edge/.style = {
            semithick,
            line cap=round, -Stealthy,
            shorten >=1.5pt
        },
        LR/.style={-, shorten <=2pt, shorten >=0pt}
    }

    \path[use as bounding box] \outline;
    \draw[thin]
        (-\D cm - .1 cm + .5\pgflinewidth,
         -\D cm - .2 cm + .5\pgflinewidth)
        rectangle
        +(3*\D cm + .2 cm - \pgflinewidth,
          2*\D cm + .4 cm - \pgflinewidth)
    ;

    \begin{scope}[xshift=.5*\D cm]
    \draw[scale=.2, x=1.5cm, y=1.5cm]

        ( 2,  1.5) node[point] (leaf1) {}
        ( 2,  0.5) node[point] (leaf2) {}
        ( 2, -0.5) node[point] (leaf3) {}
        ( 2, -1.5) node[point] (leaf4) {}

        ( 1,  1.0) node[region] (region1) {}
        ( 1, -0.5) node[region] (region2) {}

        ( 0,  1.0) node[point] (internal1) {}
        ( 0,  0.0) node[point] (internal2) {}
        ( 0, -1.0) node[point] (internal3) {}

        (-1,  0.0) node[region] (region3) {}

        (-2,  1.25) node[point] (root1) {}
        (-2,  0.0) node[point] (root2) {}
        (-2, -1.25) node[point] (root3) {}

        ;
    \end{scope}

    \begin{scope}[every edge/.append style={edge}]

        \draw

            (region1) edge (leaf1) edge (leaf2)
            (region2) edge (leaf2) edge (leaf3) edge (leaf4)

            (internal1) edge[LR] (region1)
            (internal2) edge[LR] (region2)
            (internal3) edge[LR] (region2)

            (region3) edge (internal1) edge (internal2) edge (internal3)

            (root1) edge[LR] (region3)
            (root2) edge[LR] (region3)
            (root3) edge[LR] (region3)

            ;

    \end{scope}

    \draw
    (region2 |- leaf4) +(.5pt, 0)
    node[font=\footnotesize, inner sep=1.5pt] (R) {$R$};
    \draw[thin] (region2) to[in=80, out=-105] (R);

\end{tikzpicture}%
\end{figure}

}
As a side-effect, it also introduces a new query modality for existing
similarity-based search, where one may search for \emph{weighted combinations}
of sample objects, possibly including negative weights for contrasting
objects~(cf.\ \ref{rem:ambitqueries}).

\subsection{Scope, Organization and Contributions}

The scope of this paper is defined formally in \cref{sec:defs}, but
intuitively my focus is on indexing methods that rely on comparing pairs of
objects using some black-box oracle. That is, performing computations directly
on the objects themselves, as in most hashing-based methods, is out of scope.
Also, I only consider exact search, where false positives are permitted, but
false negatives are not. Metric indexing~\citep{Zezula:2006} is a prototypical
example, and one of the main motivations behind this work.\footnote{I include
here also the less common \emph{quasi}-metric indexing \citep[see,
e.g.,][]{Pestov:2006}.}

The basic definitions and axioms are laid out in \cref{sec:defs}.
Following that, there are two main parts to the paper. The first part,
\cref{sec:sprawl}, defines the sprawl data structure (\cref{def:sprawl}), and
shows how it follows from a bare-bones conception of traversal and exploration
(\cref{thm:schemes}). A full search algorithm is presented
(\cref{alg:sprawltraversal}), along with several results on sprawl correctness
(\cref{sec:correctness}). Finally, a stepwise procedure is presented for
transforming an arbitrary comparison-based index structure into a sprawl with
the same behavior (\cref{sec:emulatingidx}).

The sprawl is a directed hypergraph at heart, and its hyperedges are each
labeled with one or more \emph{regions} of some sort. The second part of the
paper, \cref{sec:ambitregiontype}, defines the ambit region type, which may be
used for this purpose. Metaphorically, the ambit extends some ways from the
hyperedge sources, also known as the region \emph{foci}. Focal comparisons are
combined in some manner to become a measure of \emph{remoteness}, which is
capped at a \emph{radius}.

The manner in which comparisons are combined into remoteness is of central
importance; to be useful, this remoteness map must preserve some properties of
the comparisons. The idea is to take axioms and theories originally designed
for the comparison function, and to apply them to some other domain, which
contains the image of the remoteness map. This permits us to reuse existing
region definitions, and to introduce the choice of remoteness map as an
additional parameter, greatly increasing the flexibility of our regions:
Rather than restricting ourselves to the formal language of the original
domain, we broaden our horizon to include the meta-language of
structure-preserving functions.

Linear remoteness in particular is examined in some detail, including
conditions for overlap (\cref{sec:linearmetric}), ways of emulating the
regions of existing methods (\cref{sec:emulatingregions}), optimizing a
region's shape (\cref{sec:optcoeff}) and the choice of its foci
(\cref{sec:focusselection}). I also sketch out ways to work with non-linear
remoteness (\cref{sec:nonl}) as well as highly general families of comparisons
and their properties, along with possible remoteness maps
(\cref{sec:preserve}).

\subsection{Basic Definitions}
\label{sec:defs}

During the exploration of an index structure, we incrementally accumulate
information of some sort. Specifically, in a comparison-based index, this
comes in the form of comparisons between objects. The following definitions
clarify what this means.

\begin{definition}
    \label{def:workload}
    Given a \emph{universe $U$}, a \emph{workload} in $U$ is a family
    $\fml{Q_i}{i}$ of sets $Q\subseteq U$ known as \emph{queries}.
    For a given \emph{ground set} $V\subseteq U$, the workload is
    \emph{atomistic} if, for every query $Q$, the elements $v\in Q\cap V$
    form valid queries $\set{v}$.
\end{definition}

\noindent
This definition of a workload similar to that of \citet{Pestov:2006}, who
define it as consisting of the universe (the \emph{domain}), the ground set
(\emph{dataset} or \emph{instance}) and the queries.

\begin{definition}
    \label{def:compfunc}
    Given a \emph{universe $U$} and a set $K$ of \emph{comparison values}, a
    \emph{comparison function on $U$} is any binary function of the form
    \(
        \delta:U\times U\to K
    \).
\end{definition}
\begin{examples}
    \label{ex:cmpfuncs}
    \begin{paras}
    \item Partial orders may be represented by binary Boolean
        functions, but we might also have a more informative codomain
        $K\deq\set{0,1,2,3}$, with the values signifying, in
        order, that $u$ is equal to, less than, greater than or incomparable
        to $v$, yielding a form of \emph{asymmetric
        distance}~\citep{Mennucci:2013}.
    \item More generally, distances and similarity functions correspond to
        real-valued comparisons, usually non-negative, i.e., $\delta:U\times
        U\to [0,\infty]$. Distances ordinarily reserve zero for identical
        objects. If they also satisfy the triangle inequality, for example,
        they are called \emph{quasimetrics} (see \cref{sec:linearmetric}).
    \item
        \label{ex:cmpfuncs:generalized}
        Generalizing further, we may have an \emph{$L$-fuzzy relation}
        $\delta:U\times U\to L$, for some poset $L$, usually a complete
        lattice~\citep{Goguen:1967}. If, in addition, $L$ admits a binary
        operation, $\delta$ may be one of several forms of \emph{generalized
        metrics}~\citep{Schweizer:1983,Filip:2010,Deza:2013,Conant:2016}.
    \end{paras}
\end{examples}

\begin{adhoc}{On symmetry}
    \label{adhoc:skewsymmetry}
    In metric indexing, symmetry is assumed, and this means that one
    comparison suffices between any two objects. For quasimetrics and more
    general comparisons, we do not necessarily have symmetry, but we may have
    a kind of \emph{generalized skew symmetry}, where for any distinct objects
    $u$ and $v$, there exists a function $\delta(u,v)\mapsto\delta(v,u)$. This
    is true of linear and partial orders, for example. If are able to
    determine object equality by some separate mechanism, this again means
    that we need only compare in one direction.
\end{adhoc}

\begin{adhoc}{On totality}
    The comparison function need not be total; an undefined comparison may
    simply be treated as one that has not been performed during a given
    search, thereby blocking any decisions depending on it.
\end{adhoc}

\noindent
Resolving a query against an index structure involves incrementally exploring
some of its objects, or nodes, with the goal being to examine all relevant
ones and avoiding most of those that are irrelevant.\endnote{In some rare
cases, one may be able to deduce relevance without explicit examination, but
the performance gains this provides are generally quite modest.}
The behavior of such a search procedure could be characterized by listing the
explored objects. The index structure itself might permit several such
behaviors, with any non-determinism being resolved by some heuristic. This
repertoire of behaviors could then be described by a set of sequences, or
\emph{language}.

\begin{definition}
    For a finite, non-empty \emph{ground set} $V$, let $V^\hiast$ be the set
    of finite sequences $\seq{x_1,\dots,x_k}$ of elements $x_i\in V$,
    including the empty sequence $\emptyseq$.
    A \emph{language} is a pair \tup{V,\L} where $\L\subseteq V^\hiast$.
    For any $\alpha\deq\seq{x_1,\dots,x_k}$, let
    $\alpha x=\seq{x_1,\dots,x_k,x}$
    and
    $\tilde{\alpha}=\set{x_1,\dots,x_k}$.
    If $\alpha x\in\L$ then $x$ is a \emph{feasible continuation} of $\alpha$
    in \tup{V,\L}.
    Sequences without feasible continuations are \emph{maximal}.
\end{definition}

\noindent
Regardless of whether we are performing a comparison-based search or some
other, more general exploration, there are certain properties common to any
kind of such traversal---properties I intend to capture in the axioms of
\cref{def:traversal}.
Most fundamentally, nodes are visited one by one, and no node is explored more
than once, as captured by \cref{axiom:nonempty,axiom:simple,axiom:hereditary}.
\Cref{axiom:interval} states that nodes are discovered and eliminated at most
once, and that current availability depends on which nodes have been
traversed so far, regardless of order.

\begin{definition}
    \label{def:traversal}
    A \emph{traversal repertoire} is a language
    \tup{V,\L} satisfying the following \emph{traversal axioms} for all
    $\alpha,\tau,\omega\in V^\hiast$
    and
    $x\in V$, where
    $\tilde\alpha\subseteq\tilde\tau\subseteq\tilde\omega$:
    \begin{axioms}[series=traversal]{T}
    \item
        $\emptyseq\in\L$;
        \hfill (\emph{non-emptiness})%
        \label{axiom:nonempty}
    \item If $\tau x\in\L$ then $x\notin\tilde\tau$;
        \hfill (\emph{simplicity})%
        \label{axiom:simple}
    \item If $\tau x\in\L$ then $\tau\in\L$;
        \hfill (\emph{heredity})%
        \label{axiom:hereditary}
    \item
        If $\tau\in\L$ and
        $\alpha x, \omega x\in\L$
        then
        $\tau x\in\L$.
        \hfill (\emph{interval property})%
        \label{axiom:interval}
    \end{axioms}
    The elements of $V$ are called \emph{nodes} and the sequences in $\L$ are
    called \emph{traversals}.
\end{definition}

\fixrestatablespace
\let\oldlabel=\label%
\begin{restatable}{remark}{greedoidremark}
    \def\F{\mathcal{F}}
    \label{rem:greedoids}
    \let\label=\oldlabel
    \ifinapx
    A \emph{greedoid} is a non-empty, simple, hereditary language \tup{V,\L}
    that satisfies the following \emph{greedoid exchange property}, for all
    $\alpha,\beta\in V^\hiast$~\citep{Korte:1991}:
    \begin{axioms}{G}
    \item
        \label{axiom:greedex}
        If $\alpha,\beta\in\L$ and $|\alpha|>|\beta|$ then $\beta x\in\L$ for
        some $x\in\tilde{\alpha}$.
    \end{axioms}
    \emph{Interval greedoids}
    satisfy the interval property~\axiomref{axiom:interval}, and are thus
    exactly the traversal repertoires that satisfy the greedoid exchange
    property~\axiomref{axiom:greedex}.%
    \endnote{\Citet{Korte:1991} include non-emptiness in their definition of
    heredity~[p.\,5]. They also explicitly include every prefix in the
    heredity axiom, i.e., $\alpha\beta\in\L\Rightarrow\alpha\in\L$, which
    is equivalent to my \cref{axiom:hereditary}. Finally, they define the
    interval property only for families of sets~[p.\,48], but my axiom
    coincides with theirs when applied to greedoids. The proof is quite
    straightforward using Lemma~I.1.1 of
    \etal\citeauthor{Korte:1991}.
    }
    It is not hard to verify that if \cref{axiom:interval} is strengthened by
    removing the upper bound $\omega x$, the resulting traversal repertoires
    satisfy the following \emph{stronger} property:
    \begin{axioms}{U}
    \item
        \label{axiom:antiex}
        If $\alpha,\beta\in\L$ and
        $\tilde\alpha\not\subseteq\tilde\beta$
        then
        $\beta x\in\L$
        for some
        $x\in\tilde\alpha$.
    \end{axioms}
    These are precisely the so-called \emph{upper} interval
    greedoids, or \emph{antimatroids},\endnote{Some definitions of
    antimatroids also require them to be \emph{normal}, i.e., that every
    element in $V$ occur in some sequence in $\L$.} and they represent
    traversals without elimination, such as the graph traversal of
    \cref{ex:graphtrav}~\citep[cf.][\null III.2.11]{Korte:1991}.
    If the lower bound $\alpha x$ is removed instead, we get
    traversals without discovery, where reachable nodes are available
    initially, but may be eliminated. (These might \emph{not} be \emph{lower}
    interval greedoids, i.e., \emph{matroids}.)
    \else
    See \cref{apx:remarks} for notes on kinship to \emph{greedoids}.
    \fi
\end{restatable}

\begin{example}
    \label{ex:graphtrav}
    The possible sequences of nodes visited when traversing a graph
    \tup{V,E}
    from a given start node $s\in V$ form a traversal
    repertoire \tup{V,\L}.
    The contents of the traversal queue, or \emph{fringe}, after traversing
    nodes $\seq{x_1,\dots,x_k}\eqqcolon\alpha$ are exactly the feasible
    continuations of $\alpha$. These are the nodes that have been discovered
    but not yet traversed, and that therefore are \emph{open} or
    \emph{available}. Any choice of priority, such as \emph{depth-first} or
    \emph{breadth-first}, yields one specific traversal.
\end{example}

\noindent
A traversal repertoire describes possible traversals of the index \emph{for
one given query}. A more complete description of the index structure requires
a mapping from queries to traversal repertoires, as described in the following
definition. An important class of such behaviors is characterized by
\cref{axiom:monotone}, which says that narrowing the query should not lead to
additional nodes becoming available.

\begin{definition}
    \label{def:scheme}
    Given a workload $\fml{Q_i}{i\in I}$
    in universe $U$, a \emph{traversal scheme} with
    ground set $V\subseteq U$ is a family
    $\tightfml{\tup{V,\L_i}}{i\in I}$
    of traversal repertoires. It is \emph{monotonic} if it satisfies the
    following axiom for all $i,j\in I$:
    \begin{axioms}[resume*=traversal]{T}
    \item If $Q_i\subseteq Q_j$ then $\L_i\subseteq\L_j$.
        \hfill (\emph{monotonicity})%
        \label{axiom:monotone}
    \end{axioms}
    A
    scheme is \emph{correct} if $Q_i\cap V\subseteq\tilde\tau$ for all $i\in
    I$ and each maximal traversal $\tau\in\L_i$.
\end{definition}

\begin{example}
    A procedure that traverses a graph \tup{V,E} from a fixed starting node,
    searching for any nodes in $Q$, yields a traversal scheme. This includes
    some \emph{single-pair shortest path} algorithms and querying any kind of
    search tree.
\end{example}

\begin{definition}
    \label{def:cmpfeatures}
    Given a comparison function $\delta:U\times U\to K$ and a tuple
    $p=p_1,\dots,p_m$ of \emph{sources} in $U$, the comparisons between $u$
    and $p$ are the \emph{comparison-based features} of $u$. That is, $p$
    defines a \emph{feature map}
    \[
        \textstyle 
        \phi
        :
        u\to K^\hilc{m}\?\times K^\hilc{m}
        :
        u\mapsto\tup{x,y}
        \eqcomma
    \]
    where
    $x_i=\delta(p_i,u)$ and $y_i=\delta(u,p_i)$.
    We call $x$ and $y$ the \emph{forward} and \emph{backward feature vectors}
    of $u$, respectively.
    If $\delta$ is symmetric, we omit the redundant comparisons, and get
    $\phi:U\to K^\hilc{m}:u\mapsto x$, with feature space $K^\hilc{m}$.
\end{definition}

\begin{example}
    If \tup{U,\delta} is a metric space, then $x$ and $y$ are both equal to
    $\Psi(u)$, where $\Psi:U\to[0,\infty]^\hilc{m}$ is the pivot mapping of
    metric indexing, which produces a vector of the distances from $u$ to each
    of the pivots $p_1,\dots, p_m$~\citep{Zezula:2006}.
    In this case, we let $\phi=\Psi$.
\end{example}

\begin{definition}
    \label{def:cmpbasedreg}
    Given a universe $U$ and comparison function $\delta:U\times U\to K$, a
    \emph{comparison-based region} $C[p,S]$ is the preimage of $S$ along the
    comparison-based feature map $\phi:U\to K^\hilc{m}\?\times K^\hilc{m}$
    with sources $p\deq\tup{p_1,\dots,p_m}$, i.e.,
    \[
        C[p,S]\deq\phi^{-1}[S]=\set{u\in U: \phi(u)\in S}\eqcomma
    \]
    for some region $S$ in the feature space $K^\hilc{m}\?\times K^\hilc{m}$
    or (if $\delta$ is symmetric) $K^\hilc{m}$.
\end{definition}

\begin{examples}
    \begin{paras}
    \item
        For an ordering relation, intervals like $(p_1,p_2)$ and $[p_1,p_2)$
        may be specified using $S\deq\set{1001},\set{1001,1101}$.
        Intervals of partial orders work similarly. For example,
        axis-orthogonal hyperrectangles in $\mathds{R}^k$ are intervals
        between two opposing corners.
    \item
        If \tup{U,\delta} is a metric space, $m=1$, and we specify the $S$
        by an upper endpoint or \emph{radius}, we get a metric ball
        $B[p,r]\deq C[p,[0,r]]$.
        If we provide $S\deq[\ell,r]$ with a positive lower endpoint (i.e., an
        \emph{inner radius}) $\ell$, we get a shell region, and with multiple
        sources, an intersection of several such shells, corresponding
        to axis-orthogonal hyperrectangles in feature space.
    \item If \tup{U,\delta} is an asymmetric distance space, there is both a
        \emph{forward} ball $B^+[p,r]$ and a \emph{backward} ball $B^-[p,r]$
        for every center $p$ and radius $r$, where $\delta(p,u)$ and
        $\delta(u,p)$, respectively, falls below $r$ for any member
        $u$~\citep{Mennucci:2013}. These correspond to $S$ having upper limits
        \tup{r,\infty} and \tup{\infty,r}, respectively. \item For an
        $L$-fuzzy relation, the comparison-based regions are crisp level sets
        containing objects that are related to the sources to certain degrees.
    \end{paras}
\end{examples}

\begin{remark}
    In an actual data structure, we would not store the full extension of any
    region, of course, but rather the parameters of some intensional
    definition---a \emph{formal} region. A formal closed ball, for example, is
    simply a pair \tup{p,r} of a center and a radius, corresponding to the
    actual closed ball $B[p,r]$, while a formal closed interval consists
    of two endpoints $p_1$ and $p_2$. In fact, as long as we have the
    information needed to detect overlap with any valid query, we might not
    even need to be able to reconstruct the region (or the query) at all.
\end{remark}

\section{The Sprawl Data Structure}
\label{sec:sprawl}

Conceptually, a sprawl is a network of regions and points. The regions tell us
where we might find points of interest: If we determine that a region might
contain something interesting, it indicates which points to examine next.
Conversely, the points help us stake out the regions: In order to determine
whether a region contains something of interest, we must first examine its
source points, or \emph{foci}.

The traversal of a sprawl has an \emph{and--or} nature. We may traverse a
region once \emph{all} its sources are traversed, or we wouldn't have the
necessary information. This is not true of points, however; we may traverse
them as soon as any regions have pointed us in their direction. This is
exactly how directed hypergraphs are traversed, and so it makes sense to base
the definition of sprawls on them:\footnote{For more on hypergraphs, see
\cref{sec:hypergraphs}.}

\begin{definition}
    \label{def:sprawl}
    A \emph{sprawl} \tup{V,E,P,N} in a universe $U$ consists of:
    \begin{stmts}[widest=ii]
    \item A directed hypergraph \tup{V,E}, with ground set $V\subseteq U$; and
    \item Two edge labelings $P,N:E\to \smash{2^{2^U}}\!$.
    \end{stmts}
    Each label $P(e)$ or $N(e)$ is a family of sets $R\subseteq U$ called the
    \emph{positive} or \emph{negative regions of $e$}, respectively.
    A sprawl is \emph{finite} if each label is finite.
    If a sprawl has explicitly specified root nodes (as in \cref{def:roots}),
    any implicit root edge $e$ is assumed to have $P(e),N(e)=\emptyset$.
\end{definition}

\noindent
Note that the nodes of the sprawl are points from the same universe as where
the regions are located.

Sprawl traversal is discussed in more detail in
\cref{sec:signedtraversal,sec:sprawltrav}, and is defined by
\cref{alg:sprawltraversal}. That definition, however, is spread over several
constructions and subroutines, so for reader convenience, the steps of the
algorithm have been collected here, in
\cref{alg:sprawtraversalcollected}.\endnote{In other words,
    \cref{alg:sprawltraversal} and \cref{alg:sprawtraversalcollected} are
descriptions of the same algorithm.} (For some pointers on implementation, see
\cref{sec:hypergraphs}.) Recall that each edge $e\in E$ has a \emph{target}
$\tgt(e)\in V$ and a set of \emph{sources} $\src(e)\subseteq V$ (cf.\
\cref{def:hypergraph}).

\begin{algorithm}
    \label{alg:sprawtraversalcollected}
    For a query $Q$, the sprawl \seq{V,E,P,N} is \emph{traversed} as follows:

\medskip\noindent
{\def\pseudolineheight{1.15}%
\begin{pseudo}%
    initially, no nodes are \emph{available}, \emph{traversed} or
    \emph{eliminated} \nl
    initially, no edges are \emph{active} or have been \emph{used} \nl
    \kw{repeat} \nl
        \>\label{ln:activation}%
        \emph{unused} edges whose sources have all been \emph{traversed} become
            \emph{active} \nl
        \>\kw{for} each \emph{active} edge $e$ \nl
        \>\>\kw{if} the target of $e$ is neither \emph{traversed} nor
        \emph{eliminated} \nl
        \>\>\>\label{ln:elimcond}%
                \kw{if} $Q$ is disjoint from some region in $N(e)$ \nl
        \>\>\>\>the target of $e$ is \emph{eliminated} and no
                    longer \emph{available} \nl
        \>\>\>\label{ln:disccond}%
                \kw{else} \kw{if} $Q$ intersects all regions in $P(e)$ \nl
        \>\>\>\>the target of $e$ is \emph{discovered} and is now
                        \emph{available} \nl
        \>\>$e$ has now been \emph{used} and is no longer \emph{active} \nl
        \>\label{ln:available}%
        \kw{if} there are nodes \emph{available} \nl
            \>\>\label{ln:heuristic}%
            heuristically select and examine an \emph{available} node $v$ \nl
            \>\>$v$ has now been \emph{traversed} and is no longer
                \emph{available} \nl
    \kw{until} no nodes are \emph{available}
\end{pseudo}}
\end{algorithm}

\noindent
Every node that is examined and found to belong to $Q$ is included in the
search result. The only way for an edge $e$ to be activated in
line~\ref{ln:activation} during the first iteration is for $\src(e)$ to be
empty; its target $\tgt(e)$ then becomes a \emph{starting node} or
\emph{root}. As a shorthand, we may simply declare such roots, in which case
we assume an incoming sourceless, regionless edge for each. That is, for each
root $v$, we assume an edge $e$ with $\tgt(e)=v$ and
$\src(e),P(e),N(e)=\emptyset$, so that when we get to line~\ref{ln:available}
in the first iteration, $v$ is guaranteed to have been discovered (though not
necessarily available, as it may have been eliminated by some other sourceless
edge).

The heuristic choice in line~\ref{ln:heuristic} may be managed by some form of
priority queue. The priority of a node $v$ may be modified whenever we examine
an edge $e$ with $\tgt(e)=v$, i.e., whenever $v$ is disovered or fails to be
eliminated (i.e., whenever line~\ref{ln:disccond} is executed).

\begin{remark}
    \label{rem:unconditional}
    An edge without regions leads to unconditional discovery. To have an
    edge lead to unconditional \emph{elimination} once its sources have
    been traversed, we can let $N(e)=\set{\emptyset}$, which will necessarily
    satisfy the condition in line~\ref{ln:elimcond}.
    If you would like to have exactly one region for each edge (cf.\
    \cref{rem:singleregions}), you could use $P(e)=\set{U}$ for the positive
    case. Although this would \emph{not} lead to discovery if $Q=\emptyset$,
    the correct (empty) search result would still be guaranteed.%
\end{remark}

\begin{adhoc}{On $\boldsymbol{k}$NN}
    \label{rem:knn}
    I'm assuming that we have a fixed, formal description of the set $Q$,
    which we use to determine intersection with the various regions. For
    \emph{relative} kinds of queries, however, such as finding the $k$ nearest
    neighbors to some query point, our description of $Q$ might get more
    precise as the search progresses. For example, we may maintain the $k$
    nearest points found \emph{so far}, as well as a bounding radius for them,
    giving us a ball that is certain to \emph{contain} $Q$, but that isn't a
    precise description of $Q$ itself~\citep[see, e.g.,][]{Hjaltason:2003}.
    While I don't explore such updates further, they could be accommodated
    simply by modifying the description of $Q$ along with the priorities used
    in the heuristic, as discussed in the main text, above.
\end{adhoc}

\begin{examples}
    \label{ex:sprawls}
    Three examples are given in \cref{fig:sprawl}. We say that a sprawl edge
    $e$ is \emph{positive} if $N(e)$ is empty (it can only lead to discovery)
    and \emph{negative} if $P(e)$ contains the empty set (it can only lead to
    elimination).
    The black and red edges in the diagrams are positive and negative,
    respectively. Also, each edge has exactly one (nonempty)
    region.\endnote{That is, $|P(e)|=1$ for positive edges and $|N(e)|=1$ for
    negative edges} As alluded to in the figure, hyperdigraphs may be
    implemented using bipartite digraphs, and a natural optimization is then
    to permit each hyperedge to have multiple targets, as in
    \cref{subfig:23t}, and to combine edges with identical sources and
    regions, as has been done in \cref{subfig:pmt}.\endnote{Some definitions
    of directed hypergraphs do permit multiple targets; the sprawl definition
    could easily have been rewritten to accommodate that, or to use a
    bipartite digraph of nodes and signed regions.}
\end{examples}%
\begin{figure}
\centering%
\def\setup{\tikzmath {
    \r = 1.5;
    \topy = \r;
    \boty = \r * sin(-30);
    \midy = (\topy + \boty) / 2;
    \sep = cos(30)*\r/5;
    \sep = \sep * 1.1;
    }
    \def\pt(##1,##2){
        (##1*\sep,\pty) node[point] (p##2) {}
        ++(0,2.5 * \sgn ex) node[point name] {$v_{##2}$}
    }
    \def\sgn{+1}
    \def\pty{\topy}
}
\def\switchtobottom{
    \def\sgn{-1}
    \def\pty{\boty}
}
\subcaptionbox{\label{subfig:23t}2--3 Tree~\citep{Aho:1974}}{\begin{tikzpicture}%
[sprawl diagram]

\path[use as bounding box] (-1.6,-1.5) rectangle (1.6,2);

\path (0,-1.5);
\path (0,-1.5);
\setup
\draw

    \pt(-2,1)
    \pt(+2,2)

;

\switchtobottom

\draw

    \pt(-5,3)    \pt(-1,5)    \pt(+3,7)
    \pt(-3,4)    \pt(+1,6)    \pt(+5,8)

;

\draw[every edge/.append style={src edge}]

    (-4*\sep,\midy) node[region] (R1) {$e_1$}
    (+0*\sep,\midy) node[region] (R2) {$e_2$}
    (+4*\sep,\midy) node[region] (R3) {$e_3$}

    (p1) edge (R1) edge (R2)
    (p2) edge (R2) edge (R3)
;

\draw[every edge/.append style={tgt edge}]
    (R1) edge (p3) edge (p4)
    (R2) edge (p5) edge (p6)
    (R3) edge (p7) edge (p8)
;

\begin{pgfonlayer}{background}
    \def\r{\rootradius}
    \draw[pin]
    (p1.center) ++(0,\r) arc[radius=\r, start angle=90, end angle=270]
    --
    ($(p2.center) + (0,-\r)$) arc[radius=\r, start angle=-90, end angle=90]
    -- cycle
    (p2.center) ++(\r + 1.2pt,0) to[out=0,in=180] ++(.5,.5) node[right,legend] {root
        nodes
    }
    ;
\end{pgfonlayer}

\end{tikzpicture}}
\hfill
\subcaptionbox{AESA~\citep{Ruiz:1986}}{\begin{tikzpicture}[sprawl diagram]
\path[use as bounding box] (-1.6,-1.5) rectangle (1.6,2);

\path (0,-1.5);
\draw
    \foreach[count=\i] \a/\b in {90/90,210/-90,-30/-90} {
        (\a:1.5)   node[point] (p\i) {}
      ++(\a:2.5ex) node[point name] {$v_\i$}
    }
;
\begin{pgfonlayer}{background}

\draw[pin]
    \foreach[count=\i] \a in {0,120,240} {
        (p\i) ++(210+\a:\rootradius)
        arc[radius=\rootradius, start angle=210+\a, end angle=-25+\a]
    }
    ;

\draw[sred,->]
    \foreach[count=\i] \j/\p in {2/.4,3/.5,1/.6} {
        (p\i) edge[bend right=15, src tgt edge]
              node[pos=\p, region, text=sred] (R\i\j) {$e_{\i\j}$} (p\j)
        (p\j) edge[bend right=15, src tgt edge]
              node[pos=\p, region, text=sred] (R\j\i) {$e_{\j\i}$} (p\i)
    }
;
\end{pgfonlayer}
\end{tikzpicture}}
\hfill\subcaptionbox{\label{subfig:pmt}PM-Tree~\citep{Skopal:2004a}}{%
\begin{tikzpicture}[sprawl
        diagram]

\path[use as bounding box] (-1.6,-1.5) rectangle (1.6,2);

    \setup
    \path (0,-1.5);
    \tikzmath {
        \r = 1.5;
        \bsep = cos(30)*\r/2;
    }
    \draw
        (-5*\sep,\topy) node[point] (p1) {}
            +(0,+2.5ex) node[point name] {$v_1$}
        (+5*\sep,\topy) node[point] (p3) {}
            +(0.04,+2.5ex)
            node[point name] {$v_3$}
        ++(-\bsep,0) node[point] (p2) {}
            +(0,+2.5ex) node[point name] {$v_2$}

        (0,\boty) node[point] (p5) {}
            +(0,-2.5ex) node[point name] {$v_5$}
        ++(-\bsep,0) node[point] (p6) {}
            +(0,-2.5ex) node[point name] {$v_6$}
        ++(2*\bsep,0) node[point] (p4) {}
            +(0,-2.5ex) node[point name] {$v_4$}
    ;
    \def\pct{.5}
    \draw
        (p1) edge[src tgt edge] node[pos=\pct, region] (R1) {$e_1$} (p5)
        (R1) edge[tgt edge] (p4) edge[tgt edge] (p6)
        ;

\begin{pgfonlayer}{background}

\draw
    ($(p1)!.5!(p2)$) coordinate (A)
    ($(p2)!.2!(p4)$) coordinate (B)
    (A |- B) node[legend] (lazy) {lazy}
;

\def\r{\rootradius}
\draw[pin]
    (p1) circle[radius=\r]
    (p2) ++(220:\r)
         arc[radius=\r, start angle=220, end angle=-85]
    (p3) ++(210:\r)
         arc[radius=\r, start angle=210, end angle=-95]
    ;

\draw[pin] ($(p2)!.5!(p6)$) to (lazy);

\draw [every edge/.append style={src tgt edge}]
        (p2)
            edge[sred] (p5)
            edge[sred] (p4)
            edge[sred] (p6)
        (p3)
            edge[sred] (p5)
            edge[sred] (p4)
            edge[sred] (p6)
        ;
    \path ($(p2)!.5!(p3)$) coordinate (pp);
    \draw ($(pp)!\pct!(p5)$) node[region,text=sred] {$e_2\cdots e_7$};
\end{pgfonlayer}
\end{tikzpicture}}
\caption{In a sorted search tree (a), the nodes are linearly ordered, with
interval regions between them, and the topmost nodes available at the outset.
For metric pivot filtering~(b), all points are known initially, and they use
negative regions to eliminate each other. Some structures, such as the
PM-Tree~(c), use both kinds of regions, with the negative regions implemented
\emph{lazily}, used only if we discover their children}%
\label{fig:sprawl}%
\end{figure}%

\vspace{-\bigskipamount}

\begin{remark}
    \label{rem:lazy}
    The \emph{laziness} in \cref{fig:sprawl} is not strictly necessary; it is
    primarily an optimization intended to prevent premature work in
    eliminating points that are never discovered to begin with (cf.\
    \ref{rem:laziness}).
\end{remark}

\noindent
In order to emulate all traversal behaviors (cf.\ \cref{thm:schemes}), simple
hyperdigraph traversal is not sufficient; we also need to be able to
\emph{eliminate} points that are found to be irrelevant, as indicated
in~\cref{alg:sprawtraversalcollected}. So, before we take a more in-depth look
at sprawls proper, I'll make a slight detour and introduce a new form of
traversal that extends to \emph{signed} hyperdigraphs.

\subsection{Signed Directed Hypergraph Traversal}
\label{sec:signedtraversal}

The established traversal algorithm for directed hypergraphs
(\cref{alg:old:hypertraversal}) may be augmented with the idea that
\emph{negative edges inhibit traversal}. That is, while positive edges lead to
discovery, negative edges lead to the \emph{elimination} of their targets, as
in the added third step of the following algorithm.

\begin{algorithm}
    \label{alg:hypertraversal}
    A signed hyperdigraph $\tup{V,E,\sigma}$ is \emph{traversed} by
    \emph{discovering}, \emph{traversing} and \emph{eliminating} nodes, as
    described in the following.
    Nodes are \emph{available} if they have been discovered but not traversed
    or eliminated.
    Edges are \emph{active} once their sources have been traversed.
    The following steps are repeated, starting with the second step in the
    first iteration:
    \begin{steps}
        \item
            One of the available nodes is selected and traversed.
            \label{step:exploration}
        \item The targets of active positive edges are discovered.
            \label{step:discovery}
        \item The targets of active negative edges are eliminated.
            \label{step:elimination}
    \end{steps}
    The steps are repeated as long as there are nodes available.
    When there are several nodes available, the choice is made using a
    \emph{traversal heuristic}.
\end{algorithm}

\begin{definition}
    Any given traversal heuristic will make \cref{alg:hypertraversal} traverse
    one specific sequence of nodes, for any given signed hyperdigraph
    $G$.
    The \emph{traversal repertoire of $G$} is the set of traversals attainable
    by varying the heuristic.
\end{definition}

\noindent
Note that for the traversal of an unsigned hyperdigraph, any traversal order
will end up traversing the same set of nodes, i.e., all those that are
reachable. With elimination, however, this is no longer true---different
traversal heuristics may lead to completely different results.

\begin{restatable}{remark}{forwardchainingremark}
    \ifinapx
{\newcounter{marknha}\setcounter{marknha}{1}%
\newcounter{marknhb}\setcounter{marknhb}{2}%
\def\marknha{(\fnsymbol{marknha})}%
\def\marknhb{(\makebox[\widthof{\fnsymbol{marknha}}]{\fnsymbol{marknhb}})}%
\Cref{alg:hypertraversal} may be seen as an application of the generalized
forward chaining construction of \citet{Marek:1999} to the following
general logic program, built from the hypergraph $\tup{V,E}$:
\begin{IEEEeqnarray*}{lCl"s+s}
\tau_v & \leftarrow & \alpha_v,\neg\mkern1mu\omega_v & for all $v\in V$;
&\llap{\marknha}\\
\alpha_v & \leftarrow & \seq{\tau_u : u \in\src e} & for all
positive $e\in E$ where $v = \tgt e$;\\
\omega_v & \leftarrow & \seq{\tau_u : u \in\src e},\neg\mkern1mu\tau_v & for all
negative $e\in E$ where $v = \tgt e$. &\llap{\marknhb}
\end{IEEEeqnarray*}
The variables $\tau_v$, $\alpha_v$ and $\omega_v$ indicate that $v$ has
been traversed, discovered and eliminated, respectively.\endnote{More
precisely, $\omega_v$ means that $v$ has been eliminated \emph{in time
to prevent traversal}. Without this proviso, some eliminations would
be forbidden by the forward chaining construction, because they would
contradict previous traversals.}
The required well-ordering of the non-monotonic (i.e., non-Horn)
clauses~\citep[\S\,3.1]{Marek:1999} would place all elimination
rules~\marknhb\ first, and then order the traversal rules~\marknha\ by the
traversal heuristic.
In the absence of negative edges, we may simply use a definite logic
program of Horn clauses such as
$\tau_v\leftarrow\seq{\tau_u:u\in\src e}$, where $v=\tgt e$,
resulting in ordinary forward chaining \citep[\S\,6.4.2]{Crama:2011}.
This is equivalent to ordinary digraph or hyperdigraph traversal, under
the hyperpath definition of, e.g.,
\etal\citet{Ausiello:2001}.
} 
    \else
    See \cref{apx:remarks} for notes on \emph{forward chaining}.
    \fi
\end{restatable}

Algorithm~\ref{alg:hypertraversal} is a straightforward generalization of
directed graph traversal, where
\begin{inl}
\item edges may have multiple sources that must be traversed, and
\item negative edges trigger the elimination, rather than discovery, of their
    targets.
\end{inl}
As it turns out, this generalization is all that is needed to exactly capture
the concept of a traversal repertoire, as shown in the following.

\begin{construction}
    \label{constr:reptohyper}
    Given a traversal repertoire \tup{V,\L_1}, let \tup{V,E,\sigma} be a
    signed hyperdigraph where the set $E$ of edges is the smallest set
    satisfying the following axioms, for all $x\in V,\alpha,\omega\in\L$,
    where $\tilde\alpha\subsetneq\tilde\omega$, and $x\notin\tilde\omega$.
    \begin{axioms}{E}
        \item If $\alpha x\in\L$ then there is a positive edge
            \edge{\tilde\alpha}{x} in $E$.
            \label{axiom:posedge}
        \item If $\alpha x\in\L, \omega x\notin\L$ then there is
            a negative edge
            \edge{\tilde\omega}{x} in $E$.
            \label{axiom:negedge}%
    \end{axioms}
    Let \tup{V,\L_2} be the traversal repertoire of \tup{V,E,\sigma}.
\end{construction}

\fixrestatablespace
\begin{restatable}{theorem}{thmdihyper}
    \label{thm:dihyper}
    The following statements are equivalent:
    \begin{stmts}[widest=ii]
    \item \tup{V,\L} is a traversal repertoire.
        \label{stmt:trav}
    \item \tup{V,\L} is the traversal repertoire of some signed hyperdigraph
        \tup{V,E,\sigma}.
        \label{stmt:hyptrav}
    \end{stmts}
    In particular, any traversal repertoire is the traversal repertoire of the
    corresponding signed hyperdigraph produced by
    \cref{constr:reptohyper}.
\end{restatable}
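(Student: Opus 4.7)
The plan is to prove the equivalence by showing (ii)$\Rightarrow$(i) and (i)$\Rightarrow$(ii) separately, with the latter established constructively via \cref{constr:reptohyper}. Throughout, I would identify a ``state'' of the algorithm on $\tup{V,E,\sigma}$ with the set of nodes traversed so far, observing that at state $S$ a node $x$ is available exactly when there exists a positive edge to $x$ with sources contained in $S$ and no negative edge to $x$ with sources contained in $S$.

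For (ii)$\Rightarrow$(i), I would verify the four traversal axioms directly from the semantics of \cref{alg:hypertraversal}. Non-emptiness, simplicity and heredity are essentially built into the algorithm: the empty prefix is the initial state; traversed nodes never become available again; and any prefix of an executed traversal is obtained by early stopping with the same selection choices. For the interval property, given $\alpha x,\omega x\in\L$ and $\tilde\alpha\subseteq\tilde\tau\subseteq\tilde\omega$, I would use $\alpha x\in\L$ to extract a positive edge to $x$ with sources in $\tilde\alpha\subseteq\tilde\tau$, making $x$ discovered at state $\tilde\tau$, and then argue by contradiction that no negative edge to $x$ has sources in $\tilde\tau\subseteq\tilde\omega$, since any such edge would have already eliminated $x$ by state $\tilde\omega$, contradicting $\omega x\in\L$.

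For (i)$\Rightarrow$(ii), I would fix a traversal repertoire $\tup{V,\L_1}$, apply \cref{constr:reptohyper} to obtain $\tup{V,E,\sigma}$ with traversal repertoire $\L_2$, and show $\L_1=\L_2$ by double induction on sequence length. For $\L_1\subseteq\L_2$, given $\beta x\in\L_1$ with $\beta\in\L_2$ by induction, axiom \axiomref{axiom:posedge} supplies a positive edge $\tilde\beta\to x$ that discovers $x$ at state $\tilde\beta$; any negative edge $\tilde\omega\to x$ with $\tilde\omega\subseteq\tilde\beta$ would, by \axiomref{axiom:negedge}, come from some $\alpha\in\L_1$ with $\tilde\alpha\subsetneq\tilde\omega$, $\alpha x\in\L_1$ and $\omega x\notin\L_1$, but applying \axiomref{axiom:interval} to $\omega$ sandwiched between $\alpha$ and $\beta$ would force $\omega x\in\L_1$, a contradiction. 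For $\L_2\subseteq\L_1$, given $\beta x\in\L_2$ with $\beta\in\L_1$, the availability of $x$ at state $\tilde\beta$ supplies a positive edge that by \axiomref{axiom:posedge} arises from some $\alpha\in\L_1$ with $\alpha x\in\L_1$ and $\tilde\alpha\subseteq\tilde\beta$. If $\tilde\alpha=\tilde\beta$, then \axiomref{axiom:interval} with $\omega=\alpha$ immediately gives $\beta x\in\L_1$; otherwise $\tilde\alpha\subsetneq\tilde\beta$ and, assuming $\beta x\notin\L_1$, axiom \axiomref{axiom:negedge} would produce a negative edge $\tilde\beta\to x$ that is active at state $\tilde\beta$, contradicting the availability of $x$.

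The main obstacle, in my view, is aligning the asymmetric strict inclusion demanded by \axiomref{axiom:negedge} with the non-strict inclusions in the interval property, and correctly quantifying over the witnessing sequences $\alpha$ and $\omega$. In particular, the edge case $\tilde\alpha=\tilde\beta$ in the $\L_2\subseteq\L_1$ direction, where two $\L_1$-traversals cover the same node set in different orders, must be handled separately by reusing $\alpha$ as its own upper bound in \axiomref{axiom:interval}, since \axiomref{axiom:negedge} would not apply. The other delicate point is justifying the state-based availability characterization given the interleaving of discovery and elimination in \cref{alg:hypertraversal}; this should fall out of the fact that both sets of events are monotone in the traversed set.
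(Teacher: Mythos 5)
Your proof is correct and takes essentially the same route as the paper's: both directions rest on \cref{constr:reptohyper}, induction on sequence length, the state-based characterization of availability (justified by monotonicity of discovery and elimination in the traversed set), and the same interplay of \axiomref{axiom:posedge}, \axiomref{axiom:negedge} and the interval property~\axiomref{axiom:interval} to rule out spurious negative edges. Your only departures are organizational---splitting the induction step into two inclusions rather than an exhaustive case analysis on the status of $\tau x$ in $\L_1$, and making explicit the $\tilde\alpha=\tilde\beta$ edge case that the paper's strict-inclusion case split leaves implicit (it is handled there, tacitly, by the same appeal to the interval property).
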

\begin{proof}
    See \cref{apx:proofs}.
\end{proof}

\fixrestatablespace
\begin{restatable}{remark}{antimatroidremark}
    \ifinapx
    As mentioned in Rem.~\ref{rem:greedoids},
    if our repertoire features only discovery, it is in fact an
    antimatroid. The hyperdigraph resulting from \cref{constr:reptohyper} will then
    have only positive edges, and the source sets correspond exactly to the
    \emph{alternative precedence constraints} of these targets, as used in one
    of several equivalent definitions of
    antimatroids~\citep[Thm.\,1.4]{Korte:1991}.
    \else
    See \cref{apx:remarks} for notes on kinship to \emph{antimatroids}.
    \fi
\end{restatable}

\noindent
Traversal repertoires are only the first step, however; to capture index
structure behavior in general, we need to implement traversal \emph{schemes},
where each query yields its own traversal repertoire. That is where the sprawl
comes in.

\subsection{Sprawl Traversal}
\label{sec:sprawltrav}

Let's say we have an index structure whose behavior corresponds to a traversal
scheme $\tightfml{\tup{V,\L_i}}{i\in I}$, with workload $\fml{Q_i}{i\in I}$.
For any query $Q_i$, then, we have a repertoire $\tup{V,\L_i}$, which we know
we can implement with a signed dihypergraph. In order to implement the entire
scheme, we would need to map each query to one of several signed
dihypergraphs, all of which have the same ground set $V$. We can consolidate
all these into a single hypergraph, and simply add predicates to the edges
indicating their presence and sign.\endnote{Edges with the same sources and
target may be identified, so the resulting dihypergraph will have a finite
number of edges, even if the workload is infinite.}

At this point we have what we could call a \emph{generalized sprawl}.
Similarly to the generalized search trees of \citet{Hellerstein:1995}, such a
generalized sprawl could have arbitrary predicates attached to its edges.
However, as we shall see (cf.\ \cref{thm:dihyper}), the assumption of
monotonicity~\axiomref{axiom:monotone} lets us restrict our attention to
\emph{regions} and \emph{region intersection}, yielding \cref{def:sprawl}.
As alluded to initially, the regions are used to determine the presence and
sign of each edge in the presence of a query, reducing the search to a signed
dihypergraph traversal.
In the following, I describe this reduction in more detail, and show that it
is indeed sufficient for implementing any monotone traversal scheme.

\begin{construction}
    \label{constr:sprawltohyper}
    Given a sprawl \tup{V,E,P,N} in $U$ and a \emph{query} $Q\subseteq U$, we
    produce a signed hyperdigraph \tup{V,E'\!,\sigma} as follows. Let $E'$ be
    the set of edges $e\in E$ for which either
    \begin{stmts}[widest=ii]
    \item $Q$ intersects every region in $P(e)$ and $N(e)$, in which case
        $\sigma(e)=+1$; or
    \item $Q$ does not intersect every region in $N(e)$, in which case
        $\sigma(e)=-1$.
    \end{stmts}
\end{construction}

\begin{algorithm}
    \label{alg:sprawltraversal}
    For a given query $Q\subseteq U$, a sprawl \tup{V,E,P,N} in $U$
    is traversed by traversing the corresponding signed hyperdigraph
    \tup{V,E'\!,\sigma} of \cref{constr:sprawltohyper}.
\end{algorithm}

\noindent
For a more explicit listing of the steps involved, see
\cref{alg:sprawtraversalcollected}, which describes the same algorithm.

\begin{definition}
    \label{def:sprawlscheme}
    Let $\tup{V,E,P,N}$ be a sprawl and
    $\fml{Q_i}{i\in I}$
    a workload, both in universe $U$.
    The corresponding \emph{traversal scheme of the sprawl} is the collection
    $\tightfml{\tup{V,\L_i}}{i\in I}$,
    where each language \tup{V,\L_i} is the
    repertoire of traversals attainable by \cref{alg:sprawltraversal} for the
    query $Q_i$, i.e., the repertoire of the signed hyperdigraph of
    \cref{constr:sprawltohyper}.
    The sprawl is \emph{correct} for this workload if its scheme is
    correct.
\end{definition}

\fixrestatablespace
\begin{restatable}{theorem}{thmschemes}
    \label{thm:schemes}
    For a workload $\fml{Q_i}{i\in I}$ in $U$, these statements are
    equivalent:
    \begin{stmts}[widest=ii]
    \item
        $\tightfml{\tup{V,\L_i}}{i\in I}$
        is a monotone traversal scheme with ground set $V\subseteq U$.
    \item
        $\tightfml{\tup{V,\L_i}}{i\in I}$
        is the traversal scheme of some sprawl \tup{V,E,P,N} in $U$.
    \end{stmts}
    In particular, any monotone traversal scheme is the traversal scheme of
    the corresponding sprawl produced by \cref{constr:schemetosprawl}, for any
    appropriate universe.
\end{restatable}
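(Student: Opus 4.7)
The plan is to prove the two directions separately, using \cref{thm:dihyper} for the repertoire structure of each $\L_i$.

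For (ii)$\Rightarrow$(i): let \tup{V,E,P,N} be a sprawl. Each $\L_i$ is, by \cref{def:sprawlscheme}, the repertoire of the signed hyperdigraph produced by \cref{constr:sprawltohyper}, hence a traversal repertoire by \cref{thm:dihyper}. To verify monotonicity \axiomref{axiom:monotone} for $Q_i \subseteq Q_j$, I would note that the condition ``$Q$ intersects every region in $P(e)\cup N(e)$'' is monotone in $Q$, while ``$Q$ misses some region in $N(e)$'' is anti-monotone. Induction on $|\alpha|$ then yields $\L_i \subseteq \L_j$: every discovery along an $\L_i$-traversal is still triggered under $Q_j$, and no premature elimination interferes.

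For (i)$\Rightarrow$(ii): apply \cref{constr:reptohyper} to each $\L_i$ to obtain signed hyperdigraphs $\tup{V, E_i, \sigma_i}$, then take the sprawl on the union $E := \bigcup_{i\in I} E_i$ (combining edges that share source set and target). Label each $e \in E$ with regions chosen so that \cref{constr:sprawltohyper} yields the intended sign under every workload query. A natural choice uses complements of workload queries:
\[
    P(e) := \set{U \setminus Q_j : e \notin E_j^+}
    \quad\text{and}\quad
    N(e) := \set{U \setminus Q_k : e \in E_k^-}\eqcomma
\]
where $E_j^+$ and $E_k^-$ denote the positive and negative portions of the signed edge sets. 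A short calculation---using scheme monotonicity to show that $\set{j : e \in E_j^+}$ is upward-closed in query inclusion---then shows $Q_i$ intersects every region in $P(e)\cup N(e)$ iff $e \in E_i^+$.

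The hard part will be verifying that the resulting sprawl's repertoire for each $Q_i$ matches $\L_i$. Positive edges will match exactly, but the family of queries making a given edge negative need not be downward-closed: a negative edge in $E_j$ can disappear from $E_i$ for $Q_i \subseteq Q_j$ whenever its source set becomes unreachable in the smaller $\L_i$. I plan to show such mismatches are harmless---the source of any extraneous negative edge in the sprawl's $Q_i$-hyperdigraph is unreachable under the sprawl's own $Q_i$-traversal, so the edge never fires. This can be established by induction on traversal length, using the interval property \axiomref{axiom:interval} to rule out oscillating availability and the already-verified match of positive edges to transfer unreachability from $\L_i$ to the sprawl.
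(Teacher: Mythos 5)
Your overall architecture coincides with the paper's: direction (ii)$\Rightarrow$(i) is handled the same way (each $\L_i$ is a repertoire by \cref{thm:dihyper}, and monotonicity follows because the discovery condition of \cref{constr:sprawltohyper} is monotone in $Q$ while the elimination condition is anti-monotone), and for (i)$\Rightarrow$(ii) you build essentially the sprawl of \cref{constr:schemetosprawl}, folding its normalization of negative edges (\cref{axiom:extraedges}) into the region labels by letting $N(e)$ range over \emph{all} queries making $e$ negative rather than only the maximal ones. Your positive-edge claim is sound, since $\set{j: e\in E_j^+}$ is upward closed under query inclusion by \axiomref{axiom:monotone}.

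The gap is in your plan for the one delicate step. You propose to show that an extraneous negative edge---one made negative for $Q_i$ by \cref{constr:sprawltohyper} although it is absent from $E_i$---is harmless because its source set is unreachable in the sprawl's $Q_i$-traversal. That premise is false. Take $V=\set{a,b}$ with $\L_j=\set{\emptyseq,\seq{a},\seq{b},\seq{b,a}}$ and $\L_i=\set{\emptyseq,\seq{a}}$ for $Q_i\subseteq Q_j$; both are repertoires and the scheme is monotone. Then $E_j$ contains the negative edge \edge{\set{a}}{b} (via $\emptyseq b\in\L_j$ and $\seq{a}b\notin\L_j$), while $E_i$ does not, because $b$ is never discovered in $\L_i$ and \cref{axiom:negedge} requires some $\alpha x\in\L_i$. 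Yet the source $\set{a}$ is certainly traversed under $Q_i$, so the edge fires. It is harmless for a different reason: monotonicity gives $\L_i\subseteq\L_j$, so $\omega x\notin\L_j$ forces $\omega x\notin\L_i$ for every ordering $\omega$ of the source set, and the interval property \axiomref{axiom:interval} extends this to any later continuation $\tau x$ with $\tilde\omega\subseteq\tilde\tau$; hence the elimination never excludes a sequence of $\L_i$. This is exactly the content of the paper's \cref{obs:samescheme}. Replace your unreachability argument with that monotonicity argument and your proof goes through.
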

\begin{proof}
    See \cref{apx:proofs}.
\end{proof}

\noindent
In other words, \emph{the sprawl exactly maps out the design space of monotone
traversal schemes}. This, however, also includes \emph{incorrect} traversal
schemes; to be of any use, we need to ensure that our sprawl is
\emph{correct}.

\subsection{Ensuring Correctness}
\label{sec:correctness}

Correctness is defined in \crefnosort{def:scheme,def:sprawlscheme}.
Intuitively, we seek to determine whether we, for any query $Q_i$, are
\emph{guaranteed} to traverse any nodes in $V$ that are specified by $Q_i$,
regardless of our choice of priority heuristic. As it turns out, making this
determination is quite hard in the general case.

\begin{problem}
The \textsc{sprawl correctness} problem is defined as follows.

\medskip\noindent
\textbf{Instance:}
A sprawl \tup{V,E,P,N} and a workload $\fml{Q_i}{i}$, both in universe $U$.

\medskip\noindent
\textbf{Question:}
    Is the sprawl correct for this workload?
\end{problem}

\begin{theorem}\label{thm:conpc}
\textsc{sprawl correctness} is strongly coNP-complete.
\end{theorem}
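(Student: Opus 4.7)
The plan is to establish the two inclusions separately: a polynomial-time simulation for coNP membership, and a reduction from \textsc{3sat} to sprawl \emph{incorrectness} for the hardness direction. Strongness comes for free because the reduction introduces no numerical parameters of super-polynomial magnitude.

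For membership in coNP, a certificate of incorrectness is a triple $\tup{i,\tau,v}$ with $\tau\in V^\hiast$, $v\in Q_i\cap V\setminus\tilde\tau$, and $\tau$ a maximal traversal in $\L_i$. The verifier replays \cref{alg:sprawtraversalcollected} along $\tau$, maintaining after each prefix the sets of discovered, eliminated, traversed and used nodes. Each edge activation costs time linear in its source-set size; each discovery or elimination reduces to one region--query intersection test, which we assume is polynomial under the chosen representation; and maximality is verified by a single final activation pass confirming that no node remains available. With at most $|V|$ iterations of polynomial work, this gives a polynomial-time verifier.

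For coNP-hardness I would reduce \textsc{3sat} to sprawl incorrectness. Given $\varphi=C_1\wedge\dots\wedge C_m$ over $x_1,\dots,x_n$, build a sprawl with ground set $V=\{r,v^\star\}\cup\{x_i^+,x_i^-:1\le i\le n\}\cup\{c_j:1\le j\le m\}$ in a universe $U\supseteq V\cup\{q\}$, together with the single-query workload $Q=\{q\}$. Declare $r$ the unique root, and install four kinds of edges. First, positive edges $\{r\}\to x_i^\pm$ and $\{r\}\to v^\star$ with $P(e),N(e)=\emptyset$, which unconditionally discover their targets once $r$ is traversed. Second, negative edges $\{x_i^+\}\to x_i^-$ and $\{x_i^-\}\to x_i^+$ with $N(e)=\{\emptyset\}$ (cf.\ \cref{rem:unconditional}), so that whichever of $x_i^\pm$ is traversed first eliminates the other, thereby committing to a truth value for $x_i$. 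Third, positive edges $\{\ell_{jk}\}\to c_j$ for each literal $\ell_{jk}$ of $C_j$, so that $c_j$ is discovered iff at least one of its literals has been traversed. Finally, a single negative edge $\{c_1,\dots,c_m\}\to v^\star$ with $N(e)=\{\emptyset\}$, which fires exactly when every $c_j$ has been traversed.

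The key claim is that the resulting sprawl is incorrect iff $\varphi$ is satisfiable. For the forward direction, a satisfying assignment $a$ induces a heuristic that traverses $r$, then each $x_i^{a(i)}$ (eliminating duals), then each $c_j$, and finally triggers the negative edge eliminating $v^\star\in Q\cap V$; the resulting traversal is maximal and omits $v^\star$. Conversely, any maximal traversal omitting $v^\star$ must eliminate it, hence must traverse every $c_j$, hence must traverse at least one literal in each clause; the dual-elimination gadget together with maximality forces exactly one of $\{x_i^+,x_i^-\}$ to be traversed for each $i$, yielding a satisfying assignment. The construction has $O(n+m)$ nodes and edges with regions of constant size, giving \emph{strong} coNP-hardness. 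The main obstacle is this converse: one must show that the traversed literals form a \emph{total and consistent} assignment in \emph{every} maximal traversal that misses $v^\star$. Maximality rules out partial assignments, since any uneliminated literal would remain available; and the timing of edge activation in \cref{alg:sprawtraversalcollected}---where a negative edge fires in the iteration immediately after its source is traversed, before the next heuristic choice---rules out conflicting ones. Making this bookkeeping precise over the interleaving of activation, elimination and selection steps is the delicate part of the proof.
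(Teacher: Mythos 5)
Your proof is correct, but it takes a route dual to the paper's. The paper reduces \textsc{dnf tautology} directly to \emph{correctness}: each conjunctive clause $x\land y\land\cdots\land z$ becomes a single positive hyperedge $\set{x,y,\dots,z}\to\varphi$ (exploiting the native and-semantics of hyperedge sources), the literal nodes are all roots with the same mutual unconditional-elimination gadget you use, and the sprawl is correct iff $\varphi$ is a tautology --- no auxiliary root, clause nodes, or target-elimination machinery needed. You instead reduce \textsc{3sat} to \emph{incorrectness}: single-source positive edges realize the disjunction within each CNF clause via the clause nodes $c_j$, and one negative multi-source hyperedge $\set{c_1,\dots,c_m}\to v^\star$ realizes the conjunction over clauses by eliminating the query node. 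Both reductions hinge on the same two observations --- that the dual-literal gadget plus maximality yields exactly one traversed literal per variable, and that (in)correctness quantifies over all heuristics, so a single bad traversal order suffices --- and your analysis of the activation/elimination timing in \cref{alg:sprawtraversalcollected} is the right way to discharge the consistency claim. The paper's encoding is more economical and keeps all non-gadget edges purely positive, which it later exploits in the remark following the theorem (the hardness survives with a single nonempty region per edge); your encoding needs an extra negative hyperedge but is otherwise equally valid and equally strong (no large numbers appear). One slip to fix: you declare the workload to be $Q=\set{q}$ with $q$ apparently outside $V$, yet later argue about $v^\star\in Q\cap V$; as written, $Q\cap V=\emptyset$ and the sprawl would be vacuously correct. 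The query must be $\set{v^\star}$ (or $q$ must be identified with $v^\star$) for the equivalence ``incorrect iff satisfiable'' to hold.
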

\begin{proof}
    Sprawl \emph{in}correctness is polytime-verifiable with an incorrect
    maximal traversal as certificate, so the problem is in coNP.
    Given that it is not a number problem~\citep{Garey:1979},
    we can show strong coNP-hardness by polytime reduction from
    \mbox{\textsc{dnf tautology}}.
    Given a DNF formula $\varphi$, we first construct a root node representing
    a positive and negative literal for each variable in $\varphi$, and add
    unconditionally negative edges (i.e., with $N(e)=\set{\emptyset}$; cf.\
    \cref{rem:unconditional}) between them. These nodes are available at the
    outset, and as long as any additional edges in the sprawl are positive
    (i.e., with $N(e)=\emptyset$), the subset of these nodes traversed
    corresponds to a valid truth assignment for the variables; that is, for
    each variable $x$, we traverse exactly one of the nodes representing $x$
    and $\neg x$.\endnote{For notational convenience, I use $\varphi$ and the
    literals as names also of the nodes in the sprawl.}

    For each clause $x\land y\land \cdots\land z$ in $\varphi$, we add an edge
    \edge{\{x,y,\dots, z\}}{\varphi} with no negative regions and a single
    positive region \set{\varphi}, as shown in \cref{fig:dnftautreduction}.
    Each of these edges will trigger the discovery of $\varphi$ just in case
    its sources are traversed and $\varphi$ is relevant, and thus $\varphi$
    will be traversed exactly when at least one of the clauses of $\varphi$ is
    true for the given truth assignment given by the roots.\endnote{Note that
    it does not matter to the truth value if some of the roots are traversed
    \emph{after} $\varphi$.}
    Finally, we let the only query be $\set{\varphi}$, so that we are not
    required to traverse any specific roots.
    The sprawl will be correct for this query if and only if for every
    possible subset of literal nodes (i.e., without contradictory literals)
    that is traversed, we also traverse $\varphi$. This holds exactly when
    $\varphi$ is true for every possible truth assignment, i.e., when it is a
    tautology.
\end{proof}%
\begin{figure}
\centering
\tikzset{
    clause/.style = {
        font=\small,
        fill=white,
        inner sep=1.5pt,
    }
}
\begin{tikzpicture}[sprawl diagram]
    \def\r{\rootradius}

\draw

    \foreach[count=\i] \name in {x,y,z,u,v,w} {
        (2*\i - 7.5,0) node[point] (\name) {}
        ++(0,-2.5ex) node[point name] {$\name$}
        (2*\i - 6.5,0) node[point] (not-\name) {}
        ++(0,-2.5ex) node[point name] {$\mathllap{\neg}\name$}

        (\name) edge[sred, double tgt edge]
            (not-\name)
    }

    (0,2.5)
    node[point] (truth) {}
    ++(0,2.5ex) node[point name] (truthlabel) {$\varphi$}

    \foreach \i in {1,...,3} {
        (4*\i - 8, 1.25) node[clause] (clause-\i) {$\set{\set{\varphi}}$}
        (clause-\i) edge[tgt edge] (truth)
    }

;

\draw[legend]
    (truthlabel)
        ++(5.5,0)
        node[anchor=east]
        (formula) {
        $(x\land y\land z)
        \lor
        (\neg y\land \neg z\land \neg u\land v)$}
        +(0,-3ex) node[anchor=east]
        {$\mbox{}\lor (\neg u\land \neg v\land \neg w)$}
    ;

\draw (truthlabel) edge[pin] (formula);

\draw
    (truthlabel)
    ++
    (-5.5,0)
        node[point] (A) {}
    +(1,0) node[point] (B) {}
    +(1.5,0) node[point name] {$=$}
    ++(2,0) node[point] (C) {}
    +(1.5,0) node[point] (D) {}

    (A) edge[sred, double tgt edge] (B)

    ;

\begin{pgfonlayer}{background}
\draw
    \foreach \src/\tgt/\name in {D/C/\alpha,C/D/\beta} {
    (\src)
        edge[bend left=22,sred,src tgt edge]
        node[midway, fill=white,
                inner sep=0.2pt]
                {\scriptsize $\set{\emptyset}$}
        (\tgt)
    }
    ;
\end{pgfonlayer}

\draw
    (A)
    ++(-1ex,-6ex)
    node[legend, anchor=west]
    (gamma) {root nodes}
    ;

\begin{pgfonlayer}{background}

\begin{pgfinterruptboundingbox}
\tikzmath{
    \bend = 15;
    \anglea = 90 + \bend;
    \angleb = 270 - \bend;
}
\path
    (x.center) coordinate (a)
    (A.center) ++(0,-6ex) coordinate (b)
    ;

    \draw[pin, shorten <=\r + 1pt, shorten >=\r + 1pt]
        (a) to[out=\anglea,in=\angleb] (b);
\end{pgfinterruptboundingbox}
\end{pgfonlayer}

\begin{pgfonlayer}{foreground}
\draw[pin] (x) +(0,\r)
    arc[start angle=90, end angle=270, radius=\r]
    --
    ++(11,0)
    arc[start angle=-90, end angle=90, radius=\r]
    -- cycle
    ;
\end{pgfonlayer}

\draw[every edge/.append style={reverse src edge, over}]
    (clause-1)
        edge (x)
        edge (y)
        edge (z)
    (clause-2)
        edge (not-y)
        edge (not-z)
        edge (not-u)
        edge (v)
    (clause-3)
        edge (not-u)
        edge (not-v)
        edge (not-w)
    ;

\end{tikzpicture}
\caption{%
A DNF formula represented by a sprawl, with $\set{\varphi}$ as the only
possible query. The traversed bottom-row nodes form an arbitrary truth
assignment, and we are guaranteed to traverse the node representing $\varphi$
iff the formula is a tautology}\label{fig:dnftautreduction}%
\end{figure}
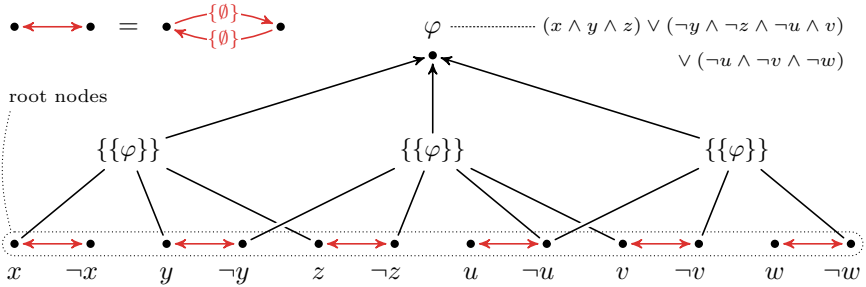%

\vspace{-1.5ex}
\begin{remark}
    Note that the hardness proof requires only purely positive and negative
    edges, and uses at most a single (nonempty) region per edge. The result
    also holds if, say, we must permit queries containing arbitrary subsets of
    $V$, because we may use negative regions containing the targets (rather
    than empty ones) to protect relevant roots from elimination. (Otherwise,
    incorrectness would be trivial to detect.) In this case, the original
    proof still holds for the query where $\varphi$ is the only relevant node,
    and for any other queries, we only risk \emph{additional} roots being
    traversed, and so $\varphi$ is still safe. Thus the sprawl is still
    correct just in case the formula is a tautology.
\end{remark}

\noindent
Despite the previous hardness result, we are, of course, able to determine the
correctness of existing index structures with ease. This is in part because of
their limited structure (many have only positive regions, and are simply
trees, for example), but perhaps even more importantly, they are restricted by
a natural rule: \emph{\label{rule:mappingout}The regions map out parts of the
index structure itself.} For example, a region associated with a subtree
contains all points in that subtree.

This simple interpretation of what a region should be can be generalized to
what I call \emph{responsibility}, where each edge $e$ is \emph{responsible}
for a set $\res(e)$ of nodes, and it is up to the regions of $e$ to fulfill
that responsibility. In the simplest cases, this might merely mean that the
regions must contain the nodes, as for search trees. More generally, however,
we only require that some hyperpath $\Pi_v$ is necessarily traversed for a
query $Q$ whenever $v\in Q$. More precisely (though perhaps a bit
opaquely---see also \cref{obs:localres}):

\begin{definition}
    \label{def:responsible}
    For a given workload, a sprawl \tup{V,E,P,N} is \emph{responsible} if, for
    every node $v\in V$ and every query $Q$ with $v\in Q$, there is a
    hyperpath $\Pi_v$ in \tup{V,E} from $\emptyset$ to $v$, with node set
    $V'\subseteq V$, such that for all edges $e,e'\in E$ and all regions $R\in
    P(e), R'\in N(e')$, the following axioms hold.
    \begin{axioms}{R}
    \item If $e\in\tilde\Pi_v$ then $Q$ and $R$ intersect;
        \hfill (\emph{discovery})%
        \label{axiom:posintersect}
    \item If $\tgt(e')\in V'$ then $Q$ and $R'$ intersect.
        \hfill (\emph{non-elimination})%
        \label{axiom:negintersect}
    \end{axioms}
    The smallest binary relation $\res\subseteq E\times V$ where
    $\tilde\Pi_v\subseteq \res^{-1}(v)$ for every path $\Pi_v$ as described
    above, is called a \emph{responsibility assignment}, under which any edge
    $e\in E$ is \emph{responsible for} each node $v\in\res(e)$.
    As a shorthand, we also assign responsibilities to any \emph{node} $v$, as
    follows:
    \begin{equation}
        \label{eq:noderes}
        \res(v) = \set{v}
        \enskip
        \cup
        \enskip
        \bigcup_{\mathclap{e\mkern1mu:\mkern1mu v\mkern.55mu\in\mkern.7mu\src(e)}}
            \enskip\res(e)
    \end{equation}
    That is, nodes are responsible for themselves, and share the
    responsibilities of outgoing edges. Thus the nodes responsible for $v$ are
    exactly those in some $\Pi_v$.
\end{definition}
\begin{remark}
    An irresponsible sprawl has no responsibility assignments.
\end{remark}
\begin{remark}
    \label{rem:larger}
    Replacing a region with a superset preserves responsibility.
\end{remark}

\noindent
Informally, existing index structures are easily verified to be responsible.
This follows naturally from the interpretation of regions as a partitioning or
coarsening of the data set \citep[cf.][]{Hellerstein:1995,Chavez:2001}, and
thus containing the points for which they are responsible. This might not be
the only reason, however: Most current index structures are based purely on
discovery, and for \emph{irresponsible} sprawls, correctness relies crucially
on \emph{elimination}:

\begin{observation}
    \label{obs:correctresp}
For any given sprawl and workload, the following holds:
\begin{stmts}[widest=ii]
\item If the sprawl is responsible, then it is correct;
\item If the sprawl is correct, then \cref{axiom:posintersect} holds.
\end{stmts}
Thus, a correct sprawl with no negative regions is responsible.
\end{observation}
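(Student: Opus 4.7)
The plan is to handle the three claims in turn, with (i) being the main work.

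For part (i), fix a query $Q_i$, a maximal traversal $\tau\in\L_i$, and an arbitrary $v\in Q_i\cap V$; the goal is to show $v\in\tilde\tau$. Responsibility hands us a hyperpath $\Pi_v$ in $\tup{V,E}$ from $\emptyset$ to $v$ with node set $V'\subseteq V$ satisfying axioms~\ref{axiom:posintersect} and~\ref{axiom:negintersect}. The key first observation is that \cref{axiom:negintersect}, together with \cref{constr:sprawltohyper}, forbids any edge $e'$ with $\tgt(e')\in V'$ from being assigned sign $-1$ in $E'$: $Q$ intersects every region in $N(e')$, so $e'$ is never a negative edge of the signed hyperdigraph. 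Consequently, no node of $V'$ can ever be eliminated during the traversal of $\tup{V,E'\!,\sigma}$ (cf.\ \cref{alg:sprawtraversalcollected}). Next, walking the edges $e_1,\dots,e_k$ of $\Pi_v$ (with $\tgt(e_k)=v$) in hyperpath order, I would show by induction that each target lies in $\tilde\tau$: axioms~\ref{axiom:posintersect} and~\ref{axiom:negintersect} applied to $e_j$ itself make $e_j$ positive in $E'$; the induction hypothesis gives $\src(e_j)\subseteq\tilde\tau$, so $e_j$ activates and discovers $\tgt(e_j)$; and since $\tgt(e_j)\in V'$ cannot be eliminated, maximality of $\tau$ forces $\tgt(e_j)\in\tilde\tau$. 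At step $k$ this yields $v\in\tilde\tau$.

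For part (ii), fix $v\in Q_i\cap V$ and any specific maximal $\tau\in\L_i$. Correctness gives $v\in\tilde\tau$, so $v$ was discovered prior to being traversed by some active edge of $E'$ with sign $+1$, whose sources were already traversed and therefore, by the same reasoning, already discovered via edges that are positive in $E'$. Tracing this chain of discovery backward is a well-founded recursion on the finite prefix of $\tau$ before $v$ and terminates at sourceless edges, so it produces a bona fide hyperpath $\Pi_v$ from $\emptyset$ to $v$ whose every edge is positive in $E'$. By \cref{constr:sprawltohyper}, being positive in $E'$ requires $Q$ to intersect every region in $P(e)$, which is exactly \cref{axiom:posintersect} for this $\Pi_v$.

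The concluding sentence then falls out: when the sprawl has no negative regions, \cref{axiom:negintersect} is vacuous, and (ii) supplies \cref{axiom:posintersect}; hence the sprawl is responsible.

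The only real obstacle is the bookkeeping: verifying that the back-trace in (ii) genuinely yields a hyperpath rather than a cyclic or ill-founded structure, and carefully matching the two cases of \cref{constr:sprawltohyper} against the responsibility axioms so that R1 and R2 \emph{jointly} guarantee each $e_j$ in (i) is assigned sign $+1$ rather than being excluded from $E'$ or appearing with sign $-1$.
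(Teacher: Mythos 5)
Your proposal is correct and follows essentially the same route as the paper's (much terser) proof: part~(i) by induction along the hyperpath $\Pi_v$, showing each target is discovered but never eliminated, and part~(ii) by tracing the chain of positive discoveries in \cref{constr:sprawltohyper} back to sourceless edges to extract a path satisfying \cref{axiom:posintersect}. You simply make explicit the bookkeeping (that \cref{axiom:negintersect} blocks all negative edges into $V'$, and that the back-trace is well-founded along the prefix order of $\tau$) that the paper leaves to the reader.
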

\begin{proof}
    If the sprawl is responsible, then for every $v\in Q$, it is easily shown
    by induction over $\Pi_v$ that the targets of every edge
    $e\in\tilde\Pi_v$, including $v$ itself, are discovered but not
    eliminated, and so the sprawl is correct.
    Conversely, in a correct sprawl, any node $v\in Q$ is reachable via
    positive edges in the hypergraph from \cref{constr:sprawltohyper}, and so
    there must be a path $\Pi_v$ satisfying \cref{axiom:posintersect}.
\end{proof}
Intuitively, the reason \cref{axiom:negintersect} does not follow directly
from correctness is that incorrect elimination may be prevented even when
\ref{axiom:negintersect} does \emph{not} hold, by blocking the traversal of
one of the sources of an edge with a negative region.
This must be achieved either by eliminating the source or by not discovering
it in the first place, resulting in such Goldbergian contraptions as those
underlying \cref{thm:conpc}.
If there is no way to make such a guarantee (as is the case in, e.g.,
\cref{fig:sprawl}), the sprawl must necessarily be responsible in order to be
correct.

Beyond being responsible, all existing index structures are, to my knowledge,
\emph{acyclic} in the following sense.

\begin{definition}
    A hyperdigraph is \emph{acyclic} if its nodes may be strictly
    ordered so that $u<v$ whenever
    $u\in S$ for some edge \edge{S}{v}~\citep[p.\,195]{Gallo:1993}.
    A sprawl is acyclic if its edges with positive regions form an acyclic
    hyperdigraph.
\end{definition}

This definition is motivated by the fact that \emph{discovering} a node
depends on having traversed certain nodes, while in a responsible sprawl,
\emph{not eliminating} a node does not, so purely negative edges cannot
introduce cyclic dependencies.\endnote{For irresponsible sprawls, this is no
longer the case, as traversing a node may trigger an elimination that in
turn is instrumental in preventing \emph{another} elimination.}

\begin{remark}
    Not only do existing index structures seem to all be acyclic; the vast
    majority are, in fact, tree-structured, disregarding negative edges.
    The em-VP-forest and the D-Index~\citep{Yianilos:1999,Dohnal:2004} are two
    exceptions, both structured as \emph{forests}, the nodes of one tree
    having an extra edge to the root of the next.
\end{remark}

\noindent
We are now poised to set up a simplified, \emph{local} characterization of
responsibility.

\begin{restatable}{observation}{obslocalres}
    \label{obs:localres}
    Given a sprawl \tup{V,E,P,N} and an atomistic workload,
    let $v$ be any node and let $\set{e_1,\dots,e_k}$ be the edges with
    $\tgt(e_i)=v$.
    If $\res$ is a responsibility assignment, then for any $e\deq e_i$ and
    $R\in P(e), R'\mkern-1mu\in N(e)$:
    \begin{axioms}{L}
    \item $\res(e)\subseteq R$;
        \label{it:subr}
    \item $\res(\spcc{e}{v})\subseteq R'$;
        \label{it:subrprime}
    \item $\res(\spcc{e}{v})\subseteq \res(e_1)\cup\cdots\cup\res(e_k)$.
        \label{it:subrunion}
    \end{axioms}
    If the sprawl is \emph{acyclic} and every node is the target of \emph{at
    least one edge}, then \cref{it:subr,it:subrprime,it:subrunion},
    using the shorthand~\eqref{eq:noderes}, are necessary \emph{and
    sufficient} conditions for any relation $\res\subseteq E\times V$ to be a
    responsibility assignment.
\end{restatable}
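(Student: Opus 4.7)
The statement has two halves: a forward (``necessity'') direction that extracts the local conditions \ref{it:subr}--\ref{it:subrunion} from any responsibility assignment, and a converse that uses the extra hypotheses (acyclicity and every node having an incoming edge) to rebuild responsibility from the local conditions. I would handle the forward direction axiom by axiom, always reducing to atomistic singleton queries. Fix $e = e_i$ with $\tgt(e) = v$, a positive region $R \in P(e)$ and a negative region $R' \in N(e)$. For \ref{it:subr}, any $w \in \res(e)$ admits a witness hyperpath $\Pi_w$ through $e$, so axiom~\axiomref{axiom:posintersect} applied to the atomistic query $\{w\}$ forces $w \in R$. For \ref{it:subrprime}, the same atomistic query for a $w$ in the left-hand set---combined with the fact that $v = \tgt(e)$ lies in the node set $V'$ of $\Pi_w$---triggers axiom~\axiomref{axiom:negintersect} and yields $w \in R'$. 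For \ref{it:subrunion}, any hyperpath reaching $v$ must enter $v$ through one of $e_1, \dots, e_k$, so $w$ ends up in $\res(e_j)$ for some~$j$.

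The converse is the main technical content. Given $\res$ satisfying \ref{it:subr}--\ref{it:subrunion}, fix a node $w$ and any query $Q$ containing $w$; I need to exhibit a hyperpath $\Pi_w$ from $\emptyset$ to $w$ witnessing responsibility. I would build $\Pi_w$ by a top-down recursive construction: start at $w$ and, using the ``every node has an incoming edge'' assumption together with \ref{it:subrunion}, select an incoming edge $e$ to $w$ with $w \in \res(e)$; add $e$ and recurse on each source $u \in \src(e)$, which by the shorthand~\eqref{eq:noderes} satisfies $w \in \res(u)$, so \ref{it:subrunion} again supplies an incoming edge to $u$ responsible for $w$. Acyclicity of the positive-edge subdigraph guarantees the recursion terminates at sourceless (root) edges after finitely many steps, yielding a finite hyperpath with edge set $\tilde\Pi_w$ and node set $V'$.

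It then remains to verify the responsibility axioms. Axiom~\axiomref{axiom:posintersect} is immediate from \ref{it:subr}: every $e \in \tilde\Pi_w$ was chosen with $w \in \res(e)$, so $w \in R$ for each $R \in P(e)$ and hence $w \in Q \cap R$. Axiom~\axiomref{axiom:negintersect} uses \ref{it:subrprime} uniformly: for any edge $e'$ (in or out of $\tilde\Pi_w$) with $\tgt(e') \in V'$, the shorthand places $w$ in the responsibility set on the left-hand side of \ref{it:subrprime}, so that condition gives $w \in R'$ for each $R' \in N(e')$, whence $w \in Q \cap R'$. The principal obstacle I expect is precisely the top-down construction: ensuring that an eligible incoming edge exists at every recursive step (which is exactly what \ref{it:subrunion} combined with the ``every node has an incoming edge'' hypothesis delivers) and that the recursion provably terminates (handled by acyclicity). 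A closely related subtlety is that \ref{it:subrprime} must be discharged against \emph{all} incoming edges to nodes in $V'$---not only those I selected---which works because the hypothesis states \ref{it:subrprime} uniformly over $e \in \{e_1, \dots, e_k\}$.
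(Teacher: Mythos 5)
Your proposal is correct and follows essentially the same route as the paper's proof: the forward direction extracts each of \ref{it:subr}--\ref{it:subrunion} from singleton queries and the definition of $\res$ via witness hyperpaths, and the converse builds $\Pi_w$ edge by edge using \ref{it:subrunion} together with the incoming-edge hypothesis, relying on acyclicity for well-foundedness and then discharging \axiomref{axiom:posintersect} from \ref{it:subr} and \axiomref{axiom:negintersect} from \ref{it:subrprime}. The only difference is presentational---you recurse top-down from $w$ to the roots, whereas the paper inducts bottom-up along a topological ordering of the nodes---which does not change the substance of the argument.
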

\begin{proof}
    See \cref{apx:proofs}.
\end{proof}
\noindent
In other words, positive and negative regions contain edge and target
responsibilities, respectively, and edges with a common target jointly
share its responsibilities.

\begin{adhoc}{On false positives}
    \label{rem:falsepos}
    Intuitively, $R\setminus\res(e)$ is a waste of space: A query $Q$
    intersecting $R$ without intersecting $\res(e)$ leads to the superfluous
    discovery of $v\deq\tgt(e)$, even though the traversal would be correct
    without it. We would expect that shrinking $R$ or $Q$ to avoid such a
    false positive would lead to improved performance, in general, but this
    might not be the case! Although discovering $v$ is not necessary for
    correctness, it could lead us to negative regions that eliminate much of
    the data set. Such behavior is akin to that of correct, irresponsible
    sprawls, and although I do not deal with the issue formally, I shall
    generally assume that \emph{reducing the number of false positives is
    desirable}. That is, the reason to discover and not eliminate a node $v$
    is that we are looking for some of the nodes in $\res(v)$. If there are no
    negative regions, this will always be the case. Otherwise, the implication
    is, informally, that the sources of negative regions are readily available
    when needed (and not contingent on some false positive).
\end{adhoc}

\begin{remark}
    \label{rem:singleregions}
    Assuming we are trying to limit false positives (cf.\@
    \ref{rem:falsepos}), for responsible sprawls with atomistic workloads,
    edge labels might as well be single regions. From \cref{obs:localres}, we
    know that $\res(e)\subseteq\cap P(e)$, so any query intersecting each
    $R\in P(e)$ but not $\cap P(e)$ will be disjoint from $\res(e)$, leading
    only to superfluous traversal. Similarly, intersecting every
    $R'\mkern-1mu\in N(e)$ but not $\cap N(e)$ will needlessly protect
    $\tgt(e)$ from elimination.
\end{remark}

\subsection{Emulating Existing Indexes}
\label{sec:emulatingidx}

In many cases, the correspondence between an index structure and a sprawl may
be obvious; otherwise, one can start out by describing the index behavior as a
traversal scheme, and then construct a sprawl from that. A formal construction
is given in \cref{constr:schemetosprawl}; what follows is a less formal (and
less redundant) version.

\begin{steps}
\item For every $t\in V$ and any minimal $S\subseteq V$ whose traversal
    leads to the discovery or elimination of $t$ for some query, add an
    edge \edge[e]{S}{t}.\label{st:emul1}
\item For every edge \edge[e]{S}{t} and any maximal query $Q$ for which
    traversing $S$ does \emph{not} lead to to the discovery of $t$, add
    $U\setminus Q$ to $P(e)$.\label{st:emul2}
\item For every edge \edge[e]{S}{t} and any maximal query $Q$ for which
    traversing $S$ leads to the elimination of $t$, add $U\setminus Q$ to
    $N(e)$.\label{st:emul3}
\end{steps}

\noindent
In the first step, $S$ need not be minimal---the formal construction has no
such requirement. However, any non-minimal sets are entirely
redundant.\footnote{For an explanation of why the steps work, see the proof of
\cref{thm:schemes} in \cref{apx:proofs}.}

Following these steps will generally produce an infinite number of regions.
For the ubiquitous case where the index corresponds to a responsible sprawl
and the workload is atomistic, one might instead maintain their intersection
(cf.~\cref{rem:singleregions}). A more practical approach would be to simplify
the last two steps, as follows, considering only individual nodes as queries,
and using a single region per label:

\begin{steps}[start=2, label=\textit{Step \arabic*\rlap{'}.}]
\item For every edge \edge[e]{S}{t} and any query $\set{v}\subseteq V$ for
    which traversing $S$ leads to the discovery of $t$, add $v$ to the single
    region in $P(e)$.
\item For every edge \edge[e]{S}{t} and any query $\set{v}\subseteq V$ for
    which traversing $S$ does \emph{not} lead to the elimination of $t$, add
    $v$ to the single region in $N(e)$.
\end{steps}

\noindent
Here, the single regions are, most likely, members of some parameterized
family of regions (as discussed in \cref{sec:ambitregiontype}), and adding a
point $v$ means adjusting the region to accommodate $v$ (e.g., increasing a
radius). While this modified procedure \emph{will} preserve correctness, it is
not guaranteed to emulate the original behavior perfectly. It may, however, be
a useful approach in practice, subject to subsequent verification.

\begin{examples}
    \label{ex:emulating}
    Consider how one might arrive at the first two sprawls of
    \cref{ex:sprawls}, starting with the behaviors of the emulated index
    structures.
\begin{paras}
\item In the 2--3 Tree~\citep{Aho:1974}, there is no elimination in
    step~\ref{st:emul1}, and the
    minimal sets that would lead to the discovery of a node
    (step~\ref{st:emul1}) consist of the the two endpoints of the interval
    containing a subtree. This interval becomes the sole (positive) region of
    the edge (step~\ref{st:emul2}). Step~\ref{st:emul3} becomes irrelevant.
\item In AESA~\citep{Ruiz:1986}, there is no discovery in step~\ref{st:emul1}
    (also making step~\ref{st:emul2} irrelevant)---all nodes are roots. Each
    node may be eliminated by any other (alone), leading to a complete,
    directed graph. The regions become metric shells (cf.\
    \cref{sec:emulatingregions}) around the source, containing the target.
\end{paras}
In both cases, we will also have a root edge $e$ targeting each root, with no
sources or regions. These root edges fall out of the procedure, as the minimal
set $S$ whose traversal would lead to the discovery of a root is empty
(step~\ref{st:emul1}), and there are no maximal queries $Q$ that prevent
discovery or cause elimination initially, so steps~\ref{st:emul2}
and~\ref{st:emul3} will not add any regions to these root edges.
\end{examples}

\begin{adhoc}{On laziness}
    \label{rem:laziness}
    Some structures, such as the PM-Tree~\citep{Skopal:2004a}, have points
    that are the sources of a large number of negative edges, whose targets
    \emph{may never be discovered}. In such cases, one may wish to perform
    elimination \emph{lazily}, i.e., at some point \emph{after} the point has
    been discovered. The traversal scheme does not differentiate between these
    approaches, so to emulate such behavior, one would have to introduce extra
    points. However, this would not be true laziness, as we would still have
    to maintain the state of every out-edge when examining a point (see
    p.~\pageref{p:impl}). In an implementation, we would probably wish to have
    a separate kind of lazy or \emph{inverted} edge, which is referenced by
    its target and references its sources, rather than in the other direction.
    Then at the very last moment before traversing the target, we can follow
    the lazy edge in reverse to determine if its sources have been traversed,
    and if its negative regions intersect the query.
\end{adhoc}

\begin{adhoc}{On sublinear discovery}
    \label{rem:sublinear}
    Another optimization is to process multiple positive edges as at the same
    time, determining which of them to activate in sublinear time. For
    example, in B-Trees~\citep{Bayer:1972}, children are generally ordered, so
    one may use bisection to determine which of them are discovered. In
    theory, this bisection may itself be instantiated as further edges, in the
    form of a binary search tree, but in practice simply using binary search
    on an array of edges is much more efficient.
    Spaghettis~\citep{Chavez:1999} use this idea on the comparison values
    (i.e., distances) themselves.
    Going further, the D-Index~\citep{Dohnal:2004} selects among an
    \emph{exponential} number of children using what amounts to a simple
    hashing scheme, where the edges are looked up directly by patterns of
    overlap. Such optimizations are not directly present in the sprawl, of
    course, and they do not affect the traversal behavior, only the efficiency
    of its implementation.
\end{adhoc}

\section{The Ambit Region Type}
\label{sec:ambitregiontype}

Until now, we have assumed very little about the queries $Q$ and regions $R$,
at least explicitly. Implicitly, however, it has been clear all along that
these sets are comparison-based, in the sense of \cref{def:cmpbasedreg}, and
that we must have some descriptions that let us reason about them. The
assumption underlying the ambit construction is that \emph{such descriptions
are scarce}, so we wish to leverage each description to define \emph{multiple
regions}.

Having a single description $\psi$ refer to more than one region is a matter
of \emph{reinterpretation}, i.e., interpreting $\psi$ in a different
mathematical structure $\B$. In isolation, this tactic gives us unlimited
variability, but we cannot simply create regions that are completely detached
from our initial domain of discourse~\A, as they could then no longer be used
as part of our search procedure. The solution is to employ some
\emph{structure-preserving map} taking any given objects or parameters from
$\A$ to $\B$. In general, the only way to ensure that we preserve correctness
is to use an \emph{isomorphism}, but this would buy us nothing, as we would
end up with the exact same range of regions. Instead, we must preserve
\emph{some} properties, while forsaking others. For arbitrary sprawls, this is
not really feasible; however, if we assume our sprawls are \emph{responsible}
(\cref{def:responsible}), we can ensure that \emph{overlap} is preserved,
while \emph{non-overlap} may not be, and search would still be
correct~(cf.~\cref{rem:larger}).

This strategy is discussed in broader terms in \cref{sec:preserve}. For now,
we can make three simplifying assumptions. The first is that our map takes the
form
\begin{equation}
    [\loglike{id},f]\circ \Delta_m:\A\to \B
\end{equation}
for some function $f:K^\hilc{m}\?\to L$, were $\loglike{id}$ is the identity
function $x\mapsto x$ and $\Delta_m$ is the diagonal embedding
$x\mapsto\tup{x,\dots,x}$. The notation $[\loglike{id},f]$ indicates a
coproduct, where $\loglike{id}$ is applied to object tuples and $f$ is applied
to tuples of comparison values, i.e., feature vectors. This kind of mapping
lets us multiply the number of foci our description can accommodate. For
example, if $\psi$ is a unifocal description, this construction gives us an
$m$-focal one.

Though any number of descriptions could be pursued, the second simplifying
assumption is a defining one for the ambit in particular, and the source of
its name, namely that it is defined by its \emph{extent}, in some sense, as
determined by the comparisons (e.g., distances) to its foci.\footnote{Recall
that we use the edge sources as the foci of the edge regions.} A motivation
for this is making the common practice of incremental construction feasible:
Inserting an additional object simply requires increasing this extent, or
\emph{radius}, to cover the object. An ambit, then, is \emph{the preimage of a
ball}, along some structure-preserving map.\footnote{For a visual
presentation of an ambit in $\A$ corresponding to a ball in $\B$, see
\cref{fig:ambitexample}, whose details are explained in
\cref{ex:metricpreservation}.}

\begin{definition}
\label{def:ambit}
An \emph{ambit} of \emph{degree} $m$ in a universe $U$ with symmetric
comparison function $\delta:U\times U\to K$ is a comparison-based region
\begin{equation}
    B[p,r;\?f] \deq C\bigl[p,\set{x:f(x)\leq r}\bigr]\eqcomma
\end{equation}
as defined by
\begin{stmts}[widest=iii]
\item a tuple $p$
    of sources or \emph{foci}
    $p_1,\dots, p_m \in U$;
\item a \emph{radius} $r\in L$; and
\item a \emph{remoteness map}
    $f:K^\hilc{m}\?\to L$, with $L$ partially ordered.
\end{stmts}
The features $x_i$ of $u$ are called its \emph{radients}, and $f(x)$ is its
\emph{remoteness}.\endnote{This meaning of the term ``radient'' is taken from
\citet{Maxwell:1851}.}
If $\delta$ is asymmetric, we distinguish between \emph{forward} and
\emph{backward} ambits $B^+[p,r;\?f]$ and $B^-[p,r;\?f]$, applying $f$ to $x$
and $y$, respectively, where $x,y=\phi(u)$.
\end{definition}

\begin{remarks}
\label{rem:newcmpandmixed}
\begin{paras}
\item
    \label{rem:newcmp}%
    Both $\delta^\hilc{m}$ and $f\circ\delta^\hilc{m}$ are comparison
    functions on $m$-tuples from $U$, with codomains $K^\hilc{m}$ and $L$,
    respectively. Each may retain some of the properties of the original
    comparison function $\delta$. The ambit is thus a preimage of the ball
    $B[p,r]$ in the space \tup{U^\hilc{m},f\circ\delta^\hilc{m}}, as discussed
    in \cref{sec:preserve}.
\item
    It would also be possible to define ambits with mixtures of forward and
    backward comparisons, rather than just the two mentioned here. I do not
    pursue this option further.
\end{paras}
\end{remarks}

\noindent
Figure~\ref{fig:miscambits} shows a handful of sample bifocal ambits in in the
euclidean plane. Each inset shows the corresponding defining region $S$ in the
feature space $K^\hilc{m}$, i.e., the feature vectors $x$ for which $f(x)\leq
r$.

\begin{figure}
\centering
\def\D{.5}%
\def\Ax{1}%
\def\xyscale{1.7}%
\def\insetscale{0.45}%
\subcaptionbox{a ball}{\begin{tikzpicture}[larger, semithick,
        xscale=\xyscale, yscale=\xyscale]
    \footnotesize

\fill[use as bounding box, lightshade]
    (-\D - .1, -\D - .2) rectangle +(3*\D + .2, 2*\D + .4);

\draw[thick, fill=darkshade] (0,0) circle[radius=0.5*\D];

\draw
    (0,0) node[pointnode] (F1) {}
    (\D,0) node[pointnode] (F2) {}
    ;

\draw (0,.55) node (xr) {$x_1\leq r$};
\draw[thin] (0,.125) to[out=60, in=-110] (xr);

\begin{scope}[
    xshift=2*\D cm + .1cm - 2*(\D+.1)*\insetscale cm,
    yshift=-\D cm - .2 cm,
    xscale=\insetscale, yscale=\insetscale]

    \begin{scope}[xshift=.1 cm, yshift=.1 cm]

    \fill[white]
        (-.1,-.2) rectangle (2*\D + .2, 2*\D + .1)
        ;

    \begin{scope}
    \clip
        (2*\D,\D) --
        (\D,0) --
        (0,\D) --
        (\D,2*\D) --
        (2*\D,2*\D) -- cycle
        ;
    \fill[lightshade] (0,0) rectangle (2*\D,2*\D);

    \fill[darkshade] (0.5*\D,0) -- ++(0,1) -- ++(-\D,0) -- (0,0) -- cycle;
    \draw[thick] (0.5*\D,0) -- +(0,1);

    \end{scope}

    \draw[thin,<->]
        (\Ax,0) -- (0,0) -- (0,\Ax);

    \end{scope}

\end{scope}

\end{tikzpicture}}%
\hfill
\subcaptionbox{a hyperboloid}{\begin{tikzpicture}[larger, semithick,
        xscale=\xyscale, yscale=\xyscale]
\footnotesize

\def\outline{(-\D - .1, -\D - .2) rectangle +(3*\D + .2, 2*\D + .4)}

\fill[use as bounding box, lightshade] \outline;

\pgfmathsetmacro{\r}{0.5*\D}   
\pgfmathsetmacro{\c}{0.5*\D}
\pgfmathsetmacro{\a}{0.5*\r}
\pgfmathsetmacro{\b}{sqrt((\c)^2 - (\a)^2)}

\begin{scope} 

    \clip \outline;

    \fill[darkshade]
        (-1,-1) --
        plot[domain=-2:2] ({0.5*\D - \a*cosh(\x)},{\b*sinh(\x)}) --
        (-1,1) -- cycle
        ;

\end{scope} 

\begin{scope}[even odd rule] 
    \clip
        \outline
        ;

    \draw[thick]
        plot[domain=-2:2] ({0.5*\D - \a*cosh(\x)},{\b*sinh(\x)});

\end{scope} 

\draw (\D,.55) node (xm) {$x_1\!-x_2\leq r$};
\draw[thin] (-0.125,.25) to[out=10, in=220] (xm);

\path
    (0,0) node[pointnode] (F1) {}
    (\D,0) node[pointnode] (F2) {}
    ;

\begin{scope}[
    xshift=2*\D cm + .1cm - 2*(\D+.1)*\insetscale cm,
    yshift=-\D cm - .2 cm,
    xscale=\insetscale, yscale=\insetscale]

    \begin{scope}[xshift=.1 cm, yshift=.1 cm]

    \fill[white]
        (-.1,-.2) rectangle (2*\D + .2, 2*\D + .1)
        ;

    \begin{scope}
    \clip
        (2*\D,\D) --
        (\D,0) --
        (0,\D) --
        (\D,2*\D) --
        (2*\D,2*\D) -- cycle
        ;
    \fill[lightshade] (0,0) rectangle (2*\D,2*\D);

    \fill[darkshade] (-\r,0) -- ++(1,1) -- (-1,1) -- (-1,0) -- cycle;
    \begin{scope}
        \draw[thick] (-\r,0) -- ++(1,1);
    \end{scope}

    \end{scope}

    \draw[thin,<->]
        (\Ax,0) -- (0,0) -- (0,\Ax);

    \end{scope}

\end{scope}
\end{tikzpicture}}%
\hfill
\subcaptionbox{a half-space}{\begin{tikzpicture}[larger, semithick,
        xscale=\xyscale, yscale=\xyscale]
    \footnotesize

\fill[use as bounding box, lightshade]
    (-\D - .1,-\D - .2) rectangle +(3*\D + .2, 2*\D + .4);

\fill[darkshade] (-\D - .1, -\D - .2) rectangle (0.5*\D, \D + .2);

\draw
    (0,0) node[pointnode] (F1) {}
    (\D,0) node[pointnode] (F2) {}
    ;

\draw[thick, line cap=butt]
    (0.5*\D,-.7) -- +(0,1.4)
    ;

\draw (-.175,.55) node {$x_1\leq x_2$};

\begin{scope}[
    xshift=2*\D cm + .1cm - 2*(\D+.1)*\insetscale cm,
    yshift=-\D cm - .2 cm,
    xscale=\insetscale, yscale=\insetscale]

    \begin{scope}[xshift=.1 cm, yshift=.1 cm]

    \fill[white]
        (-.1,-.2) rectangle (2*\D + .2, 2*\D + .1)
        ;

    \begin{scope}
    \clip
        (2*\D,\D) --
        (\D,0) --
        (0,\D) --
        (\D,2*\D) --
        (2*\D,2*\D) -- cycle
        ;
    \fill[lightshade] (0,0) rectangle (2*\D,2*\D);

    \fill[darkshade] (0,0) -- (1,1) -- (0,1) -- cycle;
    \draw[thick] (0,0) -- (1,1);

    \end{scope}

    \draw[thin,<->]
        (\Ax,0) -- (0,0) -- (0,\Ax);

    \end{scope}

\end{scope}
\end{tikzpicture}}\\[1ex]
\subcaptionbox{an ellipsoid}{\begin{tikzpicture}[larger, semithick,
        xscale=\xyscale, yscale=\xyscale]
    \footnotesize

\def\R{0.75}

\fill[use as bounding box, lightshade]
    (-\D - .1, -\D - .2) rectangle +(3*\D + .2, 2*\D + .4);

\def\Ellipse{
    (0.5*\D,0)
    ellipse[
        x radius=0.5*\R,
        y radius=sqrt(\R*\R - \D*\D)/2
    ]
}

\fill[darkshade] \Ellipse;

\path
    (0,0) node[pointnode] (F1) {}
    (\D,0) node[pointnode] (F2) {}
    ;

\draw (2*\D + .05,.55) node[left=0] (xxr) {$x_1\!+x_2\leq r$};
\draw[thin] (.25,.15) to[out=40, in=-120] (xxr);

\begin{scope}[
    xshift=2*\D cm + .1cm - 2*(\D+.1)*\insetscale cm,
    yshift=-\D cm - .2 cm,
    xscale=\insetscale, yscale=\insetscale]

    \begin{scope}[xshift=.1 cm, yshift=.1 cm]

    \fill[white]
        (-.1,-.2) rectangle (2*\D + .2, 2*\D + .1)
        ;

    \begin{scope}
    \clip
        (2*\D,\D) --
        (\D,0) --
        (0,\D) --
        (\D,2*\D) --
        (2*\D,2*\D) -- cycle
        ;
    \fill[lightshade] (0,0) rectangle (2*\D,2*\D);

    \fill[darkshade] (0,\R) -- (\R,0) -- (0,0) -- cycle;
    \draw[thick] (0,\R) -- (\R,0);

    \end{scope}

    \draw[thin,<->]
        (\Ax,0) -- (0,0) -- (0,\Ax);

    \end{scope}

\end{scope}

\draw[thick] \Ellipse;

\end{tikzpicture}}%
\hfill
%
%
\subcaptionbox{\label{fig:hamacher}Hamacher product~\citep{Zimmermann:2001}}%
{%
\begin{tikzpicture}[larger, semithick, xscale=\xyscale, yscale=\xyscale]
\footnotesize

\def\outline{(-\D - .1, -\D - .2) rectangle +(3*\D + .2, 2*\D + .4)}

\fill[use as bounding box, lightshade] \outline;

\begin{scope} 

    \clip \outline;

    \draw[thick, fill=darkshade]
        plot file{hamacher.dat} -- cycle;

\end{scope} 

\draw (.5*\D,.55) node
    (cx) {$x_1\!\ast x_2\leq r$};
\draw[thin] (.125,0) to[out=40, in=-120] (cx);

\path
    (0,0) node[pointnode] (F1) {}
    (\D,0) node[pointnode] (F2) {}
    ;

\begin{scope}[
    xshift=2*\D cm + .1cm - 2*(\D+.1)*\insetscale cm,
    yshift=-\D cm - .2 cm,
    xscale=\insetscale, yscale=\insetscale]

    \begin{scope}[xshift=.1 cm, yshift=.1 cm]

    \fill[white]
        (-.1,-.2) rectangle (2*\D + .2, 2*\D + .1)
        ;

    \begin{scope}
    \clip
        (2*\D,\D) --
        (\D,0) --
        (0,\D) --
        (\D,2*\D) --
        (2*\D,2*\D) -- cycle
        ;
    \fill[lightshade] (0,0) rectangle (2*\D,2*\D);

    \fill[darkshade]
        plot file{tlfocusspace.dat} -- (-1,-1) -- cycle;

    \end{scope}

    \begin{scope}
    \clip
        (2*\D,\D) --
        (\D-1,0-1) --
        (0-1,\D-1) --
        (\D,2*\D) --
        (2*\D,2*\D) -- cycle
        ;
        \draw[thick] plot file{tlfocusspace.dat};
    \end{scope}

    \draw[thin,<->]
        (\Ax,0) -- (0,0) -- (0,\Ax);

    \end{scope}

\end{scope}
\end{tikzpicture}}%
\hfill%
\subcaptionbox{Cantor
    function~\citep{Cantor:1884}\label{fig:cantor}}{\begin{tikzpicture}[larger,
        semithick, xscale=\xyscale, yscale=\xyscale]

\def\outline{(-\D - .1, -\D - .2) rectangle +(3*\D + .2, 2*\D + .4)}
\path[use as bounding box] \outline;

\begin{scope}[
    spy using outlines={circle, magnification=4, size=1.5cm, connect spies}
]
\spy on (1.065,.45) in node at (2*\D,0.85*\D)
    ;

\footnotesize

\fill[use as bounding box, lightshade] \outline;

\begin{scope} 

    \clip \outline;

    \draw[very thin, fill=darkshade, line join=miter]
        plot file{cantorlipse.dat} -- cycle;

\end{scope} 


\draw (-\D - .05,.55) node[right=0] (cx) {$\psi(x_1)\!+\psi(x_2)\leq r$};
\draw[thin] (.25,.2) to[out=140, in=-60] (cx);

\path
    (0,0) node[pointnode] (F1) {}
    (\D,0) node[pointnode] (F2) {}
    ;

\begin{scope}[
    xshift=2*\D cm + .1cm - 2*(\D+.1)*\insetscale cm,
    yshift=-\D cm - .2 cm,
    xscale=\insetscale, yscale=\insetscale]

    \begin{scope}[xshift=.1 cm, yshift=.1 cm]

    \fill[white]
        (-.1,-.2) rectangle (2*\D + .2, 2*\D + .1)
        ;

    \begin{scope}
    \clip
        (2*\D,\D) --
        (\D,0) --
        (0,\D) --
        (\D,2*\D) --
        (2*\D,2*\D) -- cycle
        ;
    \fill[lightshade] (0,0) rectangle (2*\D,2*\D);

    \draw[thick, fill=darkshade]
        plot file{clfocusspace.dat}
            -- ++(-1,0) -- (-1,-1) -- (1,-1) -- cycle;

    \end{scope}

    \draw[thin,<->]
        (\Ax,0) -- (0,0) -- (0,\Ax);

    \end{scope}

\end{scope}
\end{scope} 

\draw[thick, line join=miter]
    plot file{cantorlipse.dat} -- cycle;

\end{tikzpicture}}%
\caption{Bifocal ambits with linear (a--d), non-linear (e) and fractal (f)
remoteness usable in metric spaces. All but the Hamacher product preserve the
triangle inequality}%
\label{fig:miscambits}%
\end{figure}
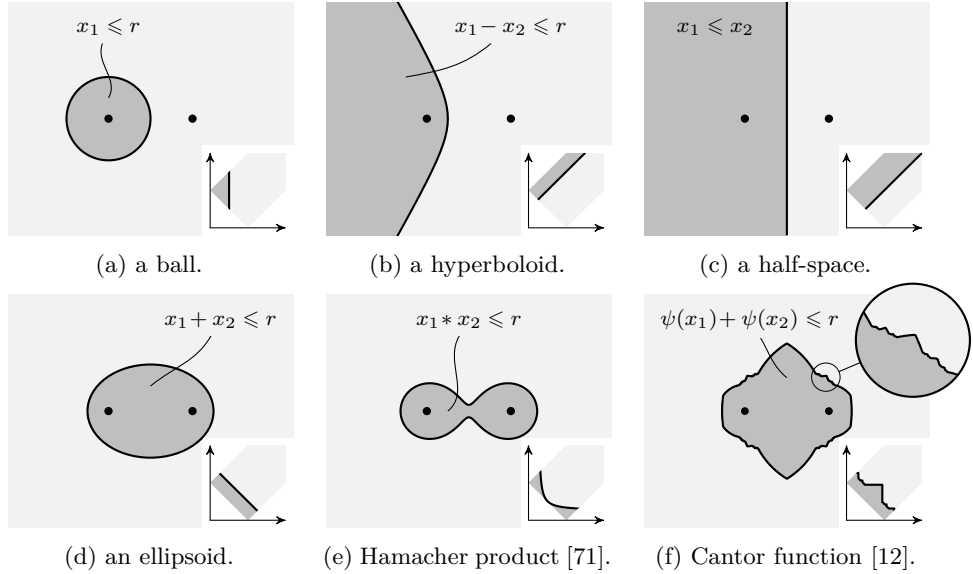

The third simplifying assumption is a temporary one, for
\cref{sec:linearmetric,sec:focusselection,sec:emulatingregions,sec:optcoeff},
namely that the remoteness map $f$ is \emph{linear}. We refer to the
resulting regions as \emph{linear ambits}. For a linear ambit with remoteness
map $f(x)=Ax$, where $A$ is a $d\times m$ matrix, we may simply write
$B[p,r;A]$, as no confusion is likely to arise.
I discuss the more general case in
\cref{sec:preserve,sec:nonl}, but choosing the specific case of linearity
affords us the opportunity to go into more detail. It also seems to cover the
region types used in existing indexing method (cf.\
\cref{sec:emulatingregions}).

For one thing, using linear remoteness means $K^\hilc{m}$ and $L$ are ordered
modules; I'll simply use $\R^\hilc{m}$ and $\R^\hilc{d}$, for some $d$. For
another, structure preservation simply amounts to linearity, so that is what
we have to work with in our axioms for $\A$ and $\B$. In $\B$ we are working
with balls (cf.\ \cref{rem:newcmpandmixed}\ref{rem:newcmp}), and so we might
wish to have overlap between regions in $\A$ to imply overlap between balls in
$\B$, meaning we'd need a linear inequality expressing a necessary condition
for the overlap of two balls $B[p,r]$ and $B[q,s]$ in terms of the comparison,
$\delta$. The obvious choice for such a condition is
\begin{equation}
    \label{eq:simpleballoverlap}
    r + s \geq \delta(p,q)
\end{equation}
which is equivalent to the (oriented) triangle inequality of \emph{metric and
quasimetric spaces}.\endnote{The triangle inequality may be derived from ball
overlap and vice versa.} More generally, it is equivalent to any composition
law~\citep{Lawvere:1973}, including the transitivity of partial orders, which
is used to detect overlap between intervals, for example.\endnote{For
quasminetric spaces, the balls should have opposite direction. The
triangle inequality immediately implies \cref{eq:simpleballoverlap}, but the
converse also holds: For any $u$, let $r=\delta(p,u)$ and $s=\delta(u,q)$, and
triangularity follows.}

\subsection{Linear Ambit Overlap in Quasimetric Space}
\label{sec:linearmetric}

A linear ambit $B[p,r;A]=C[p,S]$ is determined by an arbitrary polyhedron
\[
    S \deq \set{x\in\R^\hilc{m}:Ax\leq r}
\]
in the feature space, so deciding whether two linear ambits $C[p,S_1]$ and
$C[p,S_2]$ intersect is equivalent to deciding whether two arbitrary polyhedra
$S_1,S_2\subseteq\R^\hilc{m}$ intersect. The intersection is also an
arbitrary polyhedron, so we can reduce from the problem of linear program
feasibility. Such a check may be time-consuming, and becomes unwieldy when
generalizing to ambits with differing foci.

Instead, I will examine the uniradial case ($d=1$), and handle the multiradial
case ($d>1$) by performing intersection checks for each defining half-space of
the polyhedron $S$ individually. If some other region $P$ is disjoint from
any of these half-spaces, it cannot intersect $S$, so this approach will incur
no false negatives. In principle, however, it is possible for $P$ to intersect
every defining half-space of $S$, and yet not intersect $S$ itself, producing
a false positive.\endnote{Note that for the existing regions emulated in
\cref{sec:emulatingregions}, overlap checks with ball queries will \emph{not}
add any false positives, as all defining hyperplanes are axis-orthogonal.}

Alternatively, rather than viewing this as an approximation, it can be seen as
merely a compact way of representing multiple uniradial ambits sharing a set
of foci, where we require intersection with each ambit; this is exactly what
is required in sprawl traversal (\cref{sec:sprawltrav}), meaning that each
edge label (\cref{def:sprawl}) may be represented by a single
augmented matrix $(A\mid r)$, and be treated as a single multiradial ambit
using the overlap check described here.

We'll be examining two uniradial ambits $R\deq B[p,r;a]$ and $Q\deq B[q,s;c]$,
where $a$ and $c$ are row vectors of \emph{focal weights}, one of which is
non-negative (i.e., $a\geq 0$ or $c\geq 0$).\endnote{In older text on weighted
polyellipses, the weight is also referred to as the \emph{power} of the
focus.} For arbitrary comparisons, we have no way of relating the two regions;
this is exactly why we are assuming a quasimetric space. Specifically, for
foci $p_i$ and $q_j$, and some other object $u$, the \emph{oriented triangle
inequality} holds:
\begin{equation}
    \label{eq:orienttriang}
    \delta(p_i,u) + \delta(u,q_j) \geq \delta(p_i,q_j)
\end{equation}
We can create what amounts to a weighted sum of such triangles, as in the
following \lcnamecref{thm:ambits}.\footnote{Here $\|\blank\|_1$ is the 1-norm,
which for the non-negative coefficient vector is simply a sum.}

\begin{lemma}
    \label{thm:ambits}
    For a quasimetric space
    \tup{U,\delta}, let
        $x_i = \delta(p_i,u)$,
        $y_j = \delta(u,q_j)$
        and
        $z_{ij} = \delta(p_i,q_j)$,
    where $p_i,q_j,u\in U$, for $i=1\dots m, j=1\dots k$. Then
    \begin{equation}
        \label{eq:ambits}
        \|c\|_1ax + \|a\|_1cy \geq a\wZ\tr{c}\,,
    \end{equation}
    whenever $a\in\R^{1,m}$, $c\in\R^{1,k}$ and either $a$ or $c$ is
    non-negative.
\end{lemma}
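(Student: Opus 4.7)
My plan is to obtain the claim as a non-negative combination of two orientations of the triangle inequality: the displayed oriented form $x_i+y_j\geq z_{ij}$ and its companion $x_i+z_{ij}\geq y_j$ (the triangle inequality applied to the triple $(u,p_i,q_j)$). Concretely, expanding
\[
    \|c\|_1\,ax + \|a\|_1\,cy - a\wZ\tr{c}
    \;=\;
    \sum_{i,j}\bigl(a_i\lvert c_j\rvert\,x_i + \lvert a_i\rvert\,c_j\,y_j - a_i c_j z_{ij}\bigr)
\]
reduces the statement to a termwise non-negativity check. By the symmetric roles of $a$ and $c$ in the hypothesis, it is enough to treat $a\geq 0$, so that $\lvert a_i\rvert = a_i$; the case $c\geq 0$ is handled by a mirror argument using the other companion orientation $y_j+z_{ij}\geq x_i$.

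Next I would split the $j$-sum on the sign of $c_j$. For $c_j\geq 0$ the summand collapses to $a_ic_j(x_i+y_j-z_{ij})\geq 0$ directly from the oriented triangle inequality. For $c_j<0$ it factors as $a_i\lvert c_j\rvert(x_i+z_{ij}-y_j)\geq 0$ via the companion orientation, with the prefactor $a_i\lvert c_j\rvert$ non-negative because $a\geq 0$. Summing the non-negative pieces across all $(i,j)$ then recovers the desired inequality.

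The main obstacle will be the $c_j<0$ block: the displayed oriented triangle inequality alone cannot bound that contribution, so one has to invoke the second orientation, which comes from the triangle inequality around a different triple and is especially clean in the metric (symmetric) case. Once both orientations are secured, the remainder is routine bookkeeping: group the mixed-sign terms with the correct prefactors and collect non-negative contributions.
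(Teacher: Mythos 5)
Your proposal is essentially the paper's own proof: both reduce \eqref{eq:ambits} to the termwise inequality \eqref{eq:ambitscomponent}, take one of $a,c$ non-negative \textsc{wlog}, handle the same-sign terms directly from the oriented triangle inequality \eqref{eq:orienttriang}, and dispose of the mixed-sign terms by invoking a second orientation of the triangle inequality (the paper fixes $c\geq 0$ and splits on the sign of $a_i$; you do the mirror image). The only substantive wrinkle---that the companion orientation as written matches the quasimetric triangle inequality only up to symmetry of $\delta$---is present in the paper's argument as well, so your proof is correct to exactly the same extent.
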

\begin{proof}
We want to show that the inequality
\begin{equation}
    \label{eq:ambitscomponent}
    |c_j|a_ix_i + |a_i|c_jy_j \geq a_ic_jz_{ij}
\end{equation}
holds for any indices $i$ and $j$; summing then produces~\eqref{eq:ambits}.
Assume (\textsc{wlog}) that $c$ is the non-negative
one. If $a_i\geq 0$, \eqref{eq:ambitscomponent} follows directly. If $a_i <
0$, we have
\begin{align*}
    a_ic_jx_i
    &\geq
    a_ic_jy_j
    +
    a_ic_jz_{ij}
    \\
    a_ic_jx_i
    -a_ic_jy_j
    &\geq
    a_ic_jz_{ij}\,,
\end{align*}
which again yields~\eqref{eq:ambitscomponent}.
\end{proof}

\noindent
This yields our desired linear ambit overlap check for quasimetric spaces:

\begin{theorem}
    \label{thm:overlap}
    Let $R\deq B^+[p,r;a]$ and $Q\deq B^-[q,s;c\mkern1mu]$
    be linear ambits for a quasimetric~$\delta$,
    with either $a$ or $c$ non-negative and
    $\|a\|_1,\|c\|_1=1$. If $R$ and $Q$ intersect, then
    \begin{equation}
        \label{eq:overlap}
        r + s \geq a\wZ\tr{c}\eqcomma
    \end{equation}
    where $z_{ij}$ is the distance $\delta(p_i,q_j)$ between focus $p_i$ of
    $R$ and focus $q_j$ of $Q$.
\end{theorem}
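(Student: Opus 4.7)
The plan is to reduce the overlap statement to a direct application of Lemma \ref{thm:ambits}. Suppose the regions overlap, and pick any witness $u \in R \cap Q$. Set $x_i \deq \delta(p_i, u)$ and $y_j \deq \delta(u, q_j)$, so that $x$ is the forward feature vector of $u$ with respect to the foci of $R$ and $y$ is the backward feature vector with respect to the foci of $Q$. Since $R$ is a forward ambit with radius $r$ and coefficients $a$, membership in $R$ gives $ax \leq r$; similarly, membership in $Q$ (a backward ambit) gives $cy \leq s$.

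Next, I invoke Lemma \ref{thm:ambits} with the same $p$, $q$, $u$, $a$, $c$ and $Z = (z_{ij})$ where $z_{ij} = \delta(p_i, q_j)$. The hypotheses match: we are in a quasimetric space, and by assumption either $a$ or $c$ is non-negative. This yields
\[
    \|c\|_1 \, ax + \|a\|_1 \, cy \;\geq\; a\wZ\tr{c}.
\]
Because we have normalized so that $\|a\|_1 = \|c\|_1 = 1$, the left-hand side simplifies to $ax + cy$.

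Combining the two bounds, $ax + cy \leq r + s$ from the ambit memberships and $ax + cy \geq a\wZ\tr{c}$ from the lemma, gives the desired inequality $r + s \geq a\wZ\tr{c}$.

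There is no real obstacle here; the only delicate point is orientation. One must be careful that, because $R$ is a \emph{forward} ambit and $Q$ is a \emph{backward} ambit, the relevant feature vectors $x$ (going out from $p_i$ to $u$) and $y$ (going from $u$ to $q_j$) line up with the roles assigned to them in Lemma \ref{thm:ambits}, so that the triangle inequalities $x_i + y_j \geq z_{ij}$ used inside the lemma's proof are the correctly oriented ones for the quasimetric $\delta$. Once this orientation is verified, the rest is a two-line chain of inequalities.
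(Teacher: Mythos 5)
Your proposal is correct and follows exactly the same route as the paper's own (one-line) proof: pick a witness $u\in R\cap Q$, read off $ax\leq r$ and $cy\leq s$ from the forward and backward ambit memberships, and combine with \cref{thm:ambits} under the normalization $\|a\|_1=\|c\|_1=1$. Your extra remark about checking the orientation of the triangle inequalities is a sensible elaboration of a point the paper leaves implicit, but it is not a different argument.
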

\begin{proof}
    Follows from \cref{thm:ambits}, given that $ax\leq r$ and $cy\leq s$.
\end{proof}

\noindent
The requirement $\|a\|_1,\|c\|_1=1$ is easily enforced by scaling the radii;
it simplifies the check, but is not crucial. An intuition for the more
general overlap check is given in \cref{fig:weightedellipse}, for the case
where $Q$ is a ball. (See also \cref{fig:ambitexample}.)


\begin{figure}%
%
\input{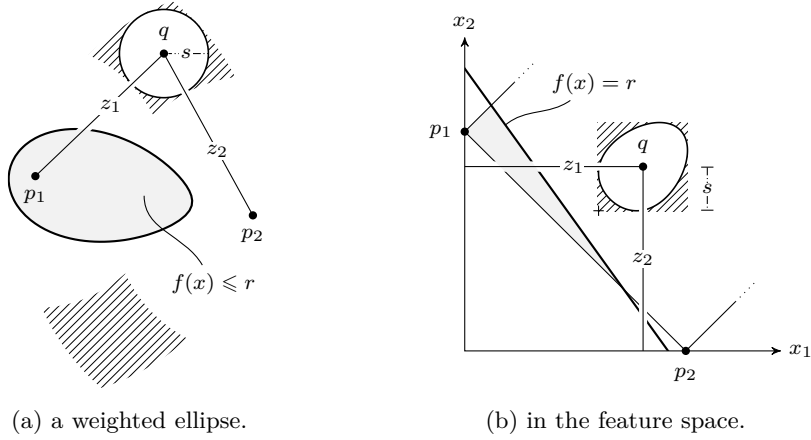}
%
\def\hl{.2pt}
\hfill
\subcaptionbox{a weighted ellipse}{
\begin{tikzpicture}[larger, semithick]

    \footnotesize

    \def\regionpath{plot file{ellipse.dat} -- cycle}

    \def\MyBBox{(-1.38,-2) rectangle (1.2,2)}

    \path[use as bounding box] \MyBBox;

    \fill[lightshade] \regionpath;

    \coordinate (Q) at (\Qx,\Qy);

    \coordinate (u) at (\ux,\uy);

    \coordinate (F1) at (\pax,\pay);
    \coordinate (F2) at (\pbx,\pby);

    \begin{scope}[even odd rule]


    \foreach \k in {1,2} {

        \clip
            let \p1 = ($(Q)-(F\k)$) in
                (F\k) circle ({veclen(\x1,\y1)-\s cm-\hl})
                (F\k) circle ({veclen(\x1,\y1)+\s cm+\hl})
            ;

    }

    \foreach \x in {0,1,...,48} {
        \draw[very thin] (-2.6,-.5) ++(\x*1.2 pt,-\x*1.2 pt) -- +(3,3);
    }

    \end{scope}

    \draw[fill=white] (Q) circle[radius=\s];
    \draw[thin, densely dashdotted] (Q) -- +(\s,0)
        node[midway, fill=white,inner sep=.5pt] {$s$};

    \path[name path=horiz] (-1,.77) -- ++(2.5,0);

    \begin{scope}[even odd rule]
        \clip \MyBBox
        ($(F1)!1.5pt!90:(Q)$)
        --
        ($(Q)!1.5pt!-90:(F1)$)
        --
        ($(Q)!1.5pt!90:(F1)$)
        --
        ($(F1)!1.5pt!-90:(Q)$)
        -- cycle
        ;
        \draw[thick] \regionpath;
    \end{scope}

    \draw[line width=3pt,draw=white]
        (Q) -- ($(Q)!.5!(F1)$)
        ;

    \draw[thin] (F1) -- (Q)
        node[pos=0.575, fill=white, inner sep=1pt] {$z_1\!$}
        ;

    \draw[line width=3pt,draw=white]
        (Q) -- (F2)
        ;

    \draw[thin] (F2) -- (Q)
        node[pos=0.4, fill=white, inner sep=1pt] {$z_2$}
        ;

    \fill
        (Q) node[pointnode] {} node[above=2pt] {$q$}
        ;

    \fill
        (F1) node[pointnode] {}
        (F2) node[pointnode] {}
        ;

    \draw
        (F1) node[below=2pt] {$p_1$}
        (F2) node[below=2pt] {$p_2$}
        ;

    \path
        (F1) -- (F2)
        coordinate[pos=.35] (lab)
        ;

    \draw (.7,-.9) node (H) {$f(x)\leq r$};
    \draw[thin] (0,0) to[out=-90, in=120] (H);

\end{tikzpicture}}%
\hfill
\hfill
\hfill
\subcaptionbox{\label{fig:featurespace}in the feature space}{
\begin{tikzpicture}[larger, semithick]

    \footnotesize

    \path (0,-.4) -- (0,3.6);

    \def\pdist{veclen(\pax-\pbx,\pay-\pby)}
    \coordinate (F1) at ({\pdist},0);
    \coordinate (F2) at (0,{\pdist});

    \path
        (F1) ++(.5,.5) coordinate (R)
        (F2) ++(.5,.5) coordinate (T)
        ;

    \coordinate (Q) at
        (
            {veclen(\Qx-\pax,\Qy-\pay)},
            {veclen(\Qx-\pbx,\Qy-\pby)}
        )
        ;
    \coordinate (u) at
        (
            {veclen(\ux-\pax,\uy-\pay)},
            {veclen(\ux-\pbx,\uy-\pby)}
        )
        ;
    \coordinate (O) at (0,0);

    \coordinate (A) at (Q -| O);
    \coordinate (B) at (Q |- O);

    \def\regionpath{(0,\roverwb) -- (\roverwa,0)}
    \path[name path=locus] \regionpath;
    \path[name path=feasible] (T) -- (F2) -- (F1) -- (R);
    \fill[lightshade, name intersections={of=locus and feasible}]
        (intersection-1) -- (F2) -- (intersection-2) -- cycle
        ;

    \begin{scope}[even odd rule]
        \clip
            (0,0) rectangle (3,3)
            (A) ++(0,-1.5pt) rectangle ++(2,3pt)
            ;

        \draw[thin] (F1) -- (F2);

        \draw[thin] (F2) -- (T);
        \draw[thin, dotted] (T) -- +(.2,.2);

        \draw[thin] (F1) -- (R);
        \draw[thin, dotted] (R) -- +(.2,.2);


        \draw[thick, line cap=rect]
            \regionpath
            coordinate[pos=0.2] (lab)
            ;

    \end{scope}

    \draw (lab)
        ++(.6,.25)
        ++(.3,0)
        node[above,inner sep=2pt] (H)
        {$f(x) = r$};

    \draw[thin] (lab) to[out=40,in=-145] (H);

    \begin{scope}
    \clip
        (Q) ++(-\s cm-\hl,-\s cm-\hl) rectangle +(\hl+2*\s cm,\hl+2*\s cm);

    \foreach \x in {0,1,...,35} {
        \draw[very thin] (0,1.5) ++(\x*1.2 pt,-\x*1.2 pt) -- +(2,2);
    }

    \end{scope}

    \draw[thin, densely dashdotted, |-|] (Q) ++(.2cm + \s cm,0) -- +(0,-\s)
        node[midway, fill=white,inner sep=0.5pt] {$s$};

    \fill
        (F1) node[pointnode] {} node[below] {\strut$p_2$}
        (F2) node[pointnode] {} node[left=1.5pt] {$p_1$}
        ;

    \draw[fill=white] plot file{scaledquery.dat} -- cycle;

    \draw[line width=3pt,draw=white] (Q) -- ++(-.5,0);

    \draw[thin] (A) -- (Q)
        node[pos=0.6, fill=white, inner sep=1pt] {$z_1\!$}
        ;

    \draw[line width=3pt,draw=white] (Q) -- (B);
    \draw[thin] (Q) -- (B)
        node[pos=0.5, fill=white, inner sep=1pt] {$z_2$}
        ;

    \fill
        (Q) node[pointnode] {} node[above=2pt] {$q$}
        ;

    \draw[thin,<->]
        (3.2,0) node[right] {$x_1$} -- (0,0) -- (0,3.2) node[above] {$x_2$};

    \draw[thin] (Q) ++(-\s,-\s)
        +(-1.5pt,0) -- +(1.5pt,0)
        +(0,-1.5pt) -- +(0,1.5pt)
        ;

\end{tikzpicture}}%
\hfill\mbox{}
\caption{For an unknown metric, any point in the hatched box could be
relevant. The shaded region could only be of interest if the lower left corner
$z-s$ falls within it. This happens when $f(z-s)\leq r$ or, equivalently,
$f(z)-f(\Delta(s))\leq r$. These two conditions generalize, though
differently, to some nonlinear cases, as discussed in
\cref{sec:nonl,sec:preserve}%
\label{fig:weightedellipse}}%
\end{figure}

\begin{remark}
    In the euclidean plane, linear ambits correspond to weighted polyellipses,
    which were described already by James Clerk Maxwell~\citep{Maxwell:1851}.
\end{remark}

\subsection{Emulating Existing Regions}
\label{sec:emulatingregions}

The intervals, hyperrectangles and other polytopes of ordered and spatial
indexing are quite easily implemented using linear ambits, by introducing one
pseudo-focus for each dimension, and using projection as the comparison, i.e.,
$\delta(x,i\mkern1mu)=x_i$. Then the linear ambit devolves to a convex
polyhedron in the space being indexed. For non-convex polyhedra, one could
simply use the union of several ambits, each belonging to a separate sprawl
edge. Of more interest, perhaps, is the case where $\delta$ is actually a
quasimetric or metric, using the overlap check of \cref{sec:linearmetric}.

Existing methods for indexing metric spaces are quite varied, and, as
\citeauthor{Lokoc:2014} put it, ``After two decades of research, the
techniques for efficient similarity search in metric spaces have combined
virtually all the available tricks''~\citep{Lokoc:2014}. These ``tricks'' have
been available from the beginning. When \citeauthor{Novak:2011} use
``practically all known principles of metric space partitioning, pruning, and
filtering'' in their method~\citep{Novak:2011}, they are still essentially
referring to the balls, spheres and planes introduced by
\citeauthor{Fukunaga:1975}, \citeauthor{Ruiz:1986}, and
\citeauthor{Uhlmann:1991a}~\citep{Fukunaga:1975,Ruiz:1986,Uhlmann:1991a}. And
in metric indexing in general, the query (for range search) or its
inclusion-wise upper bound (for $k$NN, cf.\ \ref{rem:knn}) is simply a
ball.\endnote{One notable exception is \citet{Uhlmann:1991}, who briefly
discusses half-plane queries.}

These regions are, of course, simple special cases of the linear ambit, all
with coefficients of~$1$ or $-1$. \Cref{tab:existing} provides an overview of
the main region types.
\begin{table}%
\def\tabwidth{.6\textwidth}%
\def\capwidth{.355\textwidth}%
\renewcommand{\arraystretch}{1.3}%
\def\PL{\phantom{+}}%
\makeatletter
\def\dotf{\leavevmode \leaders \hb@xt@ .75em{\hss .\hss }\hfill \kern \z@}
\makeatother
\newcommand{\twocite}[2]{%
    \spcl{\mbox{00}}{\mbox{\tabcite{#1}}},\,%
    \spcl{\mbox{00}}{\mbox{\tabcite{#2}}}%
}
\begin{minipage}[t]{\tabwidth}\mbox{}\\[2pt]
\small%
\begin{tabularx}{\textwidth}{@{}X@{}r@{\hspace{.6cm}}rr@{}}
\toprule
\bf Region type
&
\textbf{Ref.}
& $\textbf{A}\phantom{]}\transpace$
& $\textbf{\textit{r}}\phantom{]}\transpace$ \\
\midrule
Ball \dotf & \twocite{Kalantari:1983}{Uhlmann:1991} & $\pm 1$\phantom{]}\transpace & $\pm r\phantom{]}\transpace$\\
Sphere \dotf & \twocite{Ruiz:1986}{Mico:1994} & $\tran{[1, -1]}$ & $\tran{[r{}\mkern-1mu, -r]}$\\
Shell \dotf & \twocite{Brin:1995}{Skopal:2004a} & $\tran{[1, -1]}$ & $\tran{[r\mathrlap{'}\mkern-1mu, -r]}$\\
Plane \dotf & \twocite{Uhlmann:1991a}{Bugnion:1993} & $[1, -1]\transpace$ & 0\phantom{]}\transpace\\
Ellipse \dotf & \twocite{Uhlmann:1991}{Dohnal:2001} & $[1, \PL 1]\transpace$ & $r$\phantom{]}\transpace\\
Hyperbola \dotf & \twocite{Dohnal:2001}{Lokoc:2010} & $[1, -1]\transpace$ & $r$\phantom{]}\transpace\\
Voronoi cell \dotf & \twocite{Navarro:2001}{Navarro:2002} & $[\spcc{1}{\one},
-\spcc{1}{\Id}]\transpace$ & $0$\phantom{]}\transpace\\
Cut region \dotf & \twocite{Skopal:2004a}{Lokoc:2014} & $\tran{[\spcc{1}{\Id}, -\spcc{1}{\Id}]}$ &
$\vec{r}\phantom{]}\transpace$
\\
\bottomrule
\end{tabularx}%
\end{minipage}%
\hfill
\begin{minipage}[t]{\capwidth}\mbox{}\\[-1.7\baselineskip]
\caption{Some existing region types viewed as ambits. The radii $r$ and $r'$
are non-negative scalars, while $\vec{r}$ is a vector with $m$ positive and
$m$ negative elements; \one\ is an all-ones column vector and \Id\ is the
identity matrix.
Some region types (e.g., balls) are very common, while others (e.g.,
hyperbolas and ellipses) are quite rare. The references are somewhat arbitrary
examples of their use}\label{tab:existing}%
\end{minipage}
\end{table}
The ball is sometimes used as a bisector between inside and outside; the
complemented ball can be represented by negating the coefficient and
radius.\endnote{Note that in some existing methods, the complemented ball is
defined by a strict inequality.} A sphere is essentially the intersection of a
ball and its (closed) complement, and this is reflected in the coefficients
and radii. The technique of using sphere regions is generally known as
\emph{pivot filtering}. A shell simply lets the two radii of a sphere vary
independently. The plane (or \emph{generalized hyperplane}), ellipse and
hyperbola are metric equivalents of the corresponding conics, and the Voronoi
cells (or Dirichlet domains) and cut regions are intersections of plane and
shell regions, respectively.\endnote{As originally described, one of the
components of a cut region is a ball. This corresponds to dropping the last
row of the coefficient matrix.}

It can be instructive to see how the linear ambit overlap
check~\eqref{eq:overlap} reduces to the commonly used conditions in metric
indexing when instantiated with the parameters from \cref{tab:existing}.
These overlap checks assume that $Q$ is a ball, which simplifies the check to
\begin{equation}
    \label{eq:balloverlap}
    r + \|a\|_1s \geq az\,,
\end{equation}
if we don't require normalized coefficients. This check then applies
individually to each facet of the polyhedron (i.e., each row of $\mA$ and
$r$).\endnote{The feature polyhedra of these regions are all axis-orthogonal,
so checking one facet at a time will \emph{not} produce additional false
positives.}
For example, for the three staples ball, sphere and plane, we get the
respective conditions
\begin{align*}
    r + s \geq z
          &&
    \begin{matrix}
    \phantom{-}r + s \geq \phantom{-}z\\
    -r + s \geq -z
    \end{matrix}
    &&
    0 + 2s \geq z_1-z_2\mathrlap{\,,}
\end{align*}
or the perhaps more familiar
\begin{align*}
    s \geq z - r
        &&
    s \geq |z - r|
        &&
    s \geq \frac{z_1 - z_2}{2}\mathrlap{\,,}
\end{align*}
precisely as expected.
The other overlap checks described
by~\mbox{\citet[\S\,7]{Zezula:2006}}, for example, may be derived similarly.
Note, though, that I am focusing on \emph{necessary} conditions for
relevance; bounds providing \emph{sufficient} conditions are a separate
matter, but could be dealt with in a similar fashion.
Some structures use weakened versions of the filtering criteria, where a
pivot-based bound is substituted for
$z$~\citep[e.g.,\kern-1.5pt][]{Zezula:1996,Mico:1996}. The result is still
linear, and is easily emulated by letting the pivots be parents of the region
as well, increasing the number of radii as appropriate.

\begin{adhoc}{Ambits as queries}
    \label{rem:ambitqueries}
    If the regions of existing index structures are simply reinterpreted as
    linear ambits, one is free to query any current such data structure using
    a linear ambit \emph{as the query}. For example, rather than querying an
    M-Tree~\citep{Zezula:1996} or the like using a single object, one could
    use a weighted sum of \emph{multiple} objects, some of which could even
    act as contrasts to be avoided, using negative weights. In these cases,
    one could also use the common $k$NN approach of updating the query radius
    during the search, thus returning the $k$ points with the lowest query
    remoteness.
\end{adhoc}

\subsection{Finding Optimal Coefficients}
\label{sec:optcoeff}

Rather than fixing the ambit coefficients arbitrarily, we might wish to
determine their optimal values, based on the given data.
An obvious goal in setting the coefficients is to minimize the probability of
query overlap, which in turn will reduce the number discoveries and increase
the number of eliminations, improving the running time overall.\footnote{Note,
though, that in theory spurious overlaps could improve performance (cf.\
\ref{rem:falsepos})!}
Assume that the foci and responsibilities are given, but that the number of
radii is unconstrained. If intersection of the feature polyhedron and another
fixed set is what determines overlap, the following
proposition
provides an optimal solution, regardless of any probability distributions
involved.
The constants hidden by the asymptotic notation depend on $m$, which is fixed.

\begin{proposition}
    \label{thm:convexhull}
    Given a set of $m$ foci, an inclusion-wise minimum linear ambit can be
    found with worst-case running time \BigOh{n\log n + n^{\lfloor
    m/2\rfloor}}, as a function of the number of responsibilities, $n$. This
    running time is optimal for metric spaces.
\end{proposition}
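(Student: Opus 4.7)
The plan is to show that finding the inclusion-wise minimum linear ambit with a given set of $m$ foci is essentially equivalent to computing the convex hull of $n$ points in $\mathbb{R}^m$, and then invoke known bounds for that problem.

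First I would observe that with the foci $p_1,\dots,p_m$ fixed, a linear ambit $B[p,r;A]$ is determined by the polyhedron $S = \{x\in\mathbb{R}^m : Ax\leq r\}$ in feature space, and an object $u$ lies in the ambit iff its feature vector $\phi(u) = (\delta(p_1,u),\dots,\delta(p_m,u))$ lies in $S$. Hence an inclusion-wise minimum linear ambit containing the $n$ responsibilities corresponds to an inclusion-wise minimum polyhedron in $\mathbb{R}^m$ containing the $n$ feature vectors. Since every half-space is convex, any such polyhedron must contain the convex hull of these points, and conversely the convex hull is itself a polyhedron. So the minimum ambit is realized by taking $A$ and $r$ to encode the facets of the convex hull of $\{\phi(u_1),\dots,\phi(u_n)\}\subset\mathbb{R}^m$.

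For the upper bound, I would compute the $n$ feature vectors in $O(nm)=O(n)$ time (since $m$ is fixed), then run an optimal convex hull algorithm in fixed dimension $m$, e.g.\ Chazelle's algorithm, which runs in $O(n\log n + n^{\lfloor m/2\rfloor})$ time. This immediately yields the claimed upper bound.

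For optimality in metric spaces, I would reduce from convex hull. Given an instance of convex hull in $\mathbb{R}^m$ (with its known $\Omega(n\log n + n^{\lfloor m/2\rfloor})$ lower bound in an appropriate model), embed the input points into a metric space where the feature map recovers coordinates up to an order-preserving, affine transformation---for instance, take the universe to be $\mathbb{R}^m$ under the $\ell_\infty$ metric (or the euclidean metric after a suitable lift), and choose the foci so that $\delta(p_i,u)$ agrees with the $i$-th coordinate of $u$ (possibly shifted by a large constant to keep distances non-negative and metric). Then the minimum linear ambit directly encodes the convex hull of the input, and any algorithm solving the ambit problem faster would beat the convex hull lower bound. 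The main obstacle is getting this reduction right: one must pick a genuine metric (triangle inequality, non-negativity, symmetry) in which the feature map is an affine bijection onto the input coordinates, so that the facet description of the minimum ambit and of the convex hull agree up to a trivial linear change of coordinates that can be undone in linear time.
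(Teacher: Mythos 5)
Your proposal matches the paper's proof: the upper bound is obtained exactly as you describe (feature vectors followed by Chazelle's optimal convex hull algorithm in fixed dimension $m$), and the lower bound is likewise a linear-time reduction from convex hull in which the feature map recovers the input coordinates up to translation by a large constant. The only difference is that the paper sidesteps the obstacle you flag---realizing the construction as a genuine metric---by not embedding into $\R^m$ at all: it directly defines a finite distance matrix with $\delta(p_i,u_j)=x_{ij}+\omega$, self-distances zero, and all other distances in $[\omega,2\omega]$, so the triangle inequality holds automatically.
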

\begin{proof}
    The task is to determine the coefficients $\mat{A}$ that make the defining
    polyhedron \mbox{\set{x:\mat{A}x\leq r}} an inclusion-wise minimum. This
    minimum is simply the convex hull (in $\R^m$) of the $n$ responsibility
    feature vectors, which can be found with the stated running
    time~\citep{Chazelle:1993}. Each facet of the hull then corresponds to one
    row of $\mat{A}$ and one component of $r$.

    That the running time is optimal can be shown by reduction from the
    general convex hull problem, for which the given running time is
    optimal~\citep[Th.\,3.5]{Chazelle:1993}. That is, for any instance of the
    general convex hull problem, we construct points in a metric space for
    which the optimal linear ambit would provide us with the original convex
    hull. Let $\X\deq\set{x_{ij}}$ be a an $m\times n$-matrix whose column
    vectors are the input points to the convex hull problem. Introduce two
    non-overlapping, arbitrary sets of foci and responsibilities $p_i$ and
    $u\mkern-1mu_j$, and some large value~$\omega$, such as
    $\|\mat{X}\|_\infty+\varepsilon$ ($\varepsilon>0$). Let self-distances be
    zero, let the distance between foci and between responsibilities be
    $\omega$, and let
    \[
        \delta(u\mkern-1mu_j,p_i) = \delta(p_i,u\mkern-1mu_j) = x_{ij} +
        \omega\eqdot
    \]
    Finding an inclusion-wise minimum linear ambit will produce a convex hull
    for $\mat{X}+\omega$, whose vertices may be shifted by $-\omega$ to
    produce the final result. It is easily verified that $\delta$ is a metric:
    Self-distances are zero, and all other distances are symmetric and
    strictly positive; because they are all in the range $[\omega,2\omega]$,
    they also satisfy the triangle inequality. The running time of this
    reduction is linear in $n$, for any given $m$, which means that the
    worst-case bound of $\Omega(n\log n + n^{\lfloor m/2\rfloor})$ carries
    over from the convex hull problem.
\end{proof}

\begin{figure}
\input{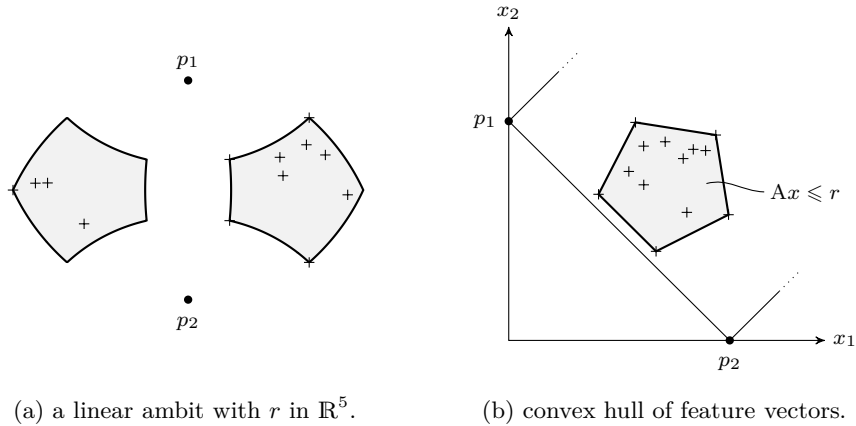}%
\newsavebox{\leftfig}
\newsavebox{\rightfig}
\sbox{\leftfig}{%
\begin{tikzpicture}[larger, semithick]
\footnotesize
\foreach \i in {1,...,2} {
    \draw[fill=lightshade, thick, line join=miter]
        plot file{hull_\i.dat} -- cycle;
}
\draw plot[resp] file{hullpts.dat};
\draw
    (\pax,\pay) node[pointnode] (P1) {} node[above] {\strut$p_1$}
    (\pbx,\pby) node[pointnode] (P2) {} node[below] {\strut$p_2$}
    ;
\end{tikzpicture}}%
\sbox{\rightfig}{%
\begin{tikzpicture}[larger, semithick]
\footnotesize
\draw[fill=lightshade, thick] plot file{hullfeatvert.dat} -- cycle;
\draw plot[resp] file{hullfeat.dat};
\draw[thin,<->,>=stealth']
    (0,3.2) node[above] {$x_2$} -- (0,0) -- (3.2,0) node[right] {$x_1$};

\def\d{.5}
\filldraw
    (\pdist,0) node[pointnode] {} node[below] {\strut$p_2$}
    (0,\pdist) node[pointnode] {} node[left=1.5pt] {$p_1$}
    ;
\draw[thin] (\d,\pdist + \d) -- (0,\pdist) -- (\pdist,0) -- (\pdist + \d,\d);
\draw[thin, dotted]
    (\d,\pdist + \d) -- +(.2,.2)
    (\pdist + \d,\d) -- +(.2,.2)
    ;

\draw (3,1.5) node[inner sep=2pt] (ineq) {$\mA x\leq r$};
\path[thin] (2,1.6) edge[out=15,in=178] (ineq);
\end{tikzpicture}}%
\centering
\subcaptionbox{a linear ambit with $r$ in $\R^\sides$}[.48\linewidth]{%
\raisebox{(\ht\rightfig-\ht\leftfig)/2}{\usebox{\leftfig}}%
}%
\hfill
\subcaptionbox{convex hull of feature vectors}[.48\linewidth]{%
\usebox{\rightfig}%
}%
\caption{The optimal linear ambit is given by a convex hull in the feature space}
\label{fig:convexhull}%
\end{figure}

\noindent
For an example of an ambit corresponding to the convex hull of the
responsibility feature vectors, see \cref{fig:convexhull}.
Note that if we have fewer restrictions on~$\delta$ (e.g., if it is only a
quasimetric), the optimality of \cref{thm:convexhull} would still hold; this
would merely simplify the reduction. Other constraints may also be
accommodated in a similar manner.

While this result provides us with a method of finding optimal linear ambits,
it is not entirely practical, in that we might end up with $\Theta(n^{\lfloor
m/2\rfloor})$ radii, which is unacceptable in any actual
implementation.\endnote{The worst case is
$
\smash{
f_{m-1}(C_m(n))=
\binom{n-\lceil m/2\rceil}{\lfloor m/2\rfloor} +
\binom{n - 1 - \lceil(m-1)/2\rceil}{\lfloor (m-1)/2\rfloor}
}
$
facets~\mbox{\citep[p.\,25 \& \S\,8.4]{Ziegler:1995}}.}
Let's consider, instead, the case where we restrict ourselves to a single
facet---a hyperplane in the feature space.

In the following, let $z=\Z\mkern1mu\tr c$. For unifocal queries with $c=1$,
$z$ retains its interpretation as the feature vector of the query center $q$.
In general, the overlap condition from \cref{thm:overlap} becomes $r+s\geq
az$, and we wish to minimize the probability of this being true. Absent a
probability distribution for $s$ and $z$, we resign ourselves to maximizing
the expected filtering lower bound $\ell$, that is,
\begin{equation}
    \label{eq:optcoeff}
    \ell_\opt\, \deq \,
    \textstyle
    \max_{a,r} \E[az - r]
    \quad
    \stlap
    \quad
    \|a\|_1=1\,,\enskip \forall x\; ax\leq r
    \eqcomma
\end{equation}
where $x$ ranges over the feature vectors of the ambit's responsibilities.
This expectation is, of course, a heuristic proxy for the actual overlap
probability, though not an unreasonable one, and not one without precedent in
metric indexing.\endnote{For example, \citet{Bustos:2003} use the same
heuristic for pivot selection.}
If $\ell_\opt\leq 0$, we say the corresponding ambit is \emph{degenerate}.

In the following, let $\one = \tr{[1\,\dots\,1]}$ be an all-ones column vector
and let $\X\deq\set{x_{ij}}$ be a non-negative $m\times n$ matrix whose
columns are the responsibility feature vectors.  Let $\hat{z}=\E[z]$, in
practice estimated by the average of $z$ over a set of \emph{training
queries}~\citep{Edsberg:2010}. We can efficiently determine the optimal
coefficients for a non-degenerate ambit using the following linear program.
\begin{equation}\label{eq:optcoefflp}\arraycolsep=2pt
    \def\myrowsep{1.7ex}
    \begin{array}{ll@{\hspace{1.25em}}lclclcl@{\,}l}%
        \mathllap{\ell'_\opt\deq}&
        \max_{u,v,r} & u\hat{z} & - & v\hat{z} & - &r\\[\myrowsep]
        &\st &u\one & + & v\one &&& = & 1 &\mathrlap{,}\\[\myrowsep]
        && u\X & - & v\X & - & r & \leq & 0 &\mathrlap{,}\\[\myrowsep]
        &\& & u\eqcomma && v &&& \geq & 0 &\mathrlap{.}
    \end{array}
\end{equation}

\begin{proposition}\label{thm:opt}
    $\ell'_\opt = \max{}\set{\ell_\opt,\mkern1mu 0}$, with $a=u-v$ if
    $\ell_\opt$ is positive.
\end{proposition}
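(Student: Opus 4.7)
The plan is to recognise the LP~\eqref{eq:optcoefflp} as exactly the relaxation of~\eqref{eq:optcoeff} in which the unit-$\ell_1$ constraint $\|a\|_1 = 1$ is loosened to $\|a\|_1 \leq 1$, and then to exploit the positive homogeneity of the resulting objective to show this relaxation is harmless when $\ell_\opt > 0$, and otherwise merely adds the option $a = 0$.

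First I would change variables by setting $a \deq u - v$. For any LP-feasible $(u, v, r)$, non-negativity of $u$ and $v$ together with $u\one + v\one = 1$ gives $\|a\|_1 = \sum_i |u_i - v_i| \leq \sum_i (u_i + v_i) = 1$; the inequality $u\X - v\X - r \leq 0$ becomes $ax \leq r$ for every responsibility feature vector $x$ (column of $\X$); and the objective rewrites as $a\hat z - r$. Conversely, given any $(a, r)$ with $\|a\|_1 \leq 1$ and $ax \leq r$ for all $x$, I would lift it to an LP-feasible triple by taking $u \deq a^+ + \alpha\tr\one$ and $v \deq a^- + \alpha\tr\one$, where $a^+, a^-$ are the positive and negative parts of $a$ and $\alpha \deq (1 - \|a\|_1)/(2m) \geq 0$ pads the sums up to unity; since $u - v = a^+ - a^- = a$, this preserves the objective as well as every constraint. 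Hence
\[
\ell'_\opt \;=\; \max\bigl\{\,a\hat z - r \,:\, \|a\|_1 \leq 1,\; ax \leq r \text{ for every responsibility feature vector } x\,\bigr\}.
\]

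Next, since $r$ enters the objective only through $-r$ and is bounded below by $\max_x ax$, I would eliminate it to obtain $\ell'_\opt = \max_{\|a\|_1 \leq 1} g(a)$, with $g(a) \deq a\hat z - \max_x ax$, and analogously $\ell_\opt = \max_{\|a\|_1 = 1} g(a)$. Both the linear term and the pointwise maximum are positively homogeneous, so $g(ta) = t\,g(a)$ for every $t \geq 0$. Parameterising any $a$ with $\|a\|_1 \leq 1$ as $a = t a_0$ with $t \in [0,1]$ and $\|a_0\|_1 \in \set{0, 1}$ then reduces the relaxation to $\ell'_\opt = \max_{t \in [0,1]} t \cdot \ell_\opt$. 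If $\ell_\opt > 0$ the maximum is attained at $t = 1$, so $\ell'_\opt = \ell_\opt$ and any LP-optimal $(u, v, r)$ supplies an $a = u - v$ with $\|a\|_1 = 1$ that is optimal for~\eqref{eq:optcoeff}; if $\ell_\opt \leq 0$ the maximum is attained at $t = 0$ (realised by $u = v = \tr\one/(2m)$ and $r = 0$), giving $\ell'_\opt = 0$. The two cases collapse into $\ell'_\opt = \max\set{\ell_\opt, 0}$.

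The main obstacle I expect is the first step: the cleanness of the rest hinges on the two-way correspondence between LP variables and $\ell_1$-bounded vectors being exact, and in particular on checking that the padding by $\alpha\tr\one$ leaves $a$, the matrix inequality $u\X - v\X \leq r$ and the objective all unchanged while still landing inside the LP's feasible region. Once that is nailed down, positive homogeneity of $g$ makes the remainder essentially a one-line case split.
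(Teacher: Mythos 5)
Your proof is correct and takes essentially the same route as the paper's: both rest on the decomposition $a = u - v$, the observation that LP-feasibility of $(u,v,r)$ corresponds exactly to $\|a\|_1 \leq 1$ with $ax \leq r$, and a scaling argument showing this relaxation of the unit-norm constraint is tight whenever the optimum is positive. The only difference is presentational---you package the scaling step as positive homogeneity of $a\hat{z} - \max_x ax$, whereas the paper argues by contradiction on an optimal LP solution (subtracting $\min\{u_i,v_i\}$ and dividing through by $\xi < 1$ to strictly improve a positive objective), which is the same rescaling in disguise.
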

\begin{proof}
\def\probA{\eqref{eq:optcoeff}}
\def\probB{\eqref{eq:optcoefflp}}
Given an arbitrary feasible solution to~\probA, let
\begin{equation*}
u_i = \max{}\set{a_i,0}
\quad\text{and}\quad
v_i = -\min{}\set{a_i,0}\,,
\end{equation*}
for all $i$, so $a=u-v$ and $|a_i|=u_i+v_i$, where $u_i, v_i\geq
0$. By construction, the objective value is preserved. Furthermore,
\mbox{$(u+v)\one = \|a\|_1 = 1$}, and, for each column vector $x$ of $\X$, we
have $(u-v)x = ax \leq r$, so the solution is feasible for~\probB. Of course,
this holds \emph{a fortiori} for positive optimal solutions to~\probA.

Now assume a positive optimal solution to \probB, and let $a=u-v$. The
objective value is then preserved in~\probA, and the radius $r$ remains valid;
we need only show that $\|a\|_1=1$, or more specifically, that
\mbox{$|u_i-v_i|=u_i+v_i$} for each~$i$. Assume, for the sake of
contradiction, that for some $i$ we have $|u_i-v_i|<u_i+u_i$. We can
then reduce $u_i+v_i$ by subtracting $\min{}\set{u_i,v_i}$ from both
variables, preserving the objective value and the second constraint. Let $\xi$
be the resulting value of $(u+v)\one$.
The optimum is positive, meaning $u\neq v$, so even if $u_i=v_i$, we have
$\xi>0$.
We can therefore restore the first constraint by dividing all variables by
$\xi$, and because $\xi<1$ and the objective is positive, this increases the
objective value. As we started with the optimum, this is a contradiction.

In other words, for every positive optimal solution to either
problem, there is an equivalued feasible solution to the other, where
$a=u-v$. Thus if either optimum is positive, the two are identical, and so
for this case, the
proposition
holds.
A solution where $u=v$ and $r=0$ is always feasible for~\probB,
which means that $\ell'_\opt\geq 0$. Consequently, if $\ell_\opt\leq 0$, the only
value available to $\ell'_\opt$ is $0$, proving the remaining case.
\end{proof}

\noindent
If $\ell'_\opt=0$, the ambit is degenerate, and can never be eliminated, so
restructuring the sprawl is probably a better strategy than determining the
optimal coefficients for the given foci and responsibilities. See
\cref{fig:samplesinglefacet} for some ambits constructed
using~\programref{eq:optcoefflp}.

\begin{figure}
\centering
\def\inside{darkshade}
\def\outside{lightshade}
\def\D{.5*1.7}%
\def\outline{(-\D - .1, -\D - .2) rectangle +(3*\D + .2, 2*\D + .4)}
\begin{tikzpicture}[larger, semithick]
\fill[use as bounding box, \inside] \outline;
\begin{scope}
    \clip \outline;
\foreach \i in {1} {
    \draw[fill=\outside,
        thick, line join=miter]
        plot file{unifac_1_\i.dat} -- cycle;
}
\draw plot[resp] file{unifacpts_1.dat};
\draw plot[only marks, mark=*, mark size=1pt] file{unifacfoci_1.dat};
\draw plot[only marks, mark=*, mark size=.3pt] file{unifacqrys_1.dat};
\end{scope}
\end{tikzpicture}
%
%
%
%
\hfill
\begin{tikzpicture}[larger, semithick]
\fill[use as bounding box, \outside] \outline;
\begin{scope}
    \clip \outline;
    \foreach \i/\f in {1/\inside,2/\inside,3/\inside} {
    \draw[fill=\f,
        thick, line join=miter]
        plot file{unifac_2_\i.dat} -- cycle;
}
\draw plot[resp] file{unifacpts_2.dat};
\draw plot[only marks, mark=*, mark size=1pt] file{unifacfoci_2.dat};
\draw plot[only marks, mark=*, mark size=.3pt] file{unifacqrys_2.dat};
\end{scope}
\end{tikzpicture}
%
%
%
%
\hfill
\begin{tikzpicture}[larger, semithick]
\fill[use as bounding box, \outside] \outline;
\begin{scope}
    \clip \outline;
\foreach \i in {1} {
    \draw[fill=\inside,
        thick, line join=miter]
        plot file{unifac_3_\i.dat} -- cycle;
}
\draw plot[resp] file{unifacpts_3.dat};
\draw plot[only marks, mark=*, mark size=1pt] file{unifacfoci_3.dat};
\draw plot[only marks, mark=*, mark size=.3pt] file{unifacqrys_3.dat};
\end{scope}
\end{tikzpicture}
\caption{Optimal ambits with fixed facet counts. The first two have one facet,
while the last one has three, with queries (small dots) assigned to each by
$k$-means clustering}%
\label{fig:samplesinglefacet}%
\end{figure}
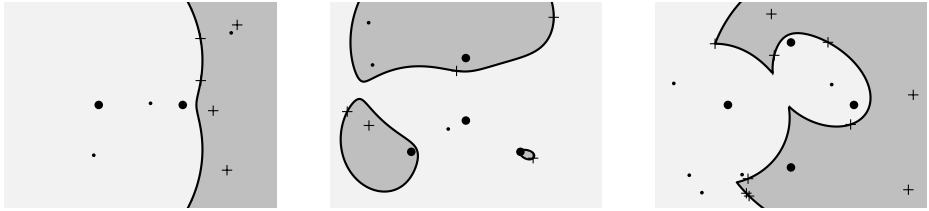

While the (primal) formulation of \programref{eq:optcoefflp} was motivated by
maximizing the expected filtering bound, the \emph{dual} problem has an
interesting interpretation in its own right.\footnote{With dual variables
$\varepsilon$ and $\beta_1,\dots,\beta_n\geq 0$, minimize $\varepsilon$ s.t.\
$-\varepsilon\leq\X\beta - \hat{z}\leq\varepsilon$ and $\onet\beta=1$.}
It is equivalent to a form of
regression problem, where the feasible region is the convex hull of the
feature vectors, and we are attempting to approximate~$\hat{z}$. That is, the
goal is to construct a convex combination $\X\beta$ that minimizes the
Chebyshev error $\|\X\beta - \hat{z}\|_\infty$. As the optimal error is equal
to the optimal lower bound, however, we would hope that a good fit is
impossible; a perfect fit would indicate a degenerate ambit.

\begin{proposition}
    \label{thm:poscoeff} If $\hat{z}_i > \max_j x_{ij}$, then $a_i \geq 0$ in
    every optimal solution.
\end{proposition}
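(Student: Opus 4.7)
The plan is to argue by contradiction via a one-step exchange on the LP~\programref{eq:optcoefflp}. Suppose some optimal solution $(u,v,r)$ has $u_i - v_i < 0$, so that $v_i > u_i \geq 0$ (in particular $v_i > 0$). I would perturb only coordinate $i$, shifting a quantum of mass from $v_i$ to $u_i$ and compensating with a small bump in $r$, then show that the new triple is feasible and strictly improves the objective.

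Concretely, pick any $\epsilon \in (0, v_i]$ and set $u'_i \deq u_i + \epsilon$, $v'_i \deq v_i - \epsilon$, leaving the other coordinates of $(u,v)$ untouched. Then $u'\one + v'\one = u\one + v\one = 1$, so the equality constraint is preserved. Writing $a' \deq u' - v'$, we have $a'_i = a_i + 2\epsilon$ and $a'_j = a_j$ for $j \neq i$, whence for every responsibility column $x$ of $\X$,
\[
a' x \;=\; ax + 2\epsilon\, x_i \;\leq\; r + 2\epsilon \max_{j} x_{ij}.
\]
Setting $r' \deq r + 2\epsilon \max_j x_{ij}$ restores the inequality constraints, so $(u',v',r')$ is feasible. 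A direct calculation then gives
\[
(u'\hat{z} - v'\hat{z} - r') - (u\hat{z} - v\hat{z} - r) \;=\; 2\epsilon\bigl(\hat{z}_i - \max_j x_{ij}\bigr),
\]
which is strictly positive by hypothesis, contradicting the assumed optimality of $(u,v,r)$. Hence every optimal solution satisfies $a_i = u_i - v_i \geq 0$.

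The crux is simply that trading a unit of $v_i$ for a unit of $u_i$ changes $a_i$ by $+2$ at zero cost to the $\|a\|_1 = 1$ budget, so one gains $2\hat{z}_i$ in the objective while having to absorb at most $2\max_j x_{ij}$ of extra slack into $r$; the hypothesis is exactly what makes this trade profitable. I don't anticipate a real obstacle---the argument is a one-step LP improvement and applies directly to any optimal $(u,v,r)$, without needing to first reduce to a canonical decomposition with $u_i v_i = 0$. An alternative route through LP duality (the dual being a Chebyshev regression of $\hat{z}$ onto the convex hull of the columns of $\X$) would give geometric intuition but add machinery without shortening the core inequality above.
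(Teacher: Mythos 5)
Your proof is correct, and it takes a somewhat different route from the paper's. The paper first rewrites the problem in the scale-invariant fractional form~\eqref{eq:altobj}, $\ell = \bigl(a\hat{z} - \max_j \sum_i a_i x_{ij}\bigr)/\|a\|_1$, and perturbs $a_i$ upward by $\varepsilon < |a_i|$; the gain there comes from two places at once (the numerator grows by at least $\varepsilon(\hat{z}_i - x_{ik})$ while the denominator shrinks by $\varepsilon$), which implicitly relies on the optimum being positive for the ratio to improve. You instead work directly in the $(u,v,r)$ variables of the LP~\eqref{eq:optcoefflp}, trade mass from $v_i$ to $u_i$ so the normalization constraint stays exactly satisfied, and absorb the damage into $r$; this yields the exact, sign-independent improvement $2\epsilon(\hat{z}_i - \max_j x_{ij}) > 0$ and so is arguably the cleaner calculation. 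The one thing to note is that the two arguments prove non-negativity for optimal solutions of slightly different formulations: yours for the LP~\eqref{eq:optcoefflp}, the paper's for the original normalized problem~\eqref{eq:optcoeff} (which is how the proposition is invoked in \cref{thm:focalhardness}). Under the hypothesis $\hat{z}_i > \max_j x_{ij}$ the unit vector $a = e_i$ already gives a strictly positive objective, so both optima are positive and \cref{thm:opt} identifies the two solution sets via $a = u - v$; it would be worth one sentence to say so, but there is no gap.
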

\begin{proof}
We can restate our problem as maximizing
\begin{equation}
\label{eq:altobj}
\textstyle
\ell
\,\,
=
\,
\Bigl(\sum_ia_i\hat{z}_i - \max_j\sum_ia_ix_{ij}\Bigr)\,/\,\|a\|_1\,,
\end{equation} subject to $\|a\|_1\neq 0$. Assume that for an optimal
solution, we have $a_i<0$. We may then increase $a_i$ by some positive amount
$\varepsilon<|a_i|$, decreasing the denominator by~$\varepsilon$. Let~$k$ be
an index that attains the maximum in~\eqref{eq:altobj} after this increase.
Then $\max_j\sum_ia_ix_{ij}$ has increased by $\varepsilon x_{ik}$, or
possibly less if $k$ did not attain the maximum before this change. The
numerator, therefore, has increased by at least $\varepsilon\hat{z}_i -
\varepsilon x_{ik}$, which is positive because \mbox{$\hat{z}_i > x_{ik}$.}
The objective~$\ell$ will increase, and this contradicts optimality.
Consequently, $a_i\geq 0$.
\end{proof}

\noindent
In other words, as one might very well intuit, a negative focal weigth
indicates that the focus acts as a \emph{contrast}, a prototypical
\emph{non}-member; if the foci are more representative of the responsibilities
than they are of the queries, the focal weights will all be
non-negative.\endnote{A weight of zero eliminates the focus, so only positive
weights will actually be used.} If we assume or determine that this is the
case,~\programref{eq:optcoefflp} can be simplified by eliminating the column
associated with $v$, and letting $a=u$. If we do not have access to training
queries, and have no prior weighting of the foci, the problem reduces to
limiting the maximum remoteness, i.e., minimizing the radius.
We can safely assume (and easily check) that the foci are distinct, which
means \mbox{\X\ will} have no all-zero columns, so the radius is
non-zero.\endnote{A focus will usually not be a responsibility, so we could
even assume $\X>0$.} We can then find the optimal radius as follows.
\def\ha{\hat{a}}%
\begin{subequations}%
\def\wt{w}%
\def\wtt{\hat{w}}%
\begin{alignat}{4}
    \label{eq:minrad1}
    r_{\opt} \enskip
    &
    \mathrel{\phantom{=}\mathllap{\deq}}
    \enskip
    &&
    \min_{\mathclap{a,r\geq 0}}\enskip
    &&
    r
    \qquad
    &&
    \st
    \enskip
    a\one=1
    \,
    ,
    \enskip
    a\X\leq
    \mathrlap{r}\phantom{1} 
    \\
    \label{eq:minrad2}
    &=
    &&\mathop{\min_{a\geq 0}}\enskip
    &&
    1/
    a\one
    \qquad
    &&
    \st
    \enskip
    a\one\neq 0
    \,,
    \enskip
    a\X\leq 1\,.
\end{alignat}
The linear program in~\eqref{eq:minrad1} is a direct formulation of the
problem.
The reformulation in~\eqref{eq:minrad2} is justified by he
following proposition.

\begin{proposition}
    \label{thm:minradequiv}
    Optimization problems~\eqref{eq:minrad1} and~\eqref{eq:minrad2} are
    equivalent.
\end{proposition}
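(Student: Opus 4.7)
The plan is to exhibit a value-preserving correspondence between feasible solutions of the two programs, based on the observation that multiplying $a$ by a positive scalar scales both $a\one$ and $a\X$ by the same factor. Since the feasible set of~\eqref{eq:minrad1} is normalized by $a\one = 1$, while that of~\eqref{eq:minrad2} is normalized by $a\X \leq 1$, a feasible $a$ for one problem can be rescaled to a feasible $a$ for the other, and the objective values will match in each direction.

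Concretely, I would proceed as follows. First, given any feasible $(a, r)$ for~\eqref{eq:minrad1} with $r > 0$, set $a' \deq a/r$; then $a' \geq 0$, $a'\X \leq \one$, and $a'\one = 1/r > 0$, so $a'$ is feasible for~\eqref{eq:minrad2} with objective $1/(a'\one) = r$. Conversely, given feasible $a$ for~\eqref{eq:minrad2}, set $r \deq 1/(a\one)$ (well-defined because $a\one > 0$ since $a\geq 0$ and $a\one\neq 0$) and $a' \deq a \cdot r = a/(a\one)$; then $a' \geq 0$, $a'\one = 1$, and $a'\X = a\X/(a\one) \leq \one/(a\one) = r\one$, so $(a', r)$ is feasible for~\eqref{eq:minrad1} with the same objective $r$. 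These two constructions are mutually inverse (up to the scaling freedom) and preserve objective values, so the infima of the two problems coincide.

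The one subtlety, and the main thing to address, is the case $r = 0$ in~\eqref{eq:minrad1}, since the first direction of the correspondence breaks down there. But under the standing assumption that $\X$ has no all-zero row (i.e.\ every focus differs from at least one responsibility), $r = 0$ combined with $a \geq 0$ and $a\one = 1$ would force $a\X = 0$, hence $x_{ij} = 0$ for every $i$ with $a_i > 0$ and every $j$—contradicting the absence of all-zero rows. Thus the optimum of~\eqref{eq:minrad1} satisfies $r_{\opt} > 0$, the scaling argument applies at optimality, and the two programs have the same optimal value, with corresponding optimizers related by the scalings above.
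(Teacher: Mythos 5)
Your proof is correct and takes essentially the same route as the paper's: a value-preserving rescaling between the two normalizations, sending a feasible $(a,r)$ of~\eqref{eq:minrad1} to $a/r$ and a feasible $\hat a$ of~\eqref{eq:minrad2} back to $a=\hat a/(\hat a\one)$, $r=1/(\hat a\one)$. The only difference is that you justify $r_{\opt}>0$ explicitly (correctly locating the needed condition in $\X$ having no all-zero \emph{row}), whereas the paper simply invokes its standing assumption on $\X$.
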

\begin{proof}
    As noted, the definition of \X\ implies $r>0$.
    Let $a,r$ and $\ha$ be solutions to~\eqref{eq:minrad1}
    and~\eqref{eq:minrad2}, respectively.
    If $a$ and $r$ are feasible,
    $\ha\deq a/r$ is feasible, with the same objective value. Conversely, if
    $\ha$ is feasible, $a\deq\ha/\ha\one$ and $r=1/\ha\one$ are feasible, with
    the same objective value. Hence the problems are equivalent, and yield the
    same minimum.
\end{proof}
\noindent
The second formulation is simply the inverse of the problem\footnote{The
definition of \X\ means $a=0$ could never be optimal, so the constraint
$a\one\neq 0$, which served to keep the objective in~\eqref{eq:minrad2}
well-defined, is now superfluous.}
\begin{equation}
    \label{eq:inverseminrad}
    1/r_\opt
    \enskip\deq\enskip
    \mathop{\smash{\max_{a\geq 0}}}\enskip\mathrlap{a\one}
    \phantom{1/a\one\qquad}
    \st\enskip a\X\leq 1\,,
\end{equation}
\end{subequations}
which is a positive linear program---a so-called \emph{packing} LP, a class of
optimization problems for which specialized algorithms
exist~\citep[e.g.,\kern-1.5pt][]{Koufogiannakis:2014}. One could even use
simple algorithms such as \emph{multiplicative weights update} to approximate
the optimum~\citep{Arora:2012}.

Between the two extremes of permitting an arbitrary number of facets
(\cref{thm:convexhull}) and limiting ourselves to a single facet
(\cref{thm:opt}), we could use the following heuristic procedure as a
compromise:

\medskip\noindent
\begin{pseudo}
cluster the $z$-vectors to get $k$ centroids \nl
\kw{for} $i=1$ \kw{to} $k$ \nl
\>$\hat{z}\gets\text{centroid $i$}$ \nl
\>compute $r_i$ and row $i$ of $\mA$ using \eqref{eq:optcoefflp}
or \eqref{eq:inverseminrad}
\end{pseudo}

\medskip\noindent
See the last panel of \cref{fig:samplesinglefacet} for an example ambit
constructed in this way. This approach reduces the choice of facet
construction heuristic to the choice of a suitable clustering algorithm, of
which there are many---even those that could cluster the queries
in the original metric space, with feature centroids computed
afterward~\citep{Berkhin:2006,Xu:2015}. Each facet is optimized to separate or
\emph{defend} the ambit from queries in one of the clusters. The number of
facets (i.e., clusters) may be fixed, or may be a result of the clustering.

\subsection{The Complexity of Focus Selection}
\label{sec:focusselection}

If we are constructing an ambit with $n$ responsibilities and $m$
pre-designated foci, using a set of $k$ sample queries, we can now efficiently
find optimal focal weights. However, if we aren't told which foci to use,
things get trickier.
If a sprawl is built incrementally, as many index structures are, the foci
will be determined ahead of the set of responsibilities, so finding an optimal
focal set is never an option. Still, the problem does merit some
investigation, as it would be relevant in rebuilding or bulk loading, for
example.
Let the \emph{metric linear focus selection problem}, or \textsc{focus
selection}, be defined as selecting $m$ foci from among \mbox{$m+n$} points in
a metric space, with $m\leq n$, leaving the remaining $n$ points as
responsibilities, so that $\max{}\set{\ell_\opt,0}$ is maximized, where
$\ell_\opt$ is defined by \programref{eq:optcoeff} and $\E[z]$ is
estimated as the average over a set of $k$ training queries.
When seen as a parameterized problem, we use $m$ as the parameter.%
\endnote{For more on parameterization, see, e.g., the book by
\citet{Cygan:2015}.} This problem is \emph{slice-wise polynomial} (XP), as an
optimal solution can be found by solving $\BigOh{n^m}$ instances
of the linear program~\programref{eq:optcoefflp}, each using one $m$-subset of
the \mbox{$n+m$} points as foci.
It is highly unlikely, however, that we will find a
\emph{fixed-parameter tractable} algorithm, i.e., with a running time
of~\mbox{$f(m)\cdot n^{\BigOh{1}}$} for some computable function $f$. This is true
even if if all we want is an \emph{approximation} of the optimum to within
some factor $\alpha(n)$, as the following \lcnamecref{thm:focalhardness}
shows.\endnote{For more on approximation, see, e.g., the book by
\citet{Williamson:2011}.}
\begin{theorem}
    \label{thm:focalhardness}
    Approximating
    \textsc{focus selection} is $\mathrm{W}[2]$-hard.
\end{theorem}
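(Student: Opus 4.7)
The plan is to exhibit an FPT reduction from a parameterized problem that is W[2]-hard to approximate to within every computable factor $\alpha$; the natural choice is \textsc{dominating set}, which was shown W[2]-hard to approximate in this strong sense by Chen and Lin and strengthened in subsequent work. Given an instance $(G,k)$ of \textsc{dominating set} with $N = |V(G)|$, I would produce a metric \textsc{focus selection} instance with parameter $m = k$ (or a linear function of $k$) having the property that the optimum $\ell_\opt$ is bounded below by a positive constant $c_1$ in yes-instances and equals $0$ in no-instances. An FPT $\alpha(n)$-approximation would then close any gap of size $c_1/\alpha(n)$ and decide \textsc{dominating set} in FPT time, contradicting its W[2]-hardness.

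For the construction, I would introduce, for every vertex $v \in V(G)$, a candidate focus $p_v$ together with a training query $q_v$. All pairwise distances would lie in the narrow range $[\omega, \omega+1]$ for a large constant $\omega$, a shift that trivialises the triangle inequality exactly as in the reduction in the proof of \cref{thm:convexhull}. I set $\delta(p_v, q_w) = \omega + 1$ if $v \in N[w]$ and $\omega$ otherwise, so that putting weight on foci covering many query vertices raises the expected query feature $\hat{z}$. The key step is then to add a gadget of responsibility points whose feature vectors form a polytope $P$ in $\R^m$ with the property that $\hat{z}$ lies inside $P$ precisely when the chosen focus set \emph{fails} to be a dominating set. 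By the dual (Chebyshev regression) interpretation of \programref{eq:optcoefflp} described at the end of \cref{sec:optcoeff}, the primal optimum $\ell'_\opt$ is then $0$ in the no-case and at least $c_1$ in the yes-case; the preservation of the parameter $m = k$ by the reduction then yields the W[2]-hardness.

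The main obstacle is engineering this responsibility gadget so that the convex-hull condition is sharp. In effect, one must realise a prescribed family of ``covered'' versus ``uncovered'' positions of $\hat{z}$ as the boundary of a polytope whose vertices are simultaneously realisable as feature vectors of points in a genuine \emph{metric} space, using only polynomially many points and without inflating the parameter. Standard symmetrization tricks (vertex copies, averaging) should reduce the task to the kind of combinatorial gadget-design problem that is routine in W-hierarchy reductions, but two technical points demand care: verifying metricity of every triple of points in the padded gadget (ensured by the $\omega$-shift but still requiring inspection), and arguing that the gap constant $c_1$ does not degrade with $N$ so that any FPT factor $\alpha(n)$ really is overcome by the reduction.
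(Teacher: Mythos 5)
There is a genuine gap: the construction on which your whole argument rests is never given. You reduce everything to "a gadget of responsibility points whose feature vectors form a polytope $P$ in $\R^m$ such that $\hat{z}$ lies inside $P$ precisely when the chosen focus set fails to be a dominating set," and then defer its realisation to "standard symmetrization tricks." But this gadget is the entire difficulty. It must work \emph{simultaneously for every one of the $\binom{N}{m}$ candidate focus sets} (both the polytope and the point $\hat z$ change with the choice of foci, since the feature map itself changes); its vertices must be realisable as distance vectors of actual points in a metric space; and --- a point you do not address --- in \textsc{focus selection} the $N-m$ \emph{unselected} candidate foci automatically become responsibilities too, so they also enter the convex hull and constrain the radius. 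On top of this, aiming for $\ell_\opt=0$ exactly in no-instances means engineering a \emph{perfect} Chebyshev fit ($\hat z$ exactly inside the hull), a knife-edge condition that is much harder to certify than a quantitative gap. As written, the proposal is a plan for a proof, not a proof.

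For comparison, the paper's reduction avoids the gadget entirely. It takes $U=V\cup\{q\}$ with a \emph{single} training query $q$ at distance $3+\varepsilon$ from every vertex, adjacent pairs at distance $2$ and non-adjacent pairs at distance $3$. Then \cref{thm:poscoeff} forces $a\geq 0$, and since $\|a\|_1=1$ the term $a\hat z=3+\varepsilon$ is constant, so maximizing $\ell_\opt$ collapses to the radius minimization~\eqref{eq:minrad1}: a dominating set of size $m$ gives every column of $\X$ an entry equal to $2$, hence $r_\opt\leq 3-1/m$ and $\ell_\opt\geq\varepsilon+1/m$, whereas otherwise some point has all focal distances $3$, so $r_\opt=3$ and $\ell_\opt=\varepsilon$; choosing $\varepsilon$ small relative to $\alpha(n)/m$ turns this into an inapproximability gap. (Note also that you only need W[2]-hardness of \emph{exact} \textsc{dominating set} --- the gap is created by the reduction itself, so the appeal to parameterized inapproximability of \textsc{dominating set} is unnecessary.) If you want to salvage your variant, the first concrete task is to exhibit the responsibility gadget and verify the hull condition uniformly over all focus choices; until then the claim is unproven.
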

\begin{proof}
    By parameterized reduction from \textsc{dominating
    set}~\citep{Cygan:2015}.
    We are given a graph $G=(V,E)$, and wish to determine whether it has a
    dominating set with at most $m$ nodes.
    We define a metric space \tup{U,\delta}, with $U = V\cup\set{q}$ for some
    new object $q\notin V$, and
    \[
        \delta(u,v)
        \,
        =
        \,
        \begin{cases}
            \,\, 0 & \text{if $u=v\,;$} \\
            \,\, 2 & \text{if $\set{u,v}\in E\,;$} \\
            \,\, 3 & \text{if $\set{u,v}\in{} [V]^2\setminus E\,;$ and} \\
            \,\, 3+\varepsilon & \text{otherwise, i.e., if $q\in\set{u,v}$\,,}
        \end{cases}
    \]
    where $0<\varepsilon\leq 1$, and $[V]^2$ is the set of 2-subsets of $V$.
    This distance satisfies the metric axioms.
    Consider a set of~$m$ optimal foci from $\mkern.5muV$ with the training
    query $q$ in the metric space \seq{U,\delta}.
    We know that $\hat{z}_i = 3+\varepsilon > x_{ij}$ for all~$i$ and~$j$. By
    \cref{thm:poscoeff}, it follows that $a\geq 0$, and since $\|a\|_1=1$,
    $a\hat{z}$ is constant.
    This leaves us with the restricted radius minimization problem
    of~\eqref{eq:minrad1}.

    Recall that all nodes are, by definition, adjacent to at least one
    node in any dominating set.
    Assume that there is a dominating set $D\subseteq V$ of size at most $m$.
    If we use $D$ as our foci, each point will have at least one focal
    distance of~2, i.e., each column~$j$ of $\X$ contains an entry~$x_{ij}=2$.
    One feasible solution has $a_i=1/m$ for each $i$, yielding an $r_\opt$ of
    at most $3-1/m$, and thus an $\ell_\opt$ of at least $\varepsilon + 1/m$.
    If, however, $G$ does \emph{not} have a dominating set of size at most
    $m$, then for any set $D\subseteq G$ of $m$ foci, there will always be at
    least one point whose focal distances are all~3, as it is not adjacent to
    any of the corresponding nodes. In this case, $r_\opt=3$, and
    $\ell_\opt=\varepsilon$. To discern between the two values, we need
    $\alpha(n)\cdot(\varepsilon + 1/m)$ to be strictly greater than
    $\varepsilon$. For the inexact case, this means making $\varepsilon$
    strictly less than
    \[
        \frac{\alpha(n)}{m\cdot(1-\alpha(n))}\eqcomma
    \]
    ensuring that any $\alpha(n)$-approximation will be greater
    than~$\varepsilon$ if and only if $G$ has a dominating set of at most $m$
    nodes, completing the reduction.
\end{proof}

\noindent
As previously discussed, the choice to maximize the lower bound is somewhat
heuristic, and altering the problem slightly could make it more tractable.
For example, let \textsc{simple focus selection} be the problem of finding foci
minimizing $r_\opt$, as defined by~\eqref{eq:minrad1}, still with $\delta$
being a metric. The reduction in \cref{thm:focalhardness} applies, so this
simplified problem is also~$\mathrm{W}[2]$-hard; however, given that $a\geq
0$, the radius turns out to be an easier objective to approximate, as the
following \lcnamecref{thm:radapprox} shows. Recall from
\cref{def:cmpfeatures} that for a given tuple $p$ of foci,
$\phi(u)=[\delta(u,p)]_{i=1}^m$.
\begin{observation}
    \label{thm:upperbound}
    If $\|a_1\|=1$ and $a\geq 0$ then $\delta(u,v)
    \leq
    a\phi(u) + a\phi(v)
    $.
    \qed
\end{observation}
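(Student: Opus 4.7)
The plan is to derive the inequality from the triangle inequality applied at each focus, then to assemble those per-focus bounds into a single scalar inequality using the fact that $a$ is a (row) vector of convex-combination weights. The observation that $a \geq 0$ together with $\|a\|_1 = 1$ means that $a_1, \dots, a_m$ form a probability distribution over the foci is the conceptual pivot; everything else is bookkeeping in the definition of $\phi$.

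First I would write out $a\phi(u) + a\phi(v)$ explicitly as $\sum_i a_i\bigl(\delta(u, p_i) + \delta(v, p_i)\bigr)$, using the definition of $\phi$ from \cref{def:cmpfeatures} together with the symmetry assumption on $\delta$ (the statement lives in a metric space, so $\delta(v, p_i) = \delta(p_i, v)$). Next I would fix an arbitrary index $i$ and invoke the triangle inequality for the metric $\delta$ at the focus $p_i$, obtaining $\delta(u, v) \leq \delta(u, p_i) + \delta(p_i, v)$. Because $a_i \geq 0$, multiplying each of these $m$ inequalities by $a_i$ preserves the direction, and summing over $i$ yields
\[
    \Bigl(\sum_i a_i\Bigr)\,\delta(u, v)
    \,\leq\,
    \sum_i a_i\bigl(\delta(u, p_i) + \delta(v, p_i)\bigr)
    \,=\,
    a\phi(u) + a\phi(v)\eqdot
\]
Finally, the hypothesis $\|a\|_1 = 1$ (which for a non-negative vector coincides with $\sum_i a_i = 1$) collapses the left-hand coefficient to $1$, giving the desired bound.

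There is no real obstacle here; the only subtlety worth double-checking is that the condition as written indeed delivers a genuine convex combination. The positivity of $a$ is used both to preserve the inequality sign under scaling and (together with $\|a\|_1 = 1$) to make $\sum_i a_i = 1$ rather than merely $\sum_i |a_i| = 1$; a signed $a$ of unit $1$-norm would not suffice, since then the weighted sum of the per-focus triangle inequalities need not dominate $\delta(u, v)$. Symmetry of $\delta$ is also essential to reconcile $\delta(p_i, v)$ appearing in the triangle inequality with $\delta(v, p_i)$ appearing in the feature vector $\phi(v)$; for the asymmetric (quasimetric) case one would instead need the forward/backward ambit distinction from \cref{def:ambit}, which is outside the scope of this observation as stated.
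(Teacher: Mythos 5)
Your proof is correct and is precisely the argument the paper intends: the paper states this observation with no proof at all (just a QED mark), treating the convex combination of per-focus triangle inequalities as immediate, and your writeup supplies exactly that bookkeeping. Your side remarks are also on point --- the hypothesis as printed, $\|a_1\|=1$, is evidently a typo for $\|a\|_1=1$, and you correctly identify that non-negativity is needed both to preserve the inequality direction and to make the weights sum to $1$ rather than merely have unit $1$-norm.
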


\begin{theorem}
    \label{thm:radapprox}
    \textsc{simple focus selection} has an \BigOh{mn} 2-approximation.
\end{theorem}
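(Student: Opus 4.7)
The plan is to apply Gonzalez's classical farthest-point-first heuristic---pick $p_1 \in U$ arbitrarily, then for $i = 2,\dots,m$ set $p_i$ to a point in $U \setminus \{p_1,\dots,p_{i-1}\}$ maximizing $\min_{j<i}\delta(p_j,u)$---which runs in $O(mn)$ time when we maintain, for each candidate $u$, its running minimum distance to the already-selected foci. Let $P = \{p_1,\dots,p_m\}$, $R = U \setminus P$, and let $r^*$ denote the optimal value of \textsc{simple focus selection}, attained by some focus set $P^*$ with optimal weights $a^*$.

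First I would establish $d^* := \max_{u \in R}\min_i\delta(p_i,u) \leq 2 r^*$ by the standard Gonzalez pigeon-hole argument: the $m+1$ points in $P \cup \{u^*\}$ (with $u^*$ attaining $d^*$) are pairwise at distance at least $d^*$, so after clustering $U$ by closest focus in $P^*$, two of them must collide in a single cluster. If both are responsibilities in $R^*$, Observation~\ref{thm:upperbound} applied with $a^*$ gives $\delta \leq 2 r^*$; if one is a focus $p^*_j$ and the other its assigned responsibility $w$, then $\delta(p^*_j,w) = \min_l\delta(p^*_l,w) \leq a^*\phi^*(w) \leq r^*$. Either way, $d^* \leq 2 r^*$.

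The crux is then to exhibit $a \geq 0$ with $\|a\|_1 = 1$ such that $\max_{u \in R} a\phi(u) \leq 2 r^*$. My first attempt is the single-focus strategy $a = e_k$, which yields $r_\opt(P) \leq \min_k \max_{u \in R}\delta(p_k,u)$. Since $|P| = |P^*| = m$, either $P = P^*$ (in which case $r_\opt(P) = r^*$ trivially) or at least one Gonzalez focus $p_k \in P$ lies in $R^*$; Observation~\ref{thm:upperbound} with $a^*$ then bounds $\delta(p_k, u) \leq a^*\phi^*(p_k) + a^*\phi^*(u) \leq 2 r^*$ for every $u \in R \cap R^*$. The main obstacle is handling $u \in R \cap P^*$---an optimal focus missed by Gonzalez and now serving as a responsibility---where the naive Observation~\ref{thm:upperbound} bound on $a^*\phi^*(u)$ is not controlled by $r^*$. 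I would route $\delta(p_k, u)$ through the Gonzalez focus $p_{i^*}$ closest to $u$ (within $d^* \leq 2 r^*$ by the definition of $d^*$), combining the triangle inequality with a second application of Observation~\ref{thm:upperbound} to bound $\delta(p_k, p_{i^*})$ and absorb the additive slack into the $2 r^*$ budget; making this last step tight---rather than losing an extra factor and settling for a $3$-approximation---is what I expect to be the hard part of the argument.
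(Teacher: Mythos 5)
Your plan replaces the paper's algorithm with Gonzalez's farthest-point-first heuristic, and the obstacle you flag at the end---optimal foci that your procedure demotes to responsibilities---is not a technical wrinkle but a fatal flaw: the heuristic's approximation ratio for this objective is unbounded, so no amount of routing can complete the proof. Concretely, take $m\geq 3$ and $n\geq m$, let $A$ be a cluster of $n+1$ points at pairwise distance $\varepsilon$, let $B$ be a cluster of $m-1$ points at pairwise distance $\varepsilon/2$, and set every $A$--$B$ distance to $D\gg\varepsilon$; this is a metric. Taking one point of $A$ together with all of $B$ as foci and putting all weight on the $A$-focus leaves only $A$-points as responsibilities, so $r^*\leq\varepsilon$. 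Gonzalez, however, after choosing one point from each cluster, sees every remaining $B$-point at distance $\varepsilon/2$ from the chosen foci and every remaining $A$-point at distance $\varepsilon$, so all subsequent picks land in $A$ and $m-2$ points of $B$ survive as responsibilities. For the resulting focus set $P$, write $t$ for the weight on the unique $B$-focus: a leftover $B$-responsibility has remoteness at least $(1-t)D$, an $A$-responsibility has remoteness at least $tD$, so $r_{\mathrm{opt}}(P)\geq D/2$ while $r^*\leq\varepsilon$. Your proposed repair fails on exactly this instance: to control $\delta(p_k,u)$ for $u\in R\cap P^*$ you route through the Gonzalez focus $p_{i^*}$ nearest to $u$, but the leg $\delta(p_k,p_{i^*})$ is a distance between two Gonzalez foci, which farthest-point selection makes \emph{large} by design (here it is $D$), and \cref{thm:upperbound} cannot bound it because $a^*\phi^*(p_{i^*})\leq r^*$ holds only for optimal \emph{responsibilities}, not for optimal foci or points near them. (Your first two steps---the pigeonhole bound $d^*\leq 2r^*$ and the bound $\delta(p_k,u)\leq 2r^*$ for $u\in R\cap R^*$ when some $p_k\in R^*$---are sound, but they address the easy part.)

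The underlying mismatch is that Gonzalez optimizes $\max_u\min_i\delta(p_i,u)$, whereas the objective~\eqref{eq:minrad1} is $\min_a\max_u\sum_i a_i\delta(p_i,u)$ for a \emph{single} weight vector on the simplex; spreading the foci apart, which is exactly what farthest-point selection does, makes the latter large even when the former is tiny. The paper's algorithm is dual in spirit to yours: using \cref{thm:upperbound} it shows that the $n$ optimal responsibilities are pairwise within $2r^*$ and each within $r^*$ of some optimal focus, so at least $n$ of the $n+m$ points have $n$-radius at most $2r^*$; it picks as $p_1$ any point whose $n$-radius is no greater than that of at least $m$ others (found in \BigOh{mn} time), promotes the $m-1$ points \emph{farthest} from $p_1$ to foci, and puts all weight on $p_1$. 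The $m-1$ extra foci are thus spent on deleting the worst outliers from the responsibility set rather than on covering them---the one move your approach is structurally unable to make.
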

\begin{proof}
    For any input point $u$, its \emph{$k$-radius}, the minimum radius
    of a ball containing $k$ other points $u$,
    can be found in linear time.\endnote{For example, using
    \citeauthor{Chazelle:2000}'s soft heaps~\citep{Chazelle:2000}, or the
    classic algorithm of \citet{Blum:1973}.}
    Let $p^*$ be the optimal focal tuple, whose minimal radius is
    $r\mathrlap{^*}_\opt$,
    and let $u$ and $v$ be any non-focal
    points. By the definition of a radius, we have $a\phi(u),a\phi(v)\leq
    r\mathrlap{^*}_\opt$ and from \cref{thm:upperbound} it follows that
    $\delta(u,v)\leq2r\mathrlap{^*}_\opt$.
    Also, because $a\phi(u)\leq r\mathrlap{^*}_\opt$, $a\geq 0$ and
    $\|a\|_1=1$, there must be at least one $p^*_i$ for which
    $\delta(u,p^*_i)\leq r\mathrlap{^*}_\opt$.
    Since each of the $n$ nonfocal points would be within a distance of
    $r\mathrlap{^*}_\opt$ of at least one focus, and within a distance of
    $2r\mathrlap{^*}_\opt$ of the other $n-1$ nonfocal points, it follows that
    there must be at least $n$ points with $n$-radius at
    most~$2r\mathrlap{^*}_\opt$. To select one such point $p_1$ as the first
    focus, among all the $n+m$ points, simply use a point whose $n$-radius is
    no greater than that of at least~$m$ others, for a running time of
    \BigOh{mn}. All but $m-1$ points fall within $2r\mathrlap{^*}_\opt$ of
    $p_1$, and the furthest $m-1$ points become $p_2,\dots,p_m$. Even for
    $a_1=1$, we have a 2-approximation, so $r_\opt\leq 2r\mathrlap{^*}_\opt$.
\end{proof}

\noindent
Though a 2-approximation is a positive result, it seems likely that a radius
doubling would drastically increase the overlap probability. For many
approximation algorithms, one does of course see better results in
practice, but as a heuristic, the algorithm of \cref{thm:radapprox} is
somewhat unsatisfying: It approximates the optimal $m$-ambit by a \emph{ball},
and then picks up the \mbox{$m-1$} potential stragglers that fall outside it,
giving them a free pass by promoting them to foci. There is no expectation
that they will actually perform any work in this capacity.
One natural alternative, which works for full \textsc{focal selection} and
which seems to yield acceptable results in practice, is the following
two-round or \emph{runoff} heuristic:

\medskip\noindent
\begin{pseudo}
use a large candidate set of foci, and solve \eqref{eq:optcoefflp} \nl
use only the $m$ candidates $p_i$ with highest $|a_i|$ as foci, and re-solve
\eqref{eq:optcoefflp}
\end{pseudo}

\medskip\noindent
For example, one could in the first round let all the responsibilities, or a
random sample, act as candidate foci, while at the same time acting as
responsibilities. In the second round, the selected foci are no longer
responsibilities, and so do not constrain the radius.
The value of $m$ could either be fixed, or determined by a threshold for
$|a_i|$. Alternatively, one could have both such a threshold \emph{and} a cap
on $m$, and run multiple rounds using the threshold, until the number of foci
falls below~$m$.

For the case where the set of candidate foci is separate from the
responsibilities, the rationale is that rounding down low-magnitude
coefficients to zero will have limited impact on the objective.
Selecting foci from among the responsibilities could be more risky, as the
constraints on the radius will change between the rounds. The number of
changes will typically be low, as $m\ll n$, but even a single dropped
responsibility could change the nature of the problem, leading to another set
of foci outperforming those originally selected.
Even so, if the objective value of the first round is good, there may be
reason to believe it will be in the second round as well: The original
rationale still applies to the objective function itself, and dropping
constraints can only improve the optimum.

\subsection{Non-Linear Metric-Preserving Remoteness}
\label{sec:nonl}

As hinted at in the introduction to \cref{sec:ambitregiontype}, we now leave
the assumption of linearity behind, while remaining in the realm of metric
spaces. In other words, for $m=1$, the structure-preserving map
$f:\R_+\to\R_+$ will now be an arbitrary \emph{metric-preserving function}, or
\emph{metric transform} which takes any metric $\delta$ to another metric
$f\circ\mkern1mu\delta$. The image of $\A$ in $\B$ will be a metric space,
even though $\B$ need not be. \Citeauthor{Deza:2013} provide an extensive list
of such functions~\citep[ch.\,4]{Deza:2013}, while
\citeauthor{Dobos:1995}~\citep{Dobos:1995,Dobos:1998} and \citet{Corazza:1999}
deal with the topic in more depth. The Cantor function (see \cref{fig:cantor})
is a rather forbidding example, but there are many quite straightforward
metric-preserving functions such as $x/(1+x)$ or~$x^\alpha$, for
$\alpha\in[0,1]$.

Metric transforms of multiple parameters also exist, giving us the desired
remoteness $f:\R_+^m\to\R_+$ for $m>1$. One could use any isotone
norm, for example~\citep{Deza:2013,Dobos:1998,Borsik:1981}. Vector-valued
transforms can, in turn, be viewed as ensembles of scalar-valued ones, as in
$f(x) \deq \tran{[f_1(x)\,\dots\,f_d(x)]}$, making
$f\circ \delta^m$ a vector-valued metric, which lets us use multiple radii as
in the linear case.

Metric-preserving functions must start at the origin,\footnote{Strictly
speaking, they must have $f^{-1}(0) = \set{0}$.} they must be subadditive
(e.g., concave), and any such function that is non-decreasing is, in fact, a
metric-preserving function~\citep[p.\,9]{Dobos:1998}.
A metric-preserving function need not be monotone; to preserve relevance,
however, our map must be. If, rather than non-increasing subadditive,
$f$ is non-\emph{increasing} \emph{super}-additive, our target inequality
could be
\[
    f(x) - f(y) \geq f(z)
    \qquad\text{instead of}\qquad
    f(x) + f(y) \geq f(z)\mathrlap{\,,}
\]
where $x_i=\delta(u,p_i)$, $y_i=\delta(u,q)$ and $z_i=\delta(p_i,q)$. This
follows from the fact that $x \leq y + z$. The general overlap check
becomes
\[
    r \pm f(s,\dots,s) \geq f(z)\,,
\]
using $+f(\blank)$ to transform $s$ for the non-decreasing case, and
$-f(\blank)$ for the non-increasing case.

There are many ways to construct monotone metric-preserving functions. We have
already used the fact that the negation of a non-increasing superadditive
function is non-decreasing subadditive. If we are faced with a non-decreasing
superadditive or non-increasing subadditive function, we could negate their
parameters as well; if $f(x)$ is non-decreasing superadditive, for example,
$g(x)=-f(-x)$ is be non-decreasing subadditive.
Indeed, we can compose non-increasing or non-decreasing sub- or
super-additive function in various ways, with predictable
results.\endnote{See, e.g., Theorem~7.2.1 of \citet{Kuczma:2009}. He discusses
concave and convex functions in particular, but the results generalize.}

A practical approach to non-linear ambits might be to construct
single-parameter scalar transforms for each focal distance, and then combining
those with a single multiparameter function.
One particularly manageable version of this would be a linear combination of
some transform that is shared by all focal distances. For example, with the
well-known metric-preserving power transform~\citep[][p.\,81]{Deza:2013}, the
remoteness map becomes
\begin{equation}
    \label{eq:rbfn}
    f(x) = \sum_{i=1}^m a_ix\mathrlap{_i}^\alpha\,,
\end{equation}
with the overlap check $r + \|a\|_1s^\alpha \geq f(z)\,$, where
$\alpha\in[\mkern1.25mu 0,1]$.
We could easily have a separate $\alpha_i$ or even some
other transform for each $x_i$ (applying them individually to $s$).
Perhaps less obviously, we could have ambits in the form of
\citeauthor{Blinn:1982}-style metaballs~\citep{Blinn:1982}, using
\citeauthor{Blinn:1982}'s original density function (inverted, to get
remoteness, $a, b > 0$):
\begin{equation}
    \label{eq:blinn}
    f(x) = \sum_{i=1}^m \bigl( 1-b_i e^{-a_ix_i} \bigr)
\end{equation}
This is the kind of ambit seen in \cref{fig:toc}.

\begin{observation}
    \label{obs:blinn}
    The remoteness map in \eqref{eq:blinn} is metric-preserving.
\end{observation}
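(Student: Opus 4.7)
The plan is to reduce the statement to a one-dimensional verification, exploiting the fact that $f$ is a coordinatewise sum of terms $g_i(t) \deq 1 - b_i e^{-a_i t}$, each depending on a single argument. Since monotonicity in each coordinate and subadditivity both pass through sums, it suffices to analyze a single $g_i$. The target is to re-run the triangle-inequality chain used for the general overlap check at the end of \cref{sec:nonl}, namely $f(z) \leq f(x + s\mathbf{1}) \leq f(x) + f(s\mathbf{1})$, which needs only monotonicity and subadditivity of $f$.

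First I would record monotonicity: differentiating yields $g_i'(t) = a_i b_i e^{-a_i t} > 0$, so each $g_i$ is strictly increasing, and $f$ is strictly increasing in every coordinate. For subadditivity, a short rearrangement gives the identity
\[
    g_i(s) + g_i(t) - g_i(s+t) = (1 - b_i) + b_i(1 - e^{-a_i s})(1 - e^{-a_i t}),
\]
whose second summand is plainly nonnegative (each factor lies in $[0,1]$ for $s,t \geq 0$) and whose first is nonnegative as soon as $b_i \in (0,1]$, the natural range for Blinn density coefficients. A slightly cleaner alternative is to introduce the shifted map $\tilde f(x) \deq \sum_i b_i(1 - e^{-a_i x_i}) = f(x) - f(0)$: it is a sum of concave, nondecreasing functions that vanish at zero, so it is subadditive by the standard characterization cited just before the statement, and it differs from $f$ by an additive constant that can be absorbed into any ambit radius.

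The main obstacle is bookkeeping around that constant offset $f(0) = m - \sum_i b_i$, which is only zero when $\sum_i b_i = m$; in the general case, the function that literally satisfies all three of the paper's conditions (subadditive, nondecreasing, vanishing at the origin) is $\tilde f$ rather than $f$. This discrepancy is cosmetic for the ambit application, since shifting the remoteness map by a constant is equivalent to shifting the radius, but the writeup should either state the parameter convention $b_i \in (0,1]$ explicitly or phrase the result in terms of $\tilde f$. Once that convention is pinned down, combining monotonicity with subadditivity against $z_i = \delta(p_i,q) \leq \delta(p_i,u) + \delta(u,q)$ yields the required overlap inequality and completes the proof.
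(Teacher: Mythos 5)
Your proof is correct, and it follows the same overall route as the paper's --- reduce to the univariate summands and check that each is non-decreasing and subadditive --- but your explicit identity does real work that the paper's closure argument does not. The paper decomposes $1-b_ie^{-a_it}$ into the constant $1$ and the term $-b_ie^{-a_it}$ and asserts that each is non-decreasing subadditive; the latter claim is false as stated, since $-be^{-at}$ is negative at the origin and subadditivity of any $g$ forces $g(0)\ge 0$. Only the recombined summand is subadditive, and only when $b_i\le 1$ --- precisely the constraint your identity $g_i(s)+g_i(t)-g_i(s+t)=(1-b_i)+b_i(1-e^{-a_is})(1-e^{-a_it})$ exposes (for the sum $f$ itself the sharp condition is $\sum_i b_i\le m$, since the product terms can vanish simultaneously). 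Your ``main obstacle'' is likewise a genuine defect of the paper's proof rather than bookkeeping: the opening claim that $f^{-1}(0)=\set{0}$ is ``easily verified'' fails whenever $\sum_i b_i\ne m$, because $f(0)=m-\sum_i b_i$. The two repairs you propose --- pinning down the parameter convention, or passing to $\tilde f=f-f(0)$, which genuinely is a sum of concave non-decreasing functions vanishing at zero and hence subadditive, with the constant offset absorbed into the radius --- are exactly what is needed for the observation to hold as stated. The concluding overlap chain via the triangle inequality is the standard application and is fine.
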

\begin{proof}
    It is easily verified that $f^{-1}(0)=\set{0}$, so a sufficient condition
    would be for $f$ to be non-decreasing subadditive, which in turn may be
    ensured by its summands having this property. The property obviously holds
    for the constant summands~1. The exponential $e^x$ is non-decreasing
    superadditive (indeed, convex) and so $-e^{-x}$ is non-decreasing
    subadditive (concave), as is $-be^{-ax}$ (for $a, b > 0$).
\end{proof}
If we use a linear combination such as the one in \eqref{eq:rbfn} or
\eqref{eq:blinn} as a tool for function approximation, i.e., adapting the
region shape to the layout of its responsibilities, it can be interpreted as a
\emph{radial basis function network}~\citep{Lowe:1988}, generalized from
euclidean to arbitrary metric spaces, but restricted to non-decreasing
subadditive radial basis functions. Alternatively, if we view ambits as
generalized metaballs, the transform would correspond to the metaball
\emph{bump function}; different parameter values would vary what
\citeauthor{Blinn:1982} calls \emph{blobbiness} (see \cref{fig:blobbiness} for
an illustration).
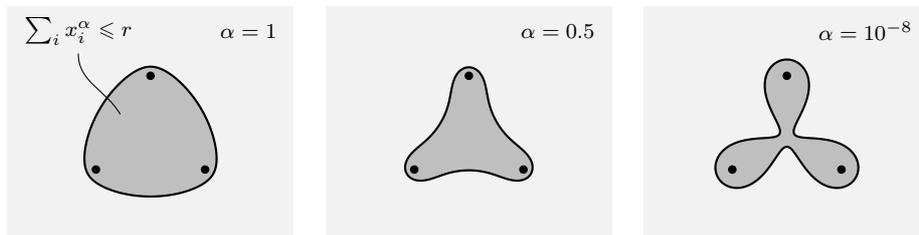
\begin{figure}
\centering
\def\inside{lightgray}
\def\outside{lightshade}
\def\xyscale{1.7}%
\def\D{.5}%
\def\outline{(-\D - .1, -\D - .2) rectangle +(3*\D + .2, 2*\D + .4)}
\footnotesize
\begin{tikzpicture}[larger, semithick, xscale=\xyscale, yscale=\xyscale]
\fill[use as bounding box, \outside] \outline;
\begin{scope}

    \clip \outline;

    \draw[thick, fill=darkshade]
        plot file{powerlipse1.dat} -- cycle;
    \draw plot[only marks, mark=*, mark size=0.5882352941176471pt]
        file{pwlfoci.dat};

\draw (2*\D + .05,.55) node[left=0] {$\alpha=1$};
\draw (-\D - .05,.548) node[right=0] (lab) {$\sum_i x_i^\alpha\leq r$};
\draw[thin] (.07,0.05) to[out=120, in=-90] (lab);

\end{scope}
\end{tikzpicture}
\hfill
\begin{tikzpicture}[larger, semithick, xscale=\xyscale, yscale=\xyscale]
\fill[use as bounding box, \outside] \outline;
\begin{scope}
    \clip \outline;
    \draw[thick, fill=darkshade]
        plot file{powerlipse2.dat} -- cycle;
    \draw plot[only marks, mark=*, mark size=0.5882352941176471pt]
        file{pwlfoci.dat};
\draw (2*\D + .05,.55) node[left=0] (xxr) {$\alpha=0.5$};

\end{scope}
\end{tikzpicture}
\hfill
\begin{tikzpicture}[larger, semithick, xscale=\xyscale, yscale=\xyscale]
\fill[use as bounding box, \outside] \outline;
\begin{scope}

    \clip \outline;
    \draw[thick, fill=darkshade]
        plot file{powerlipse3.dat} -- cycle;
    \draw plot[only marks, mark=*, mark size=0.5882352941176471pt]
        file{pwlfoci.dat};

\draw (2*\D + .05,.55) node[left=0] (xxr) {$\alpha=10^{-8}$};

\end{scope}
\end{tikzpicture}
\caption{Ambits of varying blobbiness, using the power transform as bump
function}%
\label{fig:blobbiness}%
\end{figure}
The blobbiness could be adjusted to maximize the lower bound---for example,
one could use some heuristic optimization procedure to set $\alpha$, where the
objective function is calculated in each iteration, as follows:

\medskip\noindent
\begin{pseudo}
for each focal distance $x_i$, calculate $x\mathrlap{_i}^\alpha$ \nl
using the transformed focal distances, calculate $\ell'_\opt$ using
\eqref{eq:optcoefflp} \nl
\kw{return} $(\ell'_\opt)^{1/\alpha}$
\end{pseudo}

\medskip\noindent
The reason for the inverse transformation in the last line is that the lower
bound computed by our linear program is for $\delta(q,u)^\alpha$, and in order
for the objective values to be comparable for different values of $\alpha$, we instead
want the lower bound for $\delta(q,u)$. There is a link here to the
optimization of power tranforms used by \citeauthor{Skopal:2007} for
approximate search using metrics or almost metric
distances~\citep{Skopal:2007}. Other transforms could be used in the same
manner, as long as they are invertible. Preliminary experiments indicate that
such an approach may yield improvements over a purely linear remoteness.

\subsection{Repurposing Region Definitions More Generally}
\label{sec:preserve}

Consider a query and a region respectively defined as
\begin{equation}
    Q = \set{x : \varphi(x)}
    \quad\text{and}\quad
    R = \set{x :\psi(x)}\eqcomma
\end{equation}
where $\varphi(x)$ and $\psi(x)$ are logical formulas with their non-logical
symbols drawn from some signature $\Sigma$.\footnote{See
\cref{sec:modeltheory} for relevant terminology.} We wish to know whether $Q$
and $R$ intersect, and even though we might not be able to determine this
perfectly, for correctness we must \emph{detect} it (cf.~\cref{rem:larger}),
i.e., we have some \emph{necessary condition} $\beta$, so that
\begin{equation}
    \forall x
    \bigl(
    \varphi(x)\to\psi(x)\to\beta
    \bigr)\eqdot
    \label{eq:overlapdetector}
\end{equation}
In other words, if $Q$ and $R$ intersect then $\beta$ must be true.

We would like to repurpose the region definition $\psi$ to extend our supply
of regions, while still being able to use our overlap detector $\beta$.
We do this by \emph{reinterpreting} $\psi$ in a manner that preserves the
implication expressed by \eqref{eq:overlapdetector}.
Let $\A$ be the $\Sigma$-structure underlying our formulas, i.e., our actual
domain of discourse.
Our region $R$ is then the set $\psi(\A)$ of objects in $\dom\A$ that satisfy
$\psi$. Now consider a map $h:\A\to\B$ to some \emph{other} $\Sigma$-structure
$\B$. Because $\B$ has the same signature as $\A$, we can reuse our definition
to produce a new set $\psi(\B)$. The \emph{preimage}
$h^{-1}\bigl[\psi(\B)\bigr]$ will be a new region in $\A$, so if we have
several maps such as $h$, we end up with several new regions, all described by
the original $\psi$.

Even a single map $h$ can yield multiple new regions, however. Chances are,
the region definition is \emph{parameterized}, and $\B$ may offer new objects
we can use as parameters. For example, we may start out with $\psi(\A,a)$, for
some $a$ in $\dom\A$. Then $h$ yields one new region
$h^{-1}\bigl[\psi(\B,b)\bigr]$ for \emph{every applicable $b$} in $\dom\B$.
Given that $\dom\B$ might be much larger than $\dom\A$, this could increase
our options considerably. This might be true even if our parameter originates
in $\A$, as it could be specified by \emph{multiple} $\A$-objects
$a_1,\dots,a_m$, producing regions of the form
$h^{-1}\bigl[\psi(\B,f(a_1,\dots,a_m))\bigr]$, for some map
$f:\A^\hilc{m}\?\to\B$.

As long as we use appropriate \emph{structure-preserving} maps, a
reinterpreted version of $\beta$, presumably with the same parameters as
$\psi$, will still work in $\B$.
The following \lcnamecref{thm:ambits:general} describes some general
conditions under which this is the case. For simplicity, I assume a single
parameter originating in each of $\A$ and $\B$, though the result generalizes
to multiple parameters in the obvious way.\footnote{Recall that
$\Delta_m(a)=\seq{a,\dots,a}$.} (See \cref{ex:metricpreservation} for an
application of the theorem.)

\begin{theorem}\label{thm:ambits:general}
Let $\A$ and $\B$ be structures with signature $\Sigma$, which contains some
unary relation symbol $S$, and let $\varphi$, $\psi$, $\alpha$ and $\beta$ be
$\Sigma$-formulas.
Let
\begin{equation}
    Q=\varphi(\A)
        \quad\text{and}\quad
    R=h^{-1}\bigl[
        \psi(\B, f(a_1,\dots,a_m), b)
    \bigr]\eqcomma
    \label{eq:regionsfromformulas}
\end{equation}
where $h=f\circ\Delta_m$, for some homomorphism
$f:\A^\hilc{m}\?\to\B$, and $a_1,\dots,a_m\in S^\A$.
Suppose that
$T_\A$ and $T_\B$ are theories true in $\A$ and $\B$,
respectively, where
\begin{axioms}{A}
\item
    $T_\A$ is Horn,
    $\varphi$ existential, and $\alpha$ positive existential;
\item
    $T_\A$ entails
    $\forall xy\,\bigl(
        S(y)\to
        \varphi(x)\rightarrow\alpha(x,y)
        \bigr)
        $;
    \label{ax:entaila}
\item
    $\spc{T_\A}{T_\B}$ entails
    $\forall xyz\,\bigl(
        S(y) \to
        \alpha(x,y)
        \to
        \psi(x,y,z)\rightarrow\beta(y,z)
        \bigr)
        $.
    \label{ax:entailb}
\end{axioms}
Then $\beta$ is true of $f(a_1,\dots,a_m)$ and $b$ in $\B$ whenever $Q$ and
$R$ intersect.
\end{theorem}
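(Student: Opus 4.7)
The plan is to track a witness $x \in Q \cap R$ through the three axioms, using $f$ to transport the pertinent formulas from $\A$ into $\B$. First I would unpack the hypothesis: by \eqref{eq:regionsfromformulas}, an element of $Q\cap R$ is some $x \in \dom\A$ with $\A \models \varphi(x)$ and, because $R$ is a preimage under $h$, with $\B \models \psi(h(x), f(a_1,\dots,a_m), b)$. Since each $a_i\in S^\A$, axiom~\ref{ax:entaila} applied in $\A$ (where $T_\A$ holds) yields $\A \models \alpha(x,a_i)$ for every $i$.

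Next I would lift $\alpha$ from $\A$ to the direct power $\A^m$ along the diagonal $\Delta_m$. The positivity of $\alpha$ is what makes this go through: writing $\alpha(u,v) \equiv \exists \bar z\,\theta(u,v,\bar z)$ with $\theta$ positive quantifier-free, the witnesses for each $\A \models \alpha(x,a_i)$ can be assembled componentwise into a single witness showing $\A^m \models \alpha(\Delta_m(x),\, \langle a_1,\dots,a_m\rangle)$. Atomically, $S^{\A^m}(\langle a_1,\dots,a_m\rangle)$ also holds componentwise. Because $f : \A^m \to \B$ is a homomorphism and atomic formulas and positive existential formulas are preserved by homomorphisms, I can push both facts forward into $\B$, obtaining $\B\models S(f(a_1,\dots,a_m))$ and $\B\models \alpha(h(x), f(a_1,\dots,a_m))$ where $h = f\circ\Delta_m$.

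At this point $\B$ satisfies all three antecedents $S(y)$, $\alpha(x,y)$, $\psi(x,y,z)$ of axiom~\ref{ax:entailb} at the instance $x := h(x)$, $y := f(a_1,\dots,a_m)$, $z := b$, and so the entailment yields $\B \models \beta(f(a_1,\dots,a_m),b)$, as desired. The main obstacle is justifying this final invocation: axiom~\ref{ax:entailb} is entailed by the combined theory $T_\A \cup T_\B$, not by $T_\B$ alone, so one must argue that the instance we need survives inside $\B$. The natural route is to observe that $\A^m \models T_\A$ because $T_\A$ is Horn and Horn theories are preserved by direct products, and then to note that the specific instance we need involves only the image $f(\A^m)\subseteq\B$, where both the positive existential premise $\alpha$ and the atomic premise $S$ are carried over by $f$; since $\beta$ appears only in the conclusion and $\psi$ is witnessed in $\B$ directly, the entailment can be applied to the interpretation at hand without requiring $\B$ itself to model $T_\A$ globally.

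Everything else is routine: the first and third steps are merely instantiating hypotheses, and the lifting step is a standard preservation argument. The delicate part, and the one worth writing out carefully, is the bookkeeping that shows the semantic entailment of axiom~\ref{ax:entailb} really does apply to the tuple $(h(x), f(a_1,\dots,a_m), b)$ in $\B$.
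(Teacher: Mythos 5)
Your overall plan---track a witness of $Q\cap R$, move it into $\B$ along $h$, and discharge \cref{ax:entailb} there---matches the paper's, but the middle step is where your argument breaks and where the paper does something different. You apply \cref{ax:entaila} in $\A$ to get $\A\models\alpha(x,a_i)$ for each $i$ separately, and then claim the $m$ witnesses can be ``assembled componentwise'' into $\A^m\models\alpha(\Delta_m(x),\tup{a_1,\dots,a_m})$. That assembly is not sound for positive existential formulas: they may contain disjunctions, and a disjunction holds in a direct product only if a \emph{single} disjunct holds in every coordinate. If $\alpha$ is, say, $\theta_1\lor\theta_2$ with $\theta_1$ witnessed in coordinate $1$ and $\theta_2$ in coordinate $2$, then neither disjunct holds in $\A^2$ at the assembled tuple, and $\alpha$ fails there. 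Only Horn formulas are guaranteed to pass from all factors to the product---which is precisely why the theorem assumes $T_\A$ is Horn, a hypothesis your proof never actually uses at the point where it is needed (nor do you use the existentiality of $\varphi$).

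The paper avoids the assembly entirely by applying \cref{ax:entaila} \emph{inside} $\A^m$ rather than in $\A$: since $T_\A$ is Horn it is preserved in products and hence true in $\A^m$; since $\varphi$ is existential it is preserved by the diagonal embedding, so $\varphi(\Delta_m(a))$ holds in $\A^m$; and $S(\tup{a_1,\dots,a_m})$ is atomic and holds coordinatewise. Axiom~\cref{ax:entaila}, being a consequence of $T_\A$, then yields $\alpha(\Delta_m(a),\tup{a_1,\dots,a_m})$ directly in $\A^m$, after which your remaining steps (pushing $\alpha$ and $S$ forward along the homomorphism $f$, reading $\psi$ off the definition of $R$, and invoking \cref{ax:entailb} in $\B$) go through as you describe. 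Your closing worry about the final entailment is also moot: the instance needed has all of its antecedents verified in $\B$, where the relevant theory holds, so no global transfer of $T_\A$ into $\B$ is required. To repair your write-up, replace the componentwise lifting of $\alpha$ with the product/embedding argument above.
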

\begin{proof}
    Assume that there is some element $a$ that lies in both $Q$ and $R$.
    Let $\bar a=\tup{a_1,\dots,a_m}$. Because $S$ is true of each of
    $a_1,\dots,a_m$ in $\A$, $S(\bar a)$ is true in $\A^\hilc{m}$ and, because
    $f$ is a homomorphism, $S(f(\bar a))$ is true in $\B$.
    Because~$T_\A$ is Horn, it is preserved in products~\citep[Cor.\
    9.1.6(a)]{Hodges:1993},
    and is consequently true in $\A^\hilc{m}$; $\varphi$ is existential, and
    therefore preserved by embeddings~%
    \citep[Thm.\ 2.4.1]{Hodges:1993}%
    , including the diagonal
    $\Delta_m$ (cf.\ \cref{ex:morphisms}\ref{it:diagonal}).
    By assumption, $a\in Q$, so by definition $\varphi(a)$ is true in~$\A$,
    and therefore $\varphi(\Delta_m(a))$ is true in~$\A^\hilc{m}$.
    From \cref{ax:entaila} it follows that $\alpha(\Delta_m(a),\bar a)$ is
    true in~$\A^\hilc{m}$.
    Because $\alpha$ is positive existential, it is preserved by the
    homomorphism
    $f$~\citep[Thm.\ 2.4.3]{Hodges:1993}, and thus $\alpha(h(a),f(\bar a))$ is
    true in~$\B$.
    Because $a\in R$, $\psi(h(a),f(\bar a),b)$ is true in $\B$ \emph{by
    definition}.
    From \cref{ax:entailb}, we conclude that $\beta(f(\bar{a}),b)$ is true
    in~$\B$.
\end{proof}

\begin{remarks}
    \begin{paras}
    \item
    The role of $S$ is to restrict the quantification of $y$, for cases where
    there are effectively different sorts of objects (cf.\
    \cref{rem:structvariations}\ref{rem:it:multisorted}). Presumably,
    $\varphi$ and $\psi$ have whatever such restrictions they need, but if
    $\alpha$ is to be positive, it cannot contain an implication, and so this
    predicate is extracted.
    \item
    Values or parameters that figure in the formulas, beyond~$x$, $y$ and
    $z$, such as any used in $\varphi$, may be handled by adding constants to
    $\Sigma$~\citep[cf.][\null 1.4]{Hodges:1993}.
    \item
    The details may be varied using other preservation results from model
    theory. For example, if $f$ is surjective, $\alpha$ need only be
    positive~\citep[Thm.\ 2.4.3]{Hodges:1993}.
    If $f$ is a \emph{strong} homomorphism, then $\alpha$ could be any
    quantifier-free formula without equality; if $f$ is also injective (an
    embedding) or surjective, respectively, the restrictions on equality and
    quantification could be removed~\citep[p.\,96]{Enderton:2001}. An
    isomorphism (a bijective strong homomorphism) would preserve \emph{all}
    formulas; in that case, $R$ would simply be an intersection, isomorphic to
    $\bigcap_{i=1}^m\psi\bigl(\A,a_i,h^{-1}(b)\bigr)$.
    \end{paras}
\end{remarks}

\begin{example}
    \label{ex:metricpreservation}
    Consider metric or quasimetric spaces in light of
    \cref{thm:ambits:general} A signature for these might be
    $\Sigma\deq\set{\delta,+,\leq,U,K}$, where $\delta$ is a binary function
    symbol representing the distance function, $+$ and $\leq$ represent
    addition and ordering of the distances, and $U$ and $K$ are unary relation
    symbols used for restricting the values of variables to points and
    distances (comparison values), respectively.
    The triangle inequality, for example, could be expressed as follows:
    \[
        \forall xyz\,\bigl(
        U(x)\land U(y)\land U(z)
        \rightarrow
        \delta(x,z)\leq\delta(x,y)+\delta(y,z)
        \bigr)
        \eqcomma
    \]
    In the same manner we could, if we wished, represent all the metric
    axioms, though probably without fully axiomatizing the real numbers.
    (Indeed, we might not even be using real-valued distances; cf.\
    \cref{ex:cmpfuncs}\ref{ex:cmpfuncs:generalized}.)
    With some care, the result, including the axioms for ordering and
    addition, may be formulated as a Horn theory,\footnote{E.g., turning
    $U(x)\land U(y)\to(\delta(x,y)=0\to x=y)$ into $U(x)\land
    U(y)\land\delta(x,y)=0\to x=y$.} so it could play the role of $T_\A$.

    Now consider two metric balls $R$ and $Q$ with centers $p$ and $q$ and
    radii $r$ an $s$, respectively. If these intersect, we have
    $\delta(p,q)\leq r + s$. We can dissect and repurpose this overlap check
    using \cref{thm:ambits:general}, as follows.
    Let $\varphi$, $\psi$, $\alpha$, and $\beta$ be defined as:
    \begin{IEEEeqnarray*}{rCl+rCl}
        \varphi(x) &=& U(x) \land \delta(x,q) \leq s
        &
        \alpha(x,y) &=& \delta(y,q) \leq \delta(y,x) + s\\
        \psi(x,y,z) &=& U(x) \land K(z) \land \delta(y,x) \leq z
        &
        \beta(y,z) &=& \delta(y,q) \leq z + s
    \end{IEEEeqnarray*}
    The first formula, $\varphi(x)$, represents a query ball $Q$ with center
    $q$ and radius $s$, where both $q$ and $s$ are treated as constants, drawn
    from $\Sigma$. It is easy to verify that the second formula,
    $\alpha(x,y)$, is entailed by $T_\A$ (i.e., the metric axioms) and
    $\varphi(x)$, for any point $y$, i.e., any object for which $U(y)$ is
    true.
    If we use $U$ instead of $S$ in \cref{thm:ambits:general}, we
    need not make this last caveat an explicit part of~$\varphi$.
    The region $R$ is described by $\psi(x,y,z)$, where $x$ is the potential
    member, $y$ is the center point and $z$ is the radius. We explicitly
    constrain $x$ and $z$, but $y$ is already constrained in
    \cref{ax:entailb}. Finally, combining $\alpha(x,y)$, $\psi(x,y,z)$ and
    some basic axioms $T_\B$ describing ordering and addition of
    comparisons (e.g., those of an ordered monoid), we can derive
    $\beta(y,z)$, as required.

    We can recreate the original overlap check between two balls by setting
    $m$ to~$1$ and letting $f$ be the identity on $\A$. If, however, $m\geq 2$
    and $\A\neq\B$, our region $R$ will have \emph{multiple} points $p_i$ in
    $U^\A$, collectively corresponding to its center, and the radius $r$ will
    be drawn from $K^\B$.
    The overlap check then becomes
    \begin{equation}
        \delta^\B(f(p_1,\dots,p_m),f(q,\dots,q))\leq r + f(s,\dots,s)\eqdot
    \end{equation}
    Because $f$ is a homomorphism and because
    $\delta$ acts elementwise in $\A^m$ (see
    \cref{def:cartesianproduct,def:homomorphism}), we can rewrite this as
    \begin{equation}
        f(\delta^\A(p_1,q),\dots,\delta^\A(p_m,q))\leq r + f(s,\dots,s)\eqdot
    \end{equation}
    As an example, take the \emph{egglipse}, a trifocal relative of the
    ellipse (and, indeed, an example of a linear ambit).
    To repurpose our overlap check, we set $m$ to~$3$ and let $f$ sum the
    distances,\endnote{For non-distance arguments, $f$ may simply act as the
    identity.}
    yielding
    \begin{equation}
        \delta^\A(p_1,q)+\delta^\A(p_2,q)+\delta^\A(p_3,q)\leq r + 3s\eqdot
    \end{equation}
    as illustrated in \cref{fig:ambitexample}. Note that we are actually
    defining a ball in the new metric space $\B$, whose points are triples of
    $\A$-points and whose distance is the sum of elementwise $\A$-distances.
\end{example}

\begin{figure}
\input{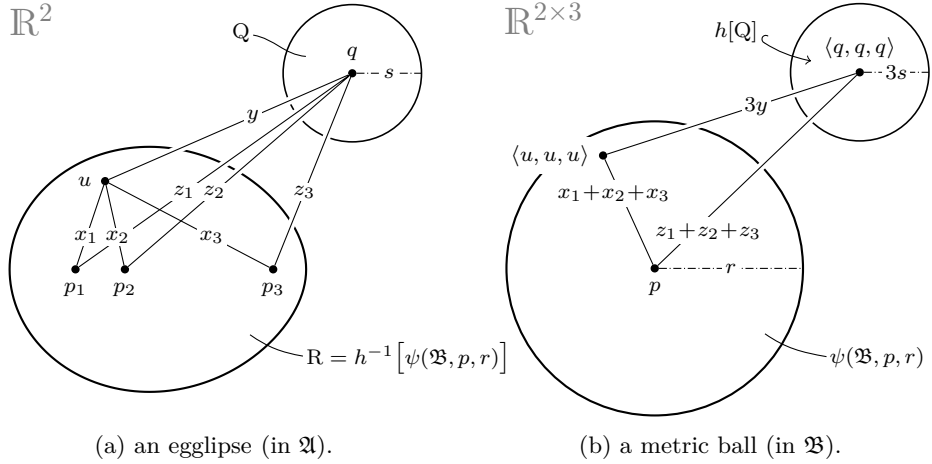}
\def\xh{0.33} 
\subcaptionbox{an egglipse (in \A)}{%
\begin{tikzpicture}[larger, semithick]
\footnotesize

    \path (0,-\rsc);

    \draw[thick] plot file{egglipse.dat} -- cycle;

    \fill
        \fa node[pointnode] (F1) {}
        \fb node[pointnode]   (F2) {}
        \fc node[pointnode] (F3) {}
        ;

    \path
        \u node[pointnode] (U) {}
            node[left=2pt] {$u$}
        \q node[pointnode] (Q) {} node[above=2pt] {$q$}
        ;

    \draw (Q) circle (\s);
    \draw[thin, densely dashdotted] (Q) -- +(0:\s)
        node[midway, fill=white, inner sep=1pt] {$s$};

    \draw[line width=3pt, white, shorten <=5pt, shorten >=5pt]
        (Q) -- (U)
        (Q) -- (F1)
        (Q) -- (F2)
        (Q) -- (F3)
        ;
    \draw[thin]
        (Q) -- (U)
        node[pos=.40, fill=white, inner sep=1pt] {$y$}
        \foreach \i in {1,2,3} {
            (Q) -- (F\i)
            node[pos=.61, fill=white, inner sep=1pt] {$z_\i$}
        }
        ;

    \path[name path=horiz] (-1,\xh) -- (1.5,\xh);
    \foreach \num in {1, 2, 3} {
        \draw[thin, name path=x\num]
            (F\num) -- (U)
            ;
        \draw[name intersections={of=horiz and x\num}]
            (intersection-1) node[inner sep=1pt,fill=white] {$x_\num$}
            (F\num) node[below=2pt] {$p_\num$}
            ;
    }

    \draw
        (Q) +(160:1.2) node[inner sep=1pt] (QQ) {$Q$}
        (F3) +(-65:1) node[inner sep=1pt] (RR)
        {$R\mathrlap{\null=h^{-1}\bigl[\psi(\B,p,r)\bigr]}$}
    ;
    \draw[thin]
        (QQ) edge[out=-10,in=175] +(-20:.7)
        (RR) edge[out=170,in=-35] +(160:.7)
    ;

    \draw (current bounding box.north west)
        node[gray,
            anchor=north west, inner sep=0, font=\Large] {$\R^2$}
    ;

\end{tikzpicture}}
\hfill
\subcaptionbox{a metric ball
    (in \B)}{\label{fig:tupleball}
\begin{tikzpicture}[larger, semithick]
    \footnotesize

    \draw
        (0,0) node[pointnode] (C) {} node[below=2pt] {$p$}
        ;

    \path[name path=horiz] (-.5,\xh) -- (0.5,\xh);

    \draw[thick] (C) circle (\rsc)
        ;

    \path

        \uu node[left=2pt, fill=white]
            (ulab) {$\tup{u,u,u}$}

        \uu node[pointnode] (U) {}

        \qq node[pointnode] (Q) {}
            node[above=2pt] {$\tup{q,q,q}$}

        ;

    \draw (Q) circle (\ssc);
    \draw[thin, densely dashdotted] (Q) -- +(0:\ssc)
        node[midway, fill=white, inner sep=1pt] {$3s$};

    \draw[thin,densely dashdotted]
        (C) -- +(0:\rsc) node[midway,fill=white,inner sep=1pt] {$r$};

    \draw[thin]
        (C) -- (U)
        ;

    \draw[line width=3pt,draw=white,shorten <=10pt, shorten >=10pt]
        (Q) -- (U)
        (Q) -- (C)
        ;
    \draw[thin] (Q) -- (U)
        node[pos=.40, fill=white, inner sep=1pt] {$3y$}
        ;
    \draw[thin] (Q) -- (C)
        ;
    \coordinate (Cup) at (C |- U);
    \draw
        ($(C)!.3333333!(Cup)$)
        node[right=-2pt, fill=white, inner sep=2pt] {$z_1\!+\!z_2\!+\!z_3$}
        ($(C)!.6666666!(Cup)$)
        node[left=-8pt, inner sep=2pt, fill=white] {$x_1\!+\!x_2\!+\!x_3$}
        ;

    \draw
        (Q) +(160:1.2) node[inner sep=1pt] (QQ)
            {$\spcl{Q}{h[Q]}\mkern-1mu$}
        (C) ++(-65:1) ++(1.35,0) node[overlay, inner sep=1pt] (SS)
            {$\mathrlap{\psi(\B,p,r)}$}
    ;
    \draw[thin]
        (QQ)
            edge[
                bend right=10,
                right hook-{Classical TikZ Rightarrow}
                ]
            +(-25:.7)
        (SS) edge[out=170,in=-35] +(160:.7)
    ;

    \draw (current bounding box.north west)
        node[gray,
        anchor=north west, inner sep=0, font=\Large] {$\R^{2\times 3}$}
    ;

\end{tikzpicture}}%
\caption{%
The ambit construction, as it applies to an egglipse~\citep{Sahadevana:1987}.
The foci form a central tuple $p\deq\tup{p_1,p_2,p_3\mkern-1mu}$,
while any other point $u$ is mapped to $\tup{u,u,u}$. The sum of elementwise
distances is the metric in \B, where the egglipse interior becomes a ball.
More generally, an $m$-focal polyellipse could only intersect $Q$
if $z_1+\cdots+z_m \leq r + ms$}\label{fig:ambitexample}%
\end{figure}

\begin{remarks}
\begin{paras}
\item
For the ordered vector spaces of
\cref{sec:linearmetric,sec:focusselection,sec:emulatingregions,sec:optcoeff},
homomorphisms are
non-decreasing linear maps, which cannot have negative coefficients.
We got around this limitation by exploiting the properties of linearity,
constructing an overlap check that permitted some components of the ordering
to be reversed. The bottom line, however, is that, except in the non-negative
case of~\eqref{eq:minrad1},
our remoteness was \emph{not} a homomorphism, so it would seem that there is
use in casting our net more widely than in \Cref{thm:ambits:general}.
Take, for example, the Hamacher product, as used
in~\cref{fig:hamacher}.\endnote{%
    $
    f(x_1,x_2) \deq\mkern1.5mu
    x_1x_2/(x_1+x_2-x_1x_2)
    $
with $x_i\in[\mkern1.25mu 0,1]$ and $f(0,0)\deq0$.}
It is not structure-preserving in any obviously useful way, but may still be
used to define a metric ambit. Consider the situation in the feature space:
The metric axioms let us circumscribe a query $Q$ with a hypercube, and the
image of the ambit is concave (see inset of \cref{fig:hamacher}); the only way
there can be overlap is if a query corner is inside the region image, and
because~$f$ is non-decreasing, it suffices to examine the corner $z-s$, so
$f(z-s)\leq r$ is a valid overlap check (cf.~\cref{fig:featurespace}). Similar
lines of reasoning about the geometry of the feature space could lead to
overlap checks for other axioms or other maps, whether they are homomorphisms
or not.

\item
There are ways of generalizing explicit structure preservation as well, by
relaxing how our map preserves functions. For example, if we for each source
symbol $F$ have some other target symbol $G$, and vice versa, we could relate
$F^\A$ to $G^\B$ rather than to $F^\B\?$, as in so-called \emph{weak}
homomorphisms~\citep{Schneider:2011}. For our purposes, this
could be achieved for homomorphisms as well, with a simple renaming of the
functions in $\B$, such as using a different function for comparison. We
may also weaken the \emph{equality} requirement of \cref{def:homomorphism},
yielding
\end{paras}
\end{remarks}
\begin{axioms}{H}
\item[{$[H\smash{\rlap{\raisebox{.1ex}{'}}}_{2}]$}]
$f(F^\A(a_1,\dots, a_n))$ $\spcc{\text{then}}{{}\approx_\B}$
$F^\B(f(a_1),\dots,f(a_n))$
\end{axioms}
for some relation symbol $\approx$.\endnote{We might also use a relation not
named in our signature, as long as it's defined on $\B$.} This a form of
abstraction found in so-called \emph{quasi}-homomorphisms. Instead of
equality, they require \emph{approximate} equality, where the difference
between the two values (as measured by some metric on $\B$) is
bounded~\citep{Fujiwara:2016}.
However, even if our only requirement is that $\approx_\B$ be
\emph{transitive}, we will preserve formulas of the form $x_0\approx
\delta(x_1,\dots,x_n)$.\endnote{%
\def\fa{f\mkern-1mu a}
We have $a_0\approx_\A F^\A(a)
\Rightarrow
f(a_0)\approx_\B f(F^\A(a)) \approx_\B F^\B(\fa)
\Rightarrow
f(a_0)\approx_\B F^\B(\fa)$,
where $a$ is the tuple \seq{a_1,\dots,a_n} and $\fa$ is the standard shorthand
for $\seq{f(a_1),\dots,f(a_n)}$.}

\section{Applicability, Limitations and Future Work}

As discussed initially, the framework presented in this work delineates a
design space intended to contain current indexing methods, as well as
countless variations and combinations, ripe for exploration and
experimentation. If one stays within the bounds of this design space, by
conforming to the necessary axioms, the various correctness results apply.
However, these axioms also specify very precisely the applicability of these
results, and possible jumping-off points for new indexing approaches. Deciding
to break with one or more of the assumptions laid down will necessarily
make some of my results inapplicable, but will also create opportunities for
wholly new designs.

Of course, the work presented here is limited in its scope to a
\emph{theoretical} study of sprawls and ambits. Some tentative experiments
have been performed, e.g., to verify that there are indeed cases where
non-linear ambits will outperform their linear counterparts (cf.\
\cref{sec:nonl}), and that the heuristic facet and focus selection procedures
of \cref{sec:optcoeff,sec:focusselection} do indeed seem to yield reasonable
results. However, proper experimentation is still needed to arrive at any
definitive conclusions. Beyond basic implementation and benchmarking work,
actually exploring the design space in a systematic, perhaps even automated,
fashion might also be an interesting topic for further research.

\section*{Acknowledgements}

\noindent
The author wishes to thank Ole Edsberg and Jon Marius Venstad for highly
fruitful discussions, and André Philipp, Joakim Skarding and Odd Magnus
Trondrud for their preliminary empirical work based on the ideas presented
here.

\appendix
\inapxtrue
\counterwithin{construction}{section}
\counterwithin{observation}{section}

\section{Proofs of Selected Theorems}
\label{apx:proofs}

\thmdihyper*
\begin{proof}
    $(i\Rightarrow ii)$
    Assume that \tup{V,\L_1} is a traversal repertoire.
    %
    Let $(V,E)$ and $\L_2$ be as defined by \cref{constr:reptohyper}.
    What remains is to show that $\L_2=\L_1$, i.e., that a sequence $\gamma$
    is a member of $\L_1$ if and only if it is a traversal of \tup{V,E}.

    By assumption, \tup{V,\L_1} obeys the traversal axioms; by the reverse
    implication ($ii\Rightarrow i$, proven below), so does \tup{V,\L_2}.
    %
    %
    We proceed by induction on sequence length, with the inductive step proven
    by exhaustion. First, the base case.

    \begin{case}
        $\gamma=\emptyseq$. In this case, $\gamma$ is in both $\L_1$ and
        $\L_2$, as guaranteed by \cref{axiom:nonempty}.
    \end{case}

    \noindent
    The remaining cases cover the inductive step, where we let $\gamma=\tau
    x$, with $x\in V$.

    \begin{case}
        $\tau\notin\L_1$. By the inductive hypothesis, we have
        $\tau\notin\L_2$ and \cref{axiom:hereditary} immediately yields the
        inductive step, with $\tau x$ absent from both languages.
    \end{case}

    \noindent
    Now assume $\tau$ is present in both languages, but that $\tau x$ is not.
    Our goal is to show that once \cref{alg:hypertraversal} produces the
    traversal $\tau$, $x$ either has been eliminated or remains undiscovered,
    as covered by the next two cases.

    \begin{case}
        $\tau\in\L_1$, $\tau x\notin\L_1$, but $\alpha x\in\L_1$ for some
        $\alpha\in V^\hiast$ where $\tilde\alpha\subsetneq\tilde\tau$.
        If $x\in\tilde\tau$, \cref{alg:hypertraversal} will not traverse $x$
        again. Otherwise, by \cref{axiom:negedge}, there is
        a negative edge \edge{\tilde\tau}{x} in $E$.
        Either way, we have $\tau x\notin\L_2$, as desired.
    \end{case}

    \begin{case}
        \label{case:posedge}
        $\tau\in\L_1$ and $\alpha x\notin\L_1$ for every
        $\alpha\in V^\hiast$ where $\tilde\alpha\subseteq\tilde\tau$. The
        condition in \cref{axiom:posedge} will never apply to any such
        $\tilde\alpha$, so $x$ can not have been discovered.
    \end{case}

    \noindent
    Finally, assume that $\tau x$ (and therefore $\tau$) \emph{is} in $\L_1$.
    We then wish to show that $x$ is available after traversing $\tau$ in
    \tup{V,E,\sigma}, i.e., that it has been discovered but not eliminated.
    \begin{case}
        $\tau x\in\L_1$. Because $\tau x$ is in $\L_1$,
        \cref{axiom:posedge} mandates
        a positive edge \edge{\tilde\tau}{x} in $E$, and \cref{axiom:simple}
        forbids $x\in\tilde\tau$, so the only obstacle to availability in
        \tup{V,E,\sigma} would be elimination.
        Assume there is a negative edge \edge{\tilde\omega}{x} in $E$,
        for some $\tilde\omega\subseteq\tilde\tau$, meaning
        that for some $\omega\in\L_1$, we have $\omega x\notin\L_1$. If
        \cref{axiom:negedge} is to apply, there must then be some $\alpha$
        with $\tilde\alpha\subsetneq\tilde\omega$ for which $\alpha x\in\L_1$.
        However, by the interval property~\axiomref{axiom:interval}, because
        $\tilde\alpha\subseteq\tilde\omega\subseteq\tilde\tau$
        and $\omega,\alpha x,\tau x\in\L_1$, we have $\omega x\in\L_1$, which
        is a contradiction. Therefore there can be no such edge, and
        consequently $\tau x\in\L_2$.
    \end{case}
    \noindent
    These cases are exhaustive, and in each case $\tau x\in\L_1$ if and only
    if $\tau x\in\L_2$, which proves the inductive step. This, in turn, means
    that $\L_1=\L_2$, which proves the implication from \ref{stmt:trav} to
    \ref{stmt:hyptrav}.

    $(ii\Rightarrow i)$
    We need to show that the traversals of \cref{alg:hypertraversal} satisfy
    the traversal axioms:
    Before the first iteration, the traversal is
    empty~\axiomref{axiom:nonempty}; after this, each node is traversed at
    most once~\axiomref{axiom:simple}, one at a
    time~\axiomref{axiom:hereditary}, provided they have been discovered but
    not eliminated~\axiomref{axiom:interval}. In particular, discovery and
    elimination depend only on the \emph{set} of nodes traversed,
    so the interval property holds for
    $\tilde\alpha\subseteq\tilde\tau\subseteq\tilde\omega$, even if $\alpha$
    and $\tau$ are not prefixes of $\omega$.
\end{proof}

\begin{construction}
    \label{constr:schemetosprawl}
    From a monotone traversal scheme $\tightfml{\tup{V,\L_i}}{i}$ with
    workload $\fml{Q_i}{i}$, in some universe $U$, we produce a sprawl
    \tup{V,E,P,N}.
    Let $\tightfml{\tup{V,E_i,\sigma_i}}{i}$ be the results of
    \cref{constr:reptohyper} for the corresponding traversal repertoires, with
    extra edges as mandated by the following axiom, for all $Q_i,Q_j$:
\begin{axioms}[start=3]{E}
\item If $e\in E_j$, $\sigma_j(e)=-1$ and $Q_i\subseteq Q_j$
    then $e\in E_i$ and $\sigma_i(e)=-1$.
    \label{axiom:extraedges}
\end{axioms}
    Let $E$ be the union of all the edge sets. For each edge $e\in E$,
\begin{stmts}[widest=2]
\item Let $\spc{\fml{Q_i}{i\in I''}}{\fml{Q_i}{i\in I'}}$\kern-2pt{}
    be the maximal queries for which
    $\sigma_i(e)=-1$ or $e\notin E_i$; and
    \label{item:iprime}
\item Let $\fml{Q_i}{i\in I''}$\kern-2pt{} be the maximal queries for which
    $\sigma_i(e)=-1$.
    \label{item:idoubleprime}
\end{stmts}
Finally, let $P(e)$ and $N(e)$ be the families $\fml{U\setminus Q_i}{i}$ for
$i$ in $I'$ and $I''$, respectively.
\end{construction}
The addition of negative edges in the previous
construction is a form of normalization, to prevent the somewhat arbitrary
transition from non-discovery to elimination whenever an expanding query
introduces a traversal $\alpha$ that activates \cref{axiom:negedge}. This
makes non-elimination monotone in the query (under inclusion), just like
discovery---a property that is needed in the proof of \cref{thm:schemes}. This
normalization is harmless, as the following observation
attests.
\begin{observation}
\label{obs:samescheme}
Each signed hyperdigraph \tup{V,E_i,\sigma_i} in \cref{constr:schemetosprawl}
has the corresponding language \tup{V,\L_i} as its traversal repertoire.
\end{observation}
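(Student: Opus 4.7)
The plan is to show that augmenting $\tup{V,E_i,\sigma_i}$ by the extra negative edges mandated by \cref{axiom:extraedges} neither enlarges nor shrinks the traversal repertoire established for the un-augmented hyperdigraph by \cref{thm:dihyper}. The containment of the augmented repertoire in $\L_i$ is easy: every added edge is negative, so it can only cause extra eliminations, and any node available in the augmented hyperdigraph is also available without the extras. Hence every augmented traversal is already a traversal of the un-augmented hyperdigraph, which by \cref{thm:dihyper} lies in $\L_i$.

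For the other direction, I would suppose for contradiction that some $\tau x \in \L_i$ is ruled out in the augmented hyperdigraph by an extra negative edge $e$ with $\src(e) = S$ and $\tgt(e) = y$, added because $e \in E_j$ is negative for some $Q_j \supseteq Q_i$. An edge eliminates only its target, so $e$ can spoil $\tau x$ only if $y$ itself appears in the traversal, say at the prefix $\tau' y$, with $S \subseteq \widetilde{\tau'}$ so that $e$ becomes active in time to eliminate $y$. By heredity $\tau' y \in \L_i$, and by monotonicity~\axiomref{axiom:monotone} $\tau' y \in \L_j$. Applying \cref{axiom:negedge} in the construction of $E_j$ yields witnesses $\alpha,\omega \in \L_j$ with $\tilde\alpha \subsetneq \tilde\omega = S$, $y \notin \tilde\omega$, $\alpha y \in \L_j$, and $\omega y \notin \L_j$. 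Since $\tilde\alpha \subseteq \tilde\omega \subseteq \widetilde{\tau'}$, with $\alpha y, \tau' y \in \L_j$ and $\omega \in \L_j$, the interval property~\axiomref{axiom:interval} forces $\omega y \in \L_j$, contradicting the choice of $\omega$. Thus $S \not\subseteq \widetilde{\tau'}$, $e$ cannot spoil $\tau x$, and $\tau x$ survives in the augmented repertoire.

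The main obstacle is the interplay between the two languages $\L_i$ and $\L_j$: the prefix $\tau' y$ lives in $\L_i$ by heredity, but the witnesses certifying $e$'s negativity live in $\L_j$. Monotonicity is precisely what bridges the gap, and once both the prefix and the witnesses are situated in $\L_j$, a single invocation of the interval property closes the argument. A minor secondary concern is that an extra edge might clash in sign with an edge already in $E_i$ from \cref{constr:reptohyper}; a parallel interval-property argument inside $\L_j$ rules this out, so the sign assignment after augmentation remains unambiguous.
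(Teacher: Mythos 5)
Your proof is correct, and it rests on the same pivot as the paper's: monotonicity~\axiomref{axiom:monotone} is what renders the inherited negative edges harmless. The mechanics differ, though. The paper's proof is two sentences long: since $e$ is negative in $\tup{V,E_j,\sigma_j}$, it invokes \cref{thm:dihyper} for the $j$-th repertoire to conclude that every continuation $e$ forbids is already absent from $\L_j$, hence from $\L_i\subseteq\L_j$, so adding $e$ to $E_i$ ``does no harm.'' You instead unfold what it means for $e$ to be negative in $E_j$---extracting the witnesses $\alpha,\omega$ guaranteed by \cref{axiom:negedge}---transport the offending prefix into $\L_j$ by monotonicity, and close the loop with a direct application of the interval property~\axiomref{axiom:interval}; in effect you re-prove inline the fragment of \cref{thm:dihyper} that the paper cites as a black box. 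Your version is longer but more self-contained, and it makes explicit two points the paper leaves tacit: that an extra negative edge can only interfere when its full source set is traversed before its target's turn, and that the forced assignment $\sigma_i(e)=-1$ cannot clash with a positive edge already mandated for $E_i$ (which is needed for $\sigma_i$ to be well defined). Both arguments are sound; the paper's is the more economical, yours the more transparent.
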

\begin{proof}
    If $\sigma_j(\tilde\tau, x)=-1$, than any traversal $\tau$ that is an
    ordering of $\tilde\tau$ will result in the elimination of $x$. By
    \cref{thm:dihyper}, this means that $\tau x\notin\L_j$.
    Because $Q_i\subseteq Q_j$, \cref{axiom:monotone} requires $\L_i\subseteq
    \L_j$, and so we also have $\tau x\notin\L_i$ for every such $\tau$, and
    adding \edge[e]{\tilde\tau}{x} to $E_i$ with $\sigma_i(e)=-1$, if absent,
    does no harm.
\end{proof}

\thmschemes*
\begin{proof}
    $(i\Rightarrow ii)$
    Let $U$ be any appropriate universe, i.e., with $V,Q_i\subseteq U$, for
    $i\in I$, and let \tup{V,E,P,N} be the sprawl produced by
    \cref{constr:schemetosprawl}.
    Let $Q_i$ be any of the queries; we need to show that resolving $Q_i$ on
    our sprawl produces the traversal repertoire \tup{V,\L_i}. More
    specifically, we can show that the signed hyperdigraph \tup{V,E'\!,\sigma}
    resulting from applying $Q_i$ to the sprawl (as described by
    \cref{constr:sprawltohyper}) is the same as \tup{V,E_i,\sigma_i}, which
    has \tup{V,\L_i} as its repertoire (cf.~\cref{obs:samescheme}).
    We proceed by cases for an arbitrary edge
    \edge[e]{\tilde\tau}{x} in $E$.

    \begin{case}
        $\sigma_i(e) = +1$. We need to show that $\sigma(e)=+1$, and
        per \cref{constr:sprawltohyper}, the condition for this is that
        $Q_i$ intersect all sets in the families $P(e)$ and $N(e)$. For any
        region $R_j$ in $P(e)$ or $N(e)$, its complement $Q_j\deq U\setminus
        R_j$ is, by construction, a query for we do \emph{not} have
        $\sigma_j(e) = +1$, which, by \cref{axiom:posedge}, means that for no
        ordering $\tau$ of $\tilde\tau$ do we have $\tau x\in \L_j$. By
        \cref{axiom:monotone}, this is true for any subset of $Q_j$ as well.
        This undercuts the only way in which \cref{constr:reptohyper} could
        mandate $\sigma_i(e)=+1$, and so $Q_i\not\subseteq Q_j$, and
        consequently $Q_i$ must intersect $R_j$.
    \end{case}

    \begin{case}
        $\sigma_i(e) = -1$. We need to show that there is some region $R\in
        N(e)$ that does not intersect $Q_i$. Let $Q_j$ be a maximal query for
        which $\sigma_j(e) = -1$, and where $Q_i\subseteq Q_j$. $Q_i$
        cannot intersect $R_j$, and, by construction, we have $R_j\in N(e)$.
    \end{case}

    \begin{case}
        $e\notin E_i$. We need to show that $Q_i$ intersects every region in
        $N(e)$, but not every region in $P(e)$.
        First, Let $Q_j$ be some query where $Q_i\subseteq Q_j$. Because of
        \cref{axiom:extraedges} of \cref{constr:schemetosprawl}, it can not
        be the case that $\sigma_j(e)=-1$, and so $R_j\notin N(e)$, meaning
        that $Q_i$ intersects all regions in $N(e)$.
        Second, because $e\notin E_i$, by
        \cref{constr:schemetosprawl}\ref{item:iprime}, there is som $j\in I'$
        with $Q_i\subseteq Q_j$ and $R_j\in P(e)$, so there is at least one
        region in $P(e)$ that does not intersect $Q_i$.
    \end{case}

    $(ii\Rightarrow i)$
    By definition, each language in the traversal scheme of a sprawl is the
    traversal repertoire of some signed hyperdigraph
    (\cref{constr:sprawltohyper}, \cref{alg:sprawltraversal} and
    \cref{def:sprawlscheme}), and therefore an actual traversal repertoire
    (\cref{thm:dihyper}). Extending some query $Q_i$ to some superset
    $Q_j$ will never cause \cref{constr:sprawltohyper} to remove positive
    edges or add negative ones, and so any traversal present in $\L_i$ must
    still be present in $\L_j$, satisfying \cref{axiom:monotone},
    making $\tightfml{\tup{V,\L_i}}{i\in I}$ a monotone traversal scheme.
\end{proof}

\obslocalres*
\begin{proof}
    First, assume that $\res$ is a responsibility assignment. For any
    query \set{u},
    intersecting $R$ means $u\in R$; thus, by \cref{axiom:posintersect}, there
    must be a path $\Pi_u$ where $u\in R$ for every $R\in P(e),
    e\in\tilde\Pi_u$.
    The responsibilities $\res(e)$ are, by definition, exactly the nodes $u$
    for which this situation obtains, and thus \cref{it:subr} holds.
    Any edge $e'$ with target $v$ in such a path $\Pi_u$ would, by
    \cref{axiom:negintersect}, have $u\in R'$, for every $R'\in N(e')$.
    The node $v$ would either be the source of some edge $e\in\tilde\Pi_u$
    (i.e., $u\in\res(e)$) or the target of the last edge (i.e., $v=u$). The
    nodes $u$ for which this is the case are, by~\eqref{eq:noderes}, exactly
    the responsibilities $\res(v)$, and so \cref{it:subrprime} holds.
    Finally, consider any node $u\in\res(v)$. Whether $u=v$ or not, $v$ must
    be the target of some edge in $\Pi_u$, and so $e_i\in\tilde\Pi_u$, for
    at least one of its incoming edges $e_i$. This means that
    $u\in \res(e_i)$, which implies \cref{it:subrunion}.

    Conversely, assume that the sprawl is acyclic and that every node is the
    target of at least one edge, and consider any relation $\res\subseteq
    E\times V$ satisfying \cref{it:subr,it:subrprime,it:subrunion},
    using the shorthand from~\eqref{eq:noderes}.
    Let $u$ be any node and $Q$ be any query, with $u\in Q$. We can now show
    that there is a hyperpath $\Pi_v$ satisfying
    \cref{axiom:posintersect,axiom:negintersect} from $\emptyset$ to any node
    $v\in V$ for which $u\in\res(v)$, with $u=v$ being
    the special case we care about.
    Because the sprawl is acyclic, we may order the nodes $v_1,\dots,v_n$ so
    that if and $v_i=\tgt(e)$ and $v_j\in\src(e)$, then $j<i$, for any edge
    $e$ with at least one positive region.
    We proceed by induction on $i$. By \cref{it:subrunion} there is
    at least one edge $e$ with $\tgt(e)=v$ and $u\in\res(e)$, and thereby,
    by~\cref{it:subr}, $u\in R$, for each $R\in P(e)$.\endnote{Note that there
    may be no such regions, which is perfectly fine.}
    By assumption, we already have the requisite paths to each node in
    $\src(e)$, and so we have established \cref{axiom:posintersect}.
    All that remains in order to establish \cref{axiom:negintersect} is to
    consider any edges $e'$ with $\tgt(e')=v$. For any region $R'\in N(e')$,
    \cref{it:subrprime} tells us $u\in R'$, which yields the desired result.
\end{proof}

\section{Additional Remarks}
\label{apx:remarks}

\greedoidremark*

\forwardchainingremark*

\antimatroidremark*

\section{Auxiliary Definitions}

\subsection{Directed Hypergraphs}
\label{sec:hypergraphs}

\begin{definition}
    \label{def:hypergraph}
    A \emph{directed hypergraph} or \emph{hyperdigraph} is a generalization of
    a directed graph, where each edge may have multiple sources. Specifically
    a directed hypergraph \tup{V,E} consists of a finite, nonempty set $V$ of
    \emph{nodes} and a finite set $E$ of edges with \emph{sources}
    $\src(e)\subseteq V$ and \emph{target} $\tgt(e)\in V$. A \emph{signed}
    hyperdigraph \tup{V,E,\sigma} consists of a hyperdigraph \tup{V,E} and a
    sign function $\sigma:E\to\set{-1,+1}$. An edge $e\in E$ is said to be
    \emph{positive} (resp., \emph{negative}) if $\sigma(e)$ is positive
    (resp., negative).
    To indicate that $S=\src(e)$ and $t=\src(e)$, we may use the shorthand
    $e:S\to t$. If we need not name the edge, we write $S\to t$. Note,
    however, that we may have $e_1,e_2:S\to t$ with $e_1\neq e_2$.
\end{definition}

\begin{definition}
    \label{def:roots}
    The target of a sourceless positive edge is called a \emph{root node} or
    simply a \emph{root}. A hyperdigraph may be specified by giving a set $V$ of
    nodes, a set $E$ of edges, and a set $V_0\subseteq V$ of roots. Each
    root $v_i\in V_0$ implicitly defines a sourceless edge $e_i\notin E$
    with $\tgt(e_i)\deq v_i$, and the hyperdigraph thus specified is
    \tup{V,E\cup\set{e_i:v_i\in V_0}}.
\end{definition}

\begin{definition}
    \label{def:hyperpath}
    Given a signed hyperdigraph \tup{V,E}, a \emph{directed hyperpath} or
    simply \emph{path} from $S\subseteq V$ to $t\in V$ is a set
    $\tilde\Pi\subseteq E$ that may be ordered into a sequence
    $\Pi\deq\seq{e_1,\dots,e_k}$ of distinct edges subject to the following,
    for $i=1,\dots,k$:
    \begin{axioms}{P}
    \item
        $
        \src(e_i)
        \subseteq
        S
        \cup
        \set{\tgt(e_j):j < i}
        $.
    \item $t = \tgt(e_k)$.
    \item No strict subset of $\tilde\Pi$ is a path from $S$ to $t$ in
        \tup{V,E}.
    \end{axioms}
    The \emph{node set} of $\Pi$ is the set of all sources and targets of its
    edges.
\end{definition}
\begin{remark}
    The previous definition is essentially equivalent to that given by
    \citet{Ausiello:2001}, except their requiring $S\neq\emptyset$ and
    defining the path to be a subhypergraph.
\end{remark}

\begin{algorithm}
    \label{alg:old:hypertraversal}
    An unsigned hyperdigraph $\tup{V,E}$ is \emph{traversed}
    by \emph{discovering} and \emph{traversing}
    nodes, as described in the following.
    Nodes are \emph{available} if they has been discovered but not traversed.
    Edges are \emph{active} once their sources have been traversed.
    The following steps are repeated, starting with the second step in the
    first iteration:
    \begin{steps}
        \item One of the available nodes is selected and traversed.
            \label{step:old:exploration}
        \item The targets of active
            edges are discovered.
            \label{step:old:discovery}
    \end{steps}
    The steps are repeated as long as there are nodes available.
    When there are several nodes available in the first step,
    the choice is made using a \emph{traversal heuristic}.%
\end{algorithm}

\label{p:impl}%
In a practical implementation, available nodes will typically be kept in a
priority queue, with the traversal heuristic defining the priority. Activation
of edges can be handled efficiently by tracking their number of traversed
sources. Any such state information could be reset between traversals in
constant time without increasing asymptotic space usage~\citep[see,
e.g.,][]{Navarro:2012}.
The priority of a node may be updated, for example, whenever it is
rediscovered in the first step.

\begin{remark}
    The version of hyperdigraph traversal described in
    \Cref{alg:old:hypertraversal} does not take a set of starting nodes as a
    parameter. If it did, these would simply be available from the beginning,
    i.e., added to the priority queue before the main loop. This behavior may
    be emulated, however, by treating the starting nodes as roots (cf.\
    \cref{def:roots}), and traversing the resulting hyperdigraph.
\end{remark}

\begin{example}
    Directed graphs correspond to the special case where each edge has exactly
    one source. The signed hyperdigraph traversal described in
    \cref{alg:old:hypertraversal} corresponds to ordinary graph traversal in
    this case.
\end{example}

\subsection{Basic Model Theory}
\label{sec:modeltheory}

\begin{definition}
    \label{def:structure}
    A \emph{signature} is a set $\Sigma$ of \emph{function} and \emph{relation
    symbols}, each with non-negative \emph{arity}. A \emph{$\Sigma$-structure}
    has the following data:
    \begin{stmts}[widest=ii]
    \item A non-empty set $A\deq\dom \A$, known as the \emph{domain} of $\A$;
        and
    \item A family $\fml{S^\A}{S\in\Sigma}$ of functions and relations on $A$.
    \end{stmts}
    These data obey the following axioms:
    \begin{axioms}{S}
    \item $S^\A\spcc{\null\subseteq\null}{:}A\?^\hilc{n}\to A$ if $S$ is an
        $n$-ary function symbol; and
    \item $\spc{S^\A\spcc{\null\subseteq\null}{:}A\?^\hilc{n}\to
        A}{S^\A\subseteq A\?^\hilc{n}}$ if $S$ is an $n$-ary relation symbol.
    \end{axioms}
    We refer to $S^\A$ as the \emph{$\A$-interpretation} of $S$.
\end{definition}

\begin{remarks}
    \label{rem:structvariations}
    \begin{paras}
    \item
    It is common to permit only arities $n\geq 1$, and to have a separate
    class of \emph{constants} $S$, with $S^\A\in A$. Nullary function symbols
    essentially correspond to such constants~\citep{Tent:2012}, and nullary
    relations to truth values~\citep{Poizat:2000}.
    \item
    \label{rem:it:multisorted}
    It is possible to define \emph{many-sorted} logical languages and
    structures, where each term and each argument position is assigned a
    \emph{sort}, and these correspond to a partition of
    $A$~\citep[p.\,5]{Tent:2012}. Each sort $i$ is also assigned
    its quantifier $\forall_i$, but if we introduce
    a predicate $S_i$ that uniquely picks out the subset of $A$ corresponding
    to sort $i$, we may replace any quantification $\forall_ix\,\varphi(x)$
    with $\forall x\,S_i(x)\to\varphi(x)$, translating many-sorted formulas
    to equivalent single-sorted ones.\endnote{The common definition is
    used for existential quantification, i.e., $\exists_i
    x\,\varphi(x)=\neg\forall_i x\,\neg\varphi(x)$; the single-sorted
    translation becomes $\exists x\, S_i(x)\land\varphi(x)$.}
    \end{paras}
    When transforming a many-sorted structure to a single-sorted one,
    relations remain intact, but functions must be (arbitrarily) extended so
    each argument may be drawn from all of
    $A$~\citep[Sect.\,4.3]{Enderton:2001}
\end{remarks}

\begin{definition}
    \label{def:cartesianproduct}
    We define the \emph{Cartesian product} of structures $\A$ and $\B$ as
    the structure $\A\times\B$ with
    \[
        \dom\, (\A\times\mkern-1mu\B) \deq(\dom\A)\times(\dom\B)\eqcomma
    \]
    where functions and relations act elementwise: $F^{\A\times\mkern-1mu\B}$
    produces a pair of the results from $F^\A$ and $F^\B\?$, and
    $R^{\A\times\mkern-1mu\B}$ holds iff both $R^\A$ and $R^\B$ do.
    Powers and products of multiple structures are defined in the obvious
    manner.
\end{definition}

\begin{examples}\label{ex:struct}
    \begin{paras}
    \item A partially ordered set $P$ may be seen as a structure
        $\mathfrak{P}$ with domain $P$ and a single binary relation symbol
        $\leq$, representing the partial order relation $\leq^\mathfrak{P}$ on
        $P$.
        Let $\mathfrak{M}=\mathfrak{P}^m$. Then
        $\dom\mathfrak{M}$ consists of $m$-tuples from $\dom\mathfrak{P}$,
        and
        $\tup{x_1,\dots,x_m}\leq^\mathfrak{M}\tup{y_1,\dots,y_m}$
        holds
        iff
        $x_i\leq^\mathfrak{P} y_i$ for $i=1,\dots,m$.
    \item
        \label{subex:vecspace}
        A \emph{vector space $V$} over a field $F$ of scalars may be seen as
        a structure $\mathfrak{V}$ with domain $V$ and signature
        $\set{\mkern1mu+\mkern1mu,0}\cup F$, where the binary function symbol
        $+$ represents the vector addition $+^\mathfrak{V}$, the constant
        symbol~$0$ represents the vector space origin~$0^\mathfrak{V}$, and
        each $a\in F$ is used as a unary function symbol representing the
        operation $a^\mathfrak{V}:V\to V$ of scalar multiplication by $a$.
    \end{paras}
\end{examples}

\begin{definition}
    A string constructed using $\Sigma$ along with variables, quantifiers and
    Boolean operators, respecting arities and the syntax of first-order logic,
    is called a \emph{formula}. A formula whose variables are all bound by
    quantifiers (i.e., one without \emph{free variables}) is a
    \emph{sentence}. Such a sentence $\varphi$ is interpreted recursively in a
    structure $\A$ of signature $\Sigma$, by treating each non-logical symbol
    as its $\A$-interpretation, and giving the logical symbols their canonical
    meaning. If the resulting statement is true, we write $\A\models\varphi$,
    or say that $\varphi$ is \emph{true in \A}.
\end{definition}

\begin{example}
    If $\mathfrak{P}$ represents any poset with order symbol $R$, then:
    \[
        \mathfrak{P}\models
        \forall xyz\enskip
        R(x,y)\land R(y,z)\rightarrow R(x,z)
    \]
    That is, for all $x,y,z\in\dom\A$, if $\tup{x,y},\tup{y,z}\in
    R^\mathfrak{P}$ then $\tup{x,z}\in R^\mathfrak{P}$.
\end{example}

\begin{remark}
    \label{rem:manysortedreduction}
    Let $\A$ be a many-sorted structure, and $\varphi$ a sentence in the
    corresponding many-sorted language; let $\A^*$ and $\varphi^*$ be the
    corresponding translations, as described in
    \cref{rem:structvariations}\ref{rem:it:multisorted}. Then
    $\A\models\varphi$ if and only if $\A^*\models\varphi^*$.
\end{remark}

\begin{definition}
    \label{def:homomorphism}
    For any two structures $\A$ and $\B$ with the same signature $\Sigma$, a
    map $f:\A\to\B$ is a function from $\dom\A$ to $\dom\B$. A
    \emph{homomorphism} is a map $f:\A\to\B$ that satisfies the following
    axioms for all function and relation symbols $F$ and $R$ in $\Sigma$, of
    any arity $n\geq 0$, and all $a_1,\dots,a_n\in\dom\A$:
    \begin{axioms}{H}
    \item $R^\A(a_1,\dots,a_n)\Rightarrow R^\B(f(a_1),\dots,f(a_n))$;
    \label{it:impl}
    \item $f(F^\A(a_1,\dots,a_n)) = F^\B(f(a_1),\dots,f(a_n))$.
    \label{ax:homomophism:func}
    \end{axioms}
    An \emph{embedding}
    is an injective homomorphism for which~\ref{it:impl} is biconditional, and
    an \emph{isomorphism} is a surjective (and, consequently, bijective)
    embedding.
\end{definition}

\begin{examples}
    \label{ex:morphisms}
\begin{paras}
\item For posets $\mathfrak{P}_0$ and $\mathfrak{P}_1$, the homomorphisms
    $f:\mathfrak{P}_0\to\mathfrak{P}_1$ are exactly the nondecreasing maps from
    $\mathfrak{P}_0$ to $\mathfrak{P}_1$.
\item For vector spaces $\mathfrak{V}_0$ and $\mathfrak{V}_1$, represented as
    in \ref{ex:struct}\ref{subex:vecspace}, the homomorphisms
    $f:\mathfrak{V}_0\to\mathfrak{V}_1$ are exactly the linear maps from
    $\mathfrak{V}_0$ to $\mathfrak{V}_1$.
\item For any structure $\A$, the diagonal embedding $\Delta_m:\A\to\A^m$
    is, as its name implies, an embedding.
    \label{it:diagonal}
\end{paras}
\end{examples}

\begin{definition}
A formula $\varphi$ with free variables $x_1,\dots,x_n$ is \emph{true of
$a_1,\dots,a_n$ in} $\A$, written $\A\models\varphi(a_1,\dots,a_n)$, if
$\varphi$ is true in $\A$ when each $x_i$ is seen as a name for $a_i\in\A$.
For two structures $\A$ and $\B$ with the same signature, a map
$f:\A\to\B$ \emph{preserves} a formula $\varphi$ with $n$ free variables if
\[
    \A\models\varphi(a_1,\dots,a_n)
    \implies
    \B\models\varphi(f(a_1),\dots,f(a_n))\eqcomma
\]
for all $a_1,\dots,a_n\in\A$.
If $\varphi$ has $n+1$ free variables, then, given $n$ parameters $b_i\in\A$,
we write $\varphi(\A,b_1,\dots,b_n)$ for the set
\set{a\in\dom\A:\A\models\varphi(a,b_1,\dots,b_n)}.
\end{definition}

\begin{definition}
    Formulas are \emph{existential} if they combine quantifier-free formulas
    using $\land$, $\lor$ and $\exists$. They are \emph{positive} if they do
    not contain negation (or, by extention, implication). A \emph{basic Horn
    formula} has the form $\varphi_1\land\cdots\land\varphi_n\rightarrow\psi$,
    where each $\varphi_i$ is an atomic formula (i.e., without logical
    operators) and $\psi$ is either an atomic formula or $\bot$ (false).
    A \emph{Horn formula} is built from basic Horn formulas using $\land$,
    $\exists$ and $\forall$.
\end{definition}

\printendnotes[mynotes]

\let\L\oldL

\printbibliography

@book{Korte:1991,
    author = {Bernhard Korte and László Lovász and Rainer Schrader},
    title = {Greedoids},
    publisher = {Springer},
    year = {1991},
    volume = {4},
    series = {Algorithms and Combinatorics},
}

@article{Goguen:1967,
    title = "{$L$}-fuzzy sets",
    journal = "Journal of Mathematical Analysis and Applications",
    volume = "18",
    number = "1",
    pages = "145--174",
    year = "1967",
    author = "J. A. Goguen"
}

@article{Aho:1974,
    title={The Design and Analysis of Computer Algorithms},
    author={Aho, Alfred V and Hopcroft, John E and Ullman, Jeffrey D},
    journal={Addison-Wesley},
    year={1974}
}

@article{Ruiz:1986,
    title={An algorithm for finding nearest neighbours in (approximately)
           constant average time},
    author={Ruiz, Enrique Vidal},
    journal={Pattern Recognition Letters},
    volume={4},
    number={3},
    pages={145--157},
    year={1986},
    publisher={Elsevier}
}

@inproceedings{Skopal:2004a,
    author = {Tomá{\v{s}} Skopal},
    booktitle = {Proceedings of the Annual International Workshop on
                 Databases, Texts, Specifications and Objects},
    editor = {V. Sná{\v{s}}el and J. Pokorn{\'{y}} and K. Richta},
    publisher = {Technical University of Aachen},
    title = {Pivoting {M}-tree},
    subtitle = {A Metric Access Method for Efficient Similarity Search},
    volume = {98},
    year = {2004}
}

@article{Sahadevana:1987,
    author = {P. V. Sahadevana},
    title = {The theory of the egglipse},
    subtitle = {A new curve with three focal points},
    journal = {International Journal of Mathematical Education in Science and
               Technology},
    year = {1987},
    volume = {18},
    number = {1},
    pages = {29--39},
}

@article{Cantor:1884,
    title={De la puissance des ensembles parfaits de points},
    author={Cantor, Georg},
    journal={Acta Mathematica},
    volume={4},
    number={1},
    pages={381--392},
    year={1884},
    publisher={Springer}
}

@book{Zimmermann:2001,
    title={Fuzzy set theory and its applications},
    author={Zimmermann, Hans-Jürgen},
    year={2001},
    publisher={Springer}
}

@book{Garey:1979,
    Author = {Michael R. Garey and David S. Johnson},
    Publisher = {W. H. Freeman and Company},
    Title = {Computers and Intractability},
    subtitle = {A Guide to the Theory of {NP}-Completeness},
    Year = {1979}
}

@incollection{Ausiello:2001,
    title={Directed hypergraphs},
    subtitle={Problems, Algorithmic Results, and a Novel Decremental Approach},
    author={Ausiello, Giorgio and Franciosa, Paolo G and Frigioni, Daniele},
    booktitle={Theoretical Computer Science},
    volume={2202},
    series={Lecture Notes in Computer Science},
    pages={312--328},
    year={2001},
    publisher={Springer}
}

@article{Kalantari:1983,
    Author = {Iraj Kalantari and Gerard McDonald},
    Journal = {{IEEE} Transactions on Software Engineering},
    Number = {5},
    Pages = {631--634},
    Title = {A Data Structure and an Algorithm for the Nearest Point Problem},
    Volume = {9},
    Year = {1983}
}

@article{Uhlmann:1991,
    Author = {Jeffrey K. Uhlmann},
    Journal = {Applied Mathematics Letters},
    Number = {5},
    Pages = {61--62},
    Title = {Metric trees},
    Volume = {4},
    Year = {1991}
}

@article{Mico:1994,
    author = {Mar{\'{i}}a Luisa Micó and José Oncina and Enrique Vidal},
    journal = {Pattern Recognition Letters},
    number = {1},
    pages = {9--17},
    title = {A new version of the nearest-neighbour approximating and
             eliminating search algorithm ({AESA}) with linear preprocessing
             time and memory requirements},
    volume = {15},
    year = {1994}
}

@inproceedings{Brin:1995,
    author = {Sergey Brin},
    booktitle = {Proceedings of 21th International Conference on Very Large
                 Data Bases},
    editor = {Umeshwar Dayal and Peter M. D. Gray and Shojiro Nishio},
    pages = {574--584},
    publisher = {Morgan Kaufmann},
    title = {Near Neighbor Search in Large Metric Spaces},
    year = {1995}
}

@inproceedings{Dohnal:2001,
    author = {Vlastislav Dohnal and Claudio Gennaro and Pasquale Savino and
              Pavel Zezula},
    booktitle = {Proceedings of the Nono Convegno Nazionale Sistemi Evoluti
                 per Basi di Dati},
    title = {Separable Splits of Metric Data Sets},
    year = {2001}
}

@article{Lokoc:2014,
    title = "On indexing metric spaces using cut-regions",
    journal = "Information Systems",
    volume = "43",
    pages = "1--19",
    year = "2014",
    author = "Jakub Lokoč and Juraj Moško and Přemysl Čech and Tomáš Skopal",
}

@inproceedings{Navarro:2001,
    Author = {Gonzalo Navarro and Nora Reyes},
    Booktitle = {Proceedings of the {XXI} Conference of the Chilean Computer
                 Science Society},
    Pages = {213--222},
    Title = {Dynamic Spatial Approximation Trees},
    Year = {2001}
}

@article{Navarro:2002,
    Author = {Gonzalo Navarro},
    Journal = {The {VLDB} Journal},
    Number = {1},
    Pages = {28--46},
    Title = {Searching in metric spaces by spatial approximation},
    Volume = {11},
    Year = {2002}
}

@article{Uhlmann:1991a,
    Author = {Jeffrey K. Uhlmann},
    Journal = {Information Processing Letters},
    Number = {4},
    Pages = {175--179},
    Title = {Satisfying general proximity/similarity queries with metric trees},
    Volume = {40},
    Year = {1991}
}

@inproceedings{Lokoc:2010,
    title={On applications of parameterized hyperplane partitioning},
    author={Loko{\v{c}}, Jakub and Skopal, Tomá{\v{s}}},
    booktitle={Proceedings of the Third International Conference on Similarity
               Search and Applications},
    pages={131--132},
    year={2010},
    organization={ACM}
}

@inproceedings{Bugnion:1993,
    title={Approximate multiple string matching using spatial indexes},
    author={Bugnion, Edouard and Roos, Thomas and Shi, Fei and Widmayer, Peter
            and Widmer, Felizitas},
    booktitle={Proceedings of the 1st South American Workshop on String
               Processing},
    pages={43--54},
    year={1993}
}

@article{Chazelle:1993,
    title={An optimal convex hull algorithm in any fixed dimension},
    author={Chazelle, Bernard},
    journal={Discrete \& Computational Geometry},
    volume={10},
    number={4},
    pages={377--409},
    year={1993},
    publisher={Springer}
}

@book{Cygan:2015,
    title={Parameterized Algorithms},
    author={Cygan, Marek and Fomin, Fedor V and Kowalik, {\L}ukasz and
            Lokshtanov, Daniel and Marx, Dániel and Pilipczuk, Marcin and
            Pilipczuk, Micha{\l} and Saurabh, Saket},
    year={2015},
    publisher={Springer}
}

@article{Chazelle:2000,
    title={The soft heap},
    subtitle={An approximate priority queue with optimal error rate},
    author={Chazelle, Bernard},
    journal={Journal of the ACM (JACM)},
    volume={47},
    number={6},
    pages={1012--1027},
    year={2000},
    publisher={ACM}
}

@article{Blum:1973,
    title={Time bounds for selection},
    author={Blum, Manuel and Floyd, Robert W and Pratt, Vaughan and Rivest,
            Ronald L and Tarjan, Robert E},
    journal={Journal of Computer and System Sciences},
    volume={7},
    number={4},
    pages={448--461},
    year={1973},
    publisher={Elsevier}
}

@book{Hodges:1993,
    title = {Model Theory},
    author = {Hodges, Wilfrid},
    series  = {Encyclopedia of Mathematics and its Applications},
    volume = {42},
    year = {1993},
    publisher = {Cambridge University Press}
}

@book{Poizat:2000,
    author={Bruno Poizat},
    title={A Course in Model Theory},
    subtitle={An Introduction to Contemporary Mathematical Logic},
    year={2000},
    publisher={Springer},
}

@book{Tent:2012,
    title={A Course in Model Theory},
    author={Tent, Katrin and Ziegler, Martin},
    year={2012},
    publisher={Cambridge University Press}
}

@book{Crama:2011,
    author = {Yves Crama and Peter L. Hammer},
    title = {Boolean Functions},
    subtitle = {Theory, Algorithms, and Applications},
    publisher = {Cambridge University Press},
    year = {2011},
    volume = {142},
    series = {Encyclopedia of Mathematics and its Applications},
}

@article{Gallo:1993,
    title = "Directed hypergraphs and applications",
    journal = "Discrete Applied Mathematics",
    volume = "42",
    number = "2",
    pages = "177--201",
    year = "1993",
    author = "Giorgio Gallo and Giustino Longo and Stefano Pallottino",
}

@inproceedings{Navarro:2012,
    title={Constant-time array initialization in little space},
    author={Navarro, Gonzalo},
    booktitle={Proceedings of the 31st International Conference of the Chilean
               Computer Science Society},
    publisher={IEEE CS Press},
    year={2012}
}

@article{Marek:1999,
    author = {V. W. Marek and A. Nerode and J. B. Remmel},
    title = {Logic programs, well-orderings, and forward chaining},
    journal = {Annals of Pure and Applied Logic},
    year = {1999},
    volume = {96},
    pages = {231--276},
}

@inproceedings{Hellerstein:1995,
    title = "Generalized Search Trees for Database Systems",
    author = "Joseph M. Hellerstein and Jeffrey F. Naughton and Avi Pfeffer",
    booktitle = "Proceedings of the 21st International Conference on Very
        Large Data Bases",
    year = "1995",
    pages = "562--573"
}

@book{Deza:2013,
    author = {Michel Marie Deza and Elena Deza},
    title = {Encyclopedia of Distances},
    publisher = {Springer},
    year = {2013},
}

@article{Mennucci:2013,
    title={On asymmetric distances},
    author={Mennucci, Andrea C. G.},
    journal={Analysis and Geometry in Metric Spaces},
    volume={1},
    pages={200--231},
    year={2013}
}

@book{Zezula:2006,
    Author = {Pavel Zezula and Giuseppe Amato and Vlastislav Dohnal and Michal Batko},
    Publisher = {Springer},
    Title = {Similarity Search},
    subtitle = {The Metric Space Approach},
    Year = {2006}
}

@article{Pestov:2006,
    title={Indexing schemes for similarity search},
    subtitle={An illustrated paradigm},
    author={Pestov, Vladimir and Stojmirovi{\'{c}}, Aleksandar},
    journal={Fundamenta Informaticae},
    volume={70},
    number={4},
    pages={367--385},
    year={2006},
    publisher={IOS Press}
}

@article{Filip:2010,
    title={Fixed point theorems on spaces endowed with vector-valued metrics},
    author={Alexandru-Darius Filip and Adrian Petrusel},
    journal={Fixed Point Theory and Applications},
    volume={2010},
    number={1},
    year={2010},
    publisher={Springer}
}

@BOOK{Schweizer:1983,
    title={Probabilistic Metric Spaces},
    author={B. Schweizer and A. Sklar},
    publisher={Elsevier},
    year={1983},
}

@misc{Conant:2016,
    title={Extending partial isometries of generalized metric spaces},
    author={Conant, Gabriel},
    eprint={1509.04950v3},
    eprinttype={arxiv},
    eprintclass={math.LO},
    year={2016}
}

@book{Enderton:2001,
    author={Herbert B. Enderton},
    title={A Mathematical Introduction to Logic},
    year={2001},
    publisher={Harcourt Academic Press}
}

@article{Bayer:1972,
    title={Organization and maintenance of large ordered indexes},
    author={Bayer, R and McCreight, EM},
    journal={Acta Informatica},
    volume={1},
    number={3},
    pages={173--189},
    year={1972},
    publisher={Springer}
}

@inproceedings{Chavez:1999,
    Author = {Edgar Chávez and José Luis Marroqu{\'{i}}n and Ricardo
              Baeza-Yates},
    Booktitle = {Proceedings of the String Processing and Information
                 Retrieval Symposium \& International Workshop on Groupware
                 ({SPIRE})},
    Pages = {38--46},
    Publisher = {{IEEE} Computer Society},
    Title = {Spaghettis},
    subtitle = {An Array Based Algorithm for Similarity Queries in
             Metric Spaces},
    Year = {1999}
}

@techreport{Yianilos:1999,
    Author = {Peter N. Yianilos},
    Institution = {{NEC} Research Institute},
    Title = {Excluded Middle Vantage Point Forests for Nearest Neighbor Search},
    Year = {1999}
}

@inproceedings{Dohnal:2004,
    Author = {Vlastislav Dohnal},
    Booktitle = {{EDBT} Workshops},
    Editor = {Wolfgang Lindner and Marco Mesiti and Can T{\"u}rker and Yannis Tzitzikas and Athena Vakali},
    Pages = {133--143},
    Publisher = {Springer},
    Series = {Lecture Notes In Computer Science},
    Title = {An Access Structure for Similarity Search in Metric Spaces},
    Volume = {3268},
    Year = {2004}
}

@article{Koufogiannakis:2014,
    title={A nearly linear-time {PTAS} for explicit fractional packing and covering linear programs},
    author={Koufogiannakis, Christos and Young, Neal E},
    journal={Algorithmica},
    volume={70},
    number={4},
    pages={648--674},
    year={2014},
    publisher={Springer}
}

@article{Xu:2015,
    title={A Comprehensive Survey of Clustering Algorithms},
    author={Xu, Dongkuan and Tian, Yingjie},
    journal={Annals of Data Science},
    volume={2},
    number={2},
    pages={165--193},
    year={2015},
    publisher={Springer}
}

@incollection{Berkhin:2006,
    title={A survey of clustering data mining techniques},
    author={Berkhin, Pavel},
    booktitle={Grouping Multidimensional Data},
    pages={25--71},
    year={2006},
    publisher={Springer}
}

@book{Williamson:2011,
    title={The Design of Approximation Algorithms},
    author={Williamson, David P and Shmoys, David B},
    year={2011},
    publisher={Cambridge University Press}
}

@article{Lawvere:1973,
    title={Metric spaces, generalized logic, and closed categories},
    author={Lawvere, F. William},
    journal={Rendiconti del seminario matématico e fisico di Milano},
    volume={43},
    number={1},
    pages={135--166},
    year={1973},
    publisher={Springer}
}

@article{Chavez:2001,
    Author = {Edgar Chávez and Gonzalo Navarro and Ricardo Baeza-Yates and
              José Luis Marroqu{\'{i}}n},
    Journal = {{ACM} Computing Surveys},
    Number = {3},
    Pages = {273--321},
    Title = {Searching in metric spaces},
    Volume = {33},
    Year = {2001}}

@article{Hjaltason:2003,
    Author = {Gisli R. Hjaltason and Hanan Samet},
    Journal = {{ACM} Transactions on Database Systems, {TODS}},
    Number = {4},
    Pages = {517--580},
    Title = {Index-driven similarity search in metric spaces},
    Volume = {28},
    Year = {2003}}

@article{Novak:2011,
    title={Metric index},
    subtitle={An efficient and scalable solution for precise and
           approximate similarity search},
    author={Novak, David and Batko, Michal and Zezula, Pavel},
    journal={Information Systems},
    volume={36},
    number={4},
    pages={721--733},
    year={2011},
    publisher={Elsevier}
}

@article{Fukunaga:1975,
    title={A branch and bound algorithm for computing $k$-nearest neighbors},
    author={Fukunaga, Keinosuke and Narendra, Patrenahalli M},
    journal={{IEEE} Transactions on Computers},
    volume={100},
    number={7},
    pages={750--753},
    year={1975},
    publisher={IEEE}
}

@techreport{Zezula:1996,
    Author = {Pavel Zezula and Paolo Ciaccia and Fausto Rabitti},
    Institution = {\textsc{Hermes} \textsc{Esprit} \textsc{Ltr} Project},
    Number = {7},
    Title = {{M}-tree},
    subtitle = {A Dynamic Index for Similarity Queries in Multimedia Databases},
    Type = {Technical Report},
    Year = {1996}
}

@article{Mico:1996,
    title={A fast branch \& bound nearest neighbour classifier in metric spaces},
    author={Micó, Luisa and Oncina, Jose and Carrasco, Rafael C},
    journal={Pattern Recognition Letters},
    volume={17},
    number={7},
    pages={731--739},
    year={1996},
    publisher={Elsevier}
}

@inproceedings{Edsberg:2010,
    title={Indexing inexact proximity search with distance regression in pivot space},
    author={Edsberg, Ole and Hetland, Magnus Lie},
    booktitle={Proceedings of the Third International Conference on Similarity Search and Applications},
    pages={51--58},
    year={2010},
    organization={ACM}
}

@book{Ziegler:1995,
    title={Lectures on Polytopes},
    author={Ziegler, Günter M},
    series={Graduate Texts in Mathematics},
    volume={152},
    year={1995},
    publisher={Springer}
}

@article{Bustos:2003,
    Author = {Benjamin Bustos and Gonzalo Navarro and Edgar Chávez},
    Journal = {Pattern Recognition Letters},
    Number = {14},
    Pages = {2357--2366},
    Title = {Pivot Selection Techniques for Proximity Searching in Metric Spaces},
    Volume = {24},
    Year = {2003}}

@article{Schneider:2011,
    title={Weak homomorphisms between functorial algebras},
    author={Schneider, Friedrich Martin},
    journal={Demonstratio Mathematica},
    publisher={De Gruyter},
    volume={44},
    number={4},
    pages={801--818},
    year={2011},
}

@Article{Fujiwara:2016,
    author="Fujiwara, Koji and Kapovich, Michael",
    title="On quasihomomorphisms with noncommutative targets",
    journal="Geometric and Functional Analysis",
    year="2016",
    pages="1--42",
    publisher={Springer},
}

@article{Dobos:1995,
    title={A survey of metric preserving functions},
    author={Dobo{\v{s}}, Jozef},
    journal={Questions and Answers in General Topology},
    volume={13},
    number={2},
    pages={129--134},
    year={1995}
}

@BOOK{Dobos:1998,
    author = {Jozef Dobo{\v s}},
    title = {Metric Preserving Functions},
    publisher = {\v Stroffek},
    year = {1998},
}

@ARTICLE{Borsik:1981,
    author = {J{\'{a}}n Bors{\'{i}}k and Jozef Dobo{\v{s}}},
    title = {On a Product of Metric Spaces},
    journal = {Mathematica Slovaca},
    year = {1981},
    volume = {31},
    number = {2},
    pages = {193--205},
}

@article{Corazza:1999,
    title={Introduction to Metric-Preserving Functions},
    author={Corazza, Paul},
    journal={The American Mathematical Monthly},
    volume={106},
    number={4},
    pages={309--323},
    year={1999},
    publisher={Mathematical Association of America}
}

@article{Lowe:1988,
    title={Multivariable functional interpolation and adaptive networks},
    author={Lowe, David and Broomhead, D},
    journal={Complex Syst},
    volume={2},
    pages={321--355},
    year={1988}
}

@article{Blinn:1982,
    author = {Blinn, James F.},
    title = {A Generalization of Algebraic Surface Drawing},
    journal = {ACM Trans.\ Graph.},
    volume = {1},
    number = {3},
    year = {1982},
    pages = {235--256},
    publisher = {ACM},
    address = {New York, NY, USA},
}

@article{Skopal:2007,
    title={Unified framework for fast exact and approximate search in
           dissimilarity spaces},
    author={Skopal, Tom{\'{a}}{\v{s}}},
    journal={ACM Transactions on Database Systems (TODS)},
    volume={32},
    number={4},
    pages={29},
    year={2007},
    publisher={ACM}
}

@book{Kuczma:2009,
    author = {Marek Kuczma},
    title = {An Introduction to the Theory of Functional Equations and
             Inequalities~:\ Cauchy's Equation and Jensen's Inequality},
    publisher = {Birkh\"auser},
    year = {2009},
    edition = {Second},
}

@article{Maxwell:1851,
    title={On the Description of Oval Curves, and those having a plurality of
           Foci},
    subtitle={with remarks by Professor Forbes},
    author={James Clerk Maxwell},
    journal={Proceedings of the Royal Society of Edinburgh},
    volume={2},
    pages={89--91},
    year={1851},
    addendum={Comm.\ by J. Forbes},
    publisher={Cambridge Univ Press}
}

@article{Arora:2012,
    author={Sanjeev Arora and Elad Hazan and Satyen Kale},
    title={The Multiplicative Weights Update Method},
    subtitle={A Meta-Algorithm and Applications},
    journaltitle={Theory of Computing},
    volume={8},
    year={2012}
}

\end{document}